\documentclass{article}
\usepackage[letterpaper, margin=1in]{geometry}
\usepackage{graphicx} 
\usepackage{xcolor}
\usepackage{natbib}

\usepackage{enumitem}
\usepackage{amsmath}
\usepackage{amssymb}
\usepackage{amsthm}
\usepackage{booktabs}
\usepackage{algorithm}
\usepackage{algorithmic}
\usepackage[colorinlistoftodos]{todonotes}

\newtheorem{theorem}{Theorem}
\newtheorem{lemma}{Lemma}
\newtheorem{proposition}{Proposition}

\theoremstyle{definition}
\newtheorem{definition}{Definition}
\newtheorem{remark}{Remark}

\title{EFX for Additive Chores: Nonexistence, Pareto Incompatibility, and Bi-Valued Existence}
\author{%
  Wentao He \\
  Shanghai Jiao Tong University \\
  \texttt{dlpt13@sjtu.edu.cn} \\
  \and
  Biaoshuai Tao \\
  Shanghai Jiao Tong University \\
  \texttt{bstao@sjtu.edu.cn} \\
}
\date{}

\begin{document}

\maketitle

\begin{abstract}
    We consider the fair division problem of indivisible chores and resolve the long-standing open problem for the existence of EFX (envy-free up to any item) allocations with additive cost functions. We show that, even for tri-valued additive cost functions, for every $n\geq 4$, there exists an instance with $n$ agents where no EFX allocation exists. Our counterexample only uses three types of chores and two types of agents. The numbers of types for chores and agents are both tight: an EFX allocation is known to exist for one type of agents (i.e., with identical cost functions) or two types of chores.

    We then consider bi-valued instances. We show that, for every $n\geq 4$, there exists an instance with $n$ agents where every EFX allocation is not Pareto-optimal. This is also the first example showing the incompatibility of EFX and Pareto-optimality when the costs of items are positive: existing examples showing the incompatibility of EFX and Pareto-optimal exploit items with $0$ costs. Our result shows such an example exists even for bi-valued instances. The number of agents $n$ is also tight: for $n\leq 3$, it is known that EFX is compatible with Pareto-optimality. Finally, we also show that an EFX allocation is guaranteed to exist for $n=4$.
\end{abstract}

\section{Introduction}
\label{sec:intro}

Fair division studies how to allocate resources among agents with heterogeneous
preferences. 
The items to be allocated may be desirable resources, usually called
\emph{goods}, or undesirable tasks, usually called \emph{chores}. 
In the former
case agents derive utility from receiving items, whereas in the latter case agents
incur disutility or cost.  
Although goods and chores are similar in some
models, the direction of preferences changes the structure of many fairness
questions.  
This paper focuses on the allocation of indivisible chores among
agents with additive cost functions.

A central fairness principle is \emph{envy-freeness} (EF): no agent should prefer
another agent's bundle to her own.
For indivisible items, exact envy-freeness is
often impossible, even in very small instances (e.g., when the number of items is less than the number of agents).  
This has motivated the study of relaxations that require envy to disappear after removing a small amount of
responsibility.  The most widely studied such relaxation is \emph{envy-freeness
up to one item} (EF1)~\citep{Lipton04onapproximately,budish2011combinatorial}.  In the chore setting, an allocation is EF1 if whenever an
agent envies another agent, this envy can be eliminated by removing a single chore
from the envious agent's own bundle.  A stronger and more robust relaxation is
\emph{envy-freeness up to any item} (EFX)~\citep{caragiannis2019unreasonable}: for every pair of agents $i,j$, and
for every chore $g$ assigned to agent $i$, agent $i$ should not envy agent $j$
after $g$ is removed from her own bundle.  
Thus, EFX asks that no single chore in an agent's bundle is responsible for hiding
a violation of envy-freeness.

EFX has become one of the central open problems in discrete fair division.  
For goods with additive valuations, the existence of complete EFX allocations remains
open for four or more agents, despite substantial progress for special cases such
as three agents, a bounded number of valuation types, and allocations with a small
number of unallocated goods
\citep{caragiannis2019envy,chaudhury2021little,mahara2023existence,hv2025efx,berger2022almost,akrami2025efx,li2022almost}.
For chores, the situation has been even less understood.  Early work established EFX existence for two agents and for identical-ordering cost functions, and subsequent work gave approximation guarantees for additive chores
\citep{garg2026existence,garg2025constant,zhou2024approximately}.  
Exact EFX allocations are known for several
restricted additive domains, including instances with at most twice as many chores as agents, instances in which all but one agent have identical-ordering
cost functions, and three-agent personalized bi-valued instances~\citep{kobayashi2025efx}.  
EFX allocations are also known for two types of chores~\citep{aziz2023fair}, and EFX together with fractional Pareto-optimality is known
for three agents with bi-valued additive costs~\citep{garg2023new}.  
For binary additive costs, where item costs are in $\{0,1\}$, EFX and Pareto-optimal
allocations can be computed for any number of agents~\citep{tao2025existence}.
Recent work has also obtained EFX guarantees for restricted additive cost
functions, where each item has an inherent cost and an agent either incurs this
cost or zero~\citep{lin2026allocating}.

Despite this progress, two basic gaps remained.  First, for chores with additive costs, all
known exact EFX existence results relied on strong restrictions such as small
numbers of agents, binary or restricted costs, identical-ordering structure, or
bounds on the number of chores.  Negative results for EFX had been obtained for
more general non-additive cost functions, such as superadditive costs
\citep{christoforidis2024pursuit} and submodular costs~\citep{christoforidis2026note}, but they did not rule out EFX for additive
costs.  Thus, before this work, it was still possible that chores with additive costs
always admitted EFX allocations, perhaps at least when all item costs came from a
small finite set.

Second, the relationship between EFX and efficiency was poorly understood for
positive bi-valued chores.  We use Pareto-optimality (PO) as our efficiency
benchmark: an allocation is PO if there is no other allocation that weakly
decreases every agent's cost and strictly decreases at least one agent's cost.
For goods, the compatibility of fairness and efficiency has been studied
extensively; for example, maximum Nash welfare allocations satisfy EF1 and PO for goods \citep{caragiannis2019unreasonable}, and an EF1+PO allocation can be computed in pseudopolynomial time~\citep{barman2018finding}.
Under bi-valued valuation functions, maximum Nash welfare allocations satisfy EFX and PO~\citep{amanatidis2021maximum}, and an EFX+PO allocation can be computed in polynomial-time~\citep{bu2024best}.
For chores, EF1 and PO are now known to be compatible in general additive instances
\citep{mahara2026existence}, and earlier work proved EF1+PO for important
restricted classes such as bi-valued chores
\citep{ebadian2022fairly,garg2022fair}.  However, EF1 is substantially weaker
than EFX.  The strongest positive result for EFX and efficiency in positive bi-valued chores applied to three agents \citep{garg2023new}; for four or more
agents, it was not known whether EFX and PO are compatible.  Existing incompatibility results closest to this setting used binary-marginal or more general domains in which zero marginal costs may play a crucial role
\citep{tao2025existence}.  
This left open whether incompatibility persists for strictly positive bi-valued additive costs.

\subsection{Our Results}
\label{subsec:our-results}

We address both gaps.

Our first result gives a negative answer to the general existence question for
additive chores.  We show that for every number of agents $n\geq 4$, there is an
additive tri-valued chore instance with no EFX allocation (Theorem~\ref{thm:tri}).  In our construction,
all item costs are positive and take only three possible values.  Thus, EFX
allocations for chores are not guaranteed even under a very severe restriction on
the cost functions.  This separates the chore setting from the goods setting:
while no counterexample is known for goods with additive valuations, chores with additive costs already
admit tri-valued counterexamples.
Our counterexample only uses three types of chores, which is tight, as an EFX allocation is known to exist for two types of chores~\citep{aziz2023fair}.
In addition, our counterexample only uses two types of agents, which is also tight: for one type of agents, all agents have the same cost functions, and a leximin solution gives an EFX allocation~\citep{plaut2020almost,chen2020fairness}.
This is also in contrast to the setting with goods: an EFX allocation is known to exist for goods allocation even for up to three types of agents~\citep{hv2025efx}.

Our second result concerns the compatibility of EFX and Pareto-optimality for
bi-valued chores.  For every $n\geq 4$, we construct a strictly positive
bi-valued additive instance, with costs in $\{1,r\}$, such that every EFX
allocation is Pareto dominated (Theorem~\ref{thm:EFXPO}).
In particular, EFX and PO are incompatible even
when every chore has positive cost for every agent and only two cost levels are
allowed.
To the best of our knowledge, the compatibility of EFX and PO is unknown before our work if costs are positive.
Since EFX+fPO allocations are known to exist for three bi-valued agents
\citep{garg2023new}, the number of agents in this incompatibility result is
tight.

Our third result complements the previous impossibility theorem by proving an
existence guarantee for the first unresolved number of agents in the bi-valued
setting. 
We show that every four-agent instance with bi-valued additive costs
admits an EFX allocation (Theorem~\ref{thm:four}).  
Together with the second result, this shows that for
four positive bi-valued agents, EFX existence and EFX+PO compatibility diverge:
an EFX allocation always exists, but in some instances no EFX allocation can be
Pareto-optimal.

\subsection{Related Work}
\label{subsec:related-work}

The fair division literature contains several parallel lines of work on goods,
chores, and mixed items.  For goods, EFX was introduced and studied as a
particularly compelling relaxation of envy-freeness.  The leximin solution yields
EFX allocations when agents' valuation functions are identical~\citep{plaut2020almost}.  Complete EFX allocations are known
for three agents with additive valuations \citep{chaudhury2024efx} and beyond additive valuations~\citep{akrami2025efx}, for any number of agents with general monotone valuation functions (that needs not be additive) that have binary marginals~\citep{bu2023efx}, for any
number of agents when there are at most two additive valuation types
\citep{mahara2023existence}, and, more recently, for at most three additive
valuation types \citep{hv2025efx}.  Other work has shown that almost-complete EFX
allocations exist: one can obtain EFX after leaving a small number of goods
unallocated \citep{chaudhury2021little,caragiannis2019envy,berger2022almost}.
There is also a large literature on approximate EFX for goods
\citep{amanatidis2020multiple,amanatidis2024pushing,akrami2025efx}, on the
relationship between EFX and maximum Nash welfare
\citep{amanatidis2021maximum}, and on extensions from additive to more general
valuation classes \citep{mahara2024extension}.
It was only very recently that a first counterexample showing the non-existence of EFX allocations is known~\citep{akrami2026counterexample,mackenzie2026counterexamples}.
However, the example constructed by \citet{akrami2026counterexample,mackenzie2026counterexamples} uses submodular valuation functions.
The existence of EFX allocations for additive valuations is still open for the setting with goods.
See the survey~\citep{amanatidis2023fair} on recent progress on fair division with indivisible goods.

For chores, other than those aforementioned papers on EFX, EF1 has been studied extensively in combination with efficiency.
Bi-valued chores admit EF1+PO allocations computable in polynomial time
\citep{ebadian2022fairly,garg2022fair}; EF1+PO is known for three agents and
for instances with at most two disutility functions \citep{garg2023new}; and
EF1+PO is now known to exist for general additive chores
\citep{mahara2026existence}.  Other works study connections between fairness and
efficiency \citep{sun2021connections,barman2025introspectively,barman2026compatibility,barman2025fair,feldman2024optimal}, leximin and related rules
\citep{chen2020fairness}, (normalized) $p$-means
\citep{barman2020tight,eckart2024fairness}, proportionality-type guarantees
\citep{li2022almost}, and binary or supermodular cost domains
\citep{barman2023fair,tao2025existence}.

Finally, several papers study models that combine goods and chores or give
unified treatments of positive and negative items
\citep{hosseini2023fairly,gafni2023unified,aziz2025fair,li2025truthfully,zhou2024complete}. 
Most relevant to our work, \citet{hosseini2023fairly} presents an example showing non-existence of EFX allocations for the setting with mixed goods and chores, even for lexicographic additive valuations. 
These models highlight that results for
goods do not automatically transfer to chores, even when the mathematical
definitions look formally symmetric.  The present paper further illustrates this
asymmetry: exact EFX remains unresolved for additive goods, whereas for additive
chores we show that EFX allocations may fail to exist even with only three
positive cost levels.

Less relevant to our work, the model with mixed divisible and indivisible resources has also been studied extensively~\citep{bei2021fair,liu2024mixed,aziz2025fair}.
We will not elaborate on them here.

\section{Preliminaries}
A set $M$ of $m$ chores $g_1,\ldots,g_m$ is allocated to a set $N$ of $n$ agents $1,\ldots,n$.
We will use ``item'' and ``chore'' interchangeably in this paper to refer to an element in $M$.
Each agent $i$'s cost function $c_i:2^M\to\mathbb{R}_{\geq0}$ is additive:
$$c_i(S)=\sum_{g\in S}c_i(\{g\}).$$
We write $c_i(g_j)$ in short for $c_i(\{g_j\})$ for notation simplicity.
We will use ``cost function'' and ``disutility function'' interchangeably in this paper.
We say agent $i$ values item $g_j$ at $t$ if $c_i(\{g_j\})=t$.

\begin{definition}[EF for chores]
An allocation $X=(X_1,\ldots,X_n)$ is \emph{envy-free} if for every pair of agents $i,j$ we have $c_i(X_i)\leq c_i(X_j)$.
\end{definition}

Given an allocation $X=(X_1,\ldots,X_n)$, we say that agent $i$ does not envy agent $j$ if $c_i(X_i)\leq c_i(X_j)$.
If $c_i(X_i)\leq c_i(X_j)$ holds for a particular agent $i$ and every agent $j$, we say that $X$ is \emph{envy-free for agent $i$}, or \emph{agent $i$ is envy-free}.

\begin{definition}[EFX for chores]
An allocation $X=(X_1,\ldots,X_n)$ is \emph{EFX} if for every pair of agents $i,j$ we have either
\begin{itemize}
    \item $X_i=\emptyset$, or
    \item for every item $g\in X_i$, $c_i(X_i\setminus\{g\})\le c_i(X_j).$
\end{itemize}
Equivalently, if
\[
    \tau_i(X):=\tau_i(X_i):=
    \begin{cases}
        0, & X_i=\emptyset,\\
        c_i(X_i)-\min_{g\in X_i}c_i(g), & X_i\ne\emptyset,
    \end{cases}
\]
then $X$ is EFX if and only if $\tau_i(X)\le c_i(X_j)$ for all $i,j$.
\end{definition}

Given an allocation $X=(X_1,\ldots,X_n)$, we say that agent $i$ strongly envies agent $j$ if the EFX condition fails from $i$ to $j$: $\tau_i(X)>c_i(X_j)$.
If $\tau_i(X)\leq c_i(X_j)$ holds for a particular agent $i$ and every agent $j$, we say that $X$ is \emph{EFX for agent $i$}, or \emph{the EFX condition holds for agent $i$}.

\begin{definition}[Pareto-Optimality]
    An allocation $Y=(Y_1,\ldots,Y_n)$ \emph{Pareto-dominates} an allocation $X=(X_1,\ldots,X_n)$ if
    \begin{itemize}
        \item $c_i(Y_i)\leq c_i(X_i)$ holds for all $i\in\{1,\ldots,n\}$, and
        \item there exists $i\in\{1,\ldots,n\},\;$ $c_i(Y_i)<c_i(X_i)$.
    \end{itemize}
    An allocation $X=(X_1,\ldots,X_n)$ is \emph{Pareto-optimal} if it is not Pareto-dominated by any allocation.
\end{definition}

We will occasionally mention \emph{fractional Pareto-optimality}.
An allocation $X$ is fractionally Pareto-optimal if it is not Pareto-dominated by any allocation $Y$ even when $Y$ is allowed to be a fractional allocation (where items are divisible).
Fractional Pareto-optimality is stronger than Pareto-optimality, and we will not focus on this concept in this paper.

We say that the cost function is tri-valued if there exist constants $p,q,r\geq 0$ such that $c_i(g_j)\in\{p,q,r\}$ for any $i\in N$ and $g_j\in M$.
The cost function is bi-valued if there exist constants $p,q\geq 0$ such that $c_i(g_j)\in\{p,q\}$ for any $i\in N$ and $g_j\in M$.
For bi-valued cost functions, we will focus on the case $0<p<q$ (the case $0=p<q$ corresponds to the setting with binary cost functions and we know an allocation that is EFX and Pareto-optimal always exists), and we will assume $c_i(g_j)\in\{1,r\}$ (for any $i\in N$ and $g_j\in M$) instead.
We say that the item $g_j$ is ``large'' for agent $i$ if $c_i(g_j)=r$ and it is ``small'' for agent $i$ if $c_i(g_j)=1$.

\section{Tri-Valued Instances without EFX Allocations}
\label{sec:tri}

In this section, we present our main result, which shows the non-existence of EFX allocations for some tri-valued instances.

\begin{theorem}\label{thm:tri}
    For any $n\geq 4$, there exists a tri-valued instance with $n$ agents where no EFX allocation exists.
\end{theorem}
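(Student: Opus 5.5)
We will prove Theorem~\ref{thm:tri} by an explicit construction followed by an exhaustive but structured case analysis. The abstract tells us the shape of the instance: three types of chores and two types of agents. We take chore types $A$, $B$, $C$ with multiplicities $a$, $b$, $c$ that depend linearly on $n$. Agent type~1 (say $n-1$ agents, or a block of size depending on $n$) has one cost vector $(p_A,p_B,p_C)$ and agent type~2 (a small number of agents, possibly one) has another, with all values drawn from a common set $\{p,q,r\}$ where $0<p\ll q\ll r$.

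The design principle is to create a conflict between the two types.
\begin{itemize}
\item For the identical agents of type~1, EFX among themselves forces near-leximin balancing: any type-1 agent holding a large chore must hold essentially nothing else (in its own metric).
\item The type-2 agent values the types differently, so that whatever bundle it receives, either it strongly envies some type-1 bundle or some type-1 agent strongly envies the type-2 bundle.
\end{itemize}
We first work out $n=4$ by hand, tuning the multiplicities and the values $p,q,r$. For general $n\ge 4$, we pad the instance by adding extra type-1 agents together with one chore of the large type $r$ per added agent. The intended argument is that in any EFX allocation each padding agent must hold exactly one large chore and nothing else. The rest of the instance then reduces to the $n=4$ core.

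The proof of nonexistence proceeds in the following order.
\begin{enumerate}
\item Use the EFX condition among identical agents to bound how many large (cost $r$) chores any bundle can contain. Since $r$ exceeds the total cost of all small chores, an agent with two large chores strongly envies anyone with at most one. Hence each agent receives at most one large chore, and with the right count exactly one, or exactly one agent receives none.
\item Using $q$ versus $p$, derive similar counting constraints on the medium chores given the placement of the large ones.
\item Reduce to a handful of ``shapes'' of allocation, indexed by which type-2 agent gets which type of chore. In each shape, exhibit a strongly envious pair using the EFX inequalities $\tau_i(X)\le c_i(X_j)$.
\end{enumerate}

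The main obstacle is finding the actual numbers: multiplicities and cost values for which every shape fails but the reductions in steps~1 and~2 remain valid. Two-type results from \citet{aziz2023fair} and identical-agent leximin results show the slack is thin. I would first search small parameter ranges, by computer or by hand, for the $n=4$ core with one type-2 agent. The type-2 agent would view $A$ as large and $B$ as small, whereas type-1 agents would view $B$ as large and $A$ as medium, with $C$ small for all agents serving as filler to break balancing. I would then verify that the padding argument in step~1 is robust, i.e., that an added agent cannot profitably take a filler chore instead.
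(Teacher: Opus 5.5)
Your three-step skeleton has the right shape, and it matches the paper's argument: cap the large chores per bundle, then constrain the bundle with no large chore, then count. As written, however, it is a research plan rather than a proof. The whole content of the theorem is an explicit instance together with a verification that it has no EFX allocation, and you defer exactly that (``finding the actual numbers'') to a search.

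Step~1 is also wrong as stated. An agent holding exactly $\{a,a'\}$ has threshold $r$, and every bundle containing a large chore already costs her at least $r$. So she need not strongly envy anyone holding one large chore. Two things are needed instead:
\begin{itemize}
\item \emph{An averaging bound showing some bundle is cheap.} The paper makes every agent's total cost $(n+1)r$, so some bundle costs her less than $2r$. This excludes three large chores, or two large chores plus anything else.
\item \emph{A separate argument for a bundle equal to $\{a,a'\}$.} Such a bundle forces every other bundle to cost her exactly $r$. The paper rules this out by choosing $r\equiv s+1\pmod q$, so that no bundle of $B$- and $C$-chores can cost exactly $r$.
\end{itemize}

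The more serious gap is the padding reduction, which fails. A padding agent holding one large chore and one other chore has threshold exactly $r$. That is at most her cost for any bundle containing a large chore. So padding agents are not forced to hold singletons; each can absorb one chore, which dissolves the tension in the core.

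Concretely, pad the paper's $n=4$ instance with a third agent of the first type (costs $20,1,7$ on $A,B,C$) and a fourth $A$-chore. Allocate as follows:
\begin{itemize}
\item each first-type agent gets $\{a,b\}$;
\item one second-type agent gets $\{a\}$;
\item the other second-type agent gets the five $C$-chores and the two remaining $B$-chores.
\end{itemize}
Each first-type agent has threshold $20$ and sees every other bundle at cost at least $20$. The holder of $\{a\}$ has threshold $0$. The last agent has cost $19$, threshold $18$, and sees every other bundle at cost at least $20$. So this allocation is EFX.

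This is why the paper does not reduce to $n=4$. It keeps the two types balanced, with $\lfloor n/2\rfloor$ and $\lceil n/2\rceil$ agents. It takes $|A|=n-1$, $|B|=|C|=s=2\lceil n/2\rceil+1$, and costs $1$, $q=s+2$, $r=s(q+1)/2$, all growing with $n$. The balance is used essentially in two places:
\begin{itemize}
\item Showing that the $A$-free bundle costs at least $r$ to \emph{both} types relies on agents of the other type absorbing at most one chore each, so the $A$-free bundle has at least $2s-t_2$ chores.
\item The final contradiction relies on the $n-1$ $A$-holders outnumbering the at most $t_2$ chores outside the $A$-free bundle that are expensive for its owner.
\end{itemize}
Your skewed design, with many agents of one type, gives those agents the same freedom to absorb a chore each. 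You would need to explain what prevents it.
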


For the ease of representation, we will present the example with $n=4$, which contains the main ideas.
The example can be naturally generalized to every fixed $n\geq 4$.
The proof of Theorem~\ref{thm:tri} for arbitrary $n\geq4$ is available in Appendix~\ref{append:tri-n}.

The instance consists of $4$ agents and $13$ items.
The items are partitioned into three groups:
$$A=\{a_1,a_2,a_3\},\quad B=\{b_1,b_2,b_3,b_4,b_5\},\quad\mbox{and}\quad C=\{c_1,c_2,c_3,c_4,c_5\}.$$
The items group $A$ consists of ``large'' items where each agent values each of them at $20$.
For each item in $B$, agents $1$ and $2$ value it at $1$ and agents $3$ and $4$ value it at $7$.
For each item in $C$, agents $1$ and $2$ value it at $7$ and agents $3$ and $4$ value it at $1$.
Table~\ref{tab:tri} illustrates the instance.

\begin{table}[h]
    \centering
    \caption{A tri-valued instance with $n=4$ and $m=13$ where no EFX allocation exists.  Each entry is the cost of every item in the corresponding item class.}

    \begin{tabular}{cccc}
    \hline
         & $A=\{a_1,a_2,a_3\}$ & $B=\{b_1,b_2,b_3,b_4,b_5\}$ & $C=\{c_1,c_2,c_3,c_4,c_5\}$\\
    \hline
       Agents $1$ and $2$  & $20$ & $1$ & $7$\\
       Agents $3$ and $4$  & $20$ & $7$ & $1$\\
    \hline
    \end{tabular}
    
    \label{tab:tri}
\end{table}

In the remaining part of this section, we will show that no EFX allocation exists in this instance.
Suppose for contradiction an EFX allocation $X=(X_1,X_2,X_3,X_4)$ exists.
Firstly, we will show that no one can receive more than one large item in $A$.

\begin{proposition}
    No agent can receive more than one item in $A$.
\end{proposition}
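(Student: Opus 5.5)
Suppose, for contradiction, that some agent $i$ receives at least two items of $A$, say $a_1,a_2\in X_i$. Then $\tau_i(X)\ge c_i(X_i)-\min_{g\in X_i}c_i(g)$. Every item costs at most $20$ for $i$, so removing the cheapest item leaves at least one item of $A$ together with everything else. Hence $\tau_i(X)\ge 20+c_i(X_i\setminus\{a_1,a_2\})\ge 20$. Since $X$ is EFX, we get $c_i(X_j)\ge 20$ for every other agent $j$. The plan is to count total cost from agent $i$'s perspective and show that the three other bundles cannot all reach this threshold.

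By symmetry, take $i\in\{1,2\}$; the case $i\in\{3,4\}$ is identical with the roles of $B$ and $C$ swapped. For agent $i$ the total cost of all items is $3\cdot 20+5\cdot 1+5\cdot 7=100$. Agent $i$ holds at least two $A$-items, so at most one $A$-item lies outside $X_i$. The other three bundles together therefore have cost at most $20+5+35=60$ for agent $i$. Each of them needs cost at least $20$, and in fact at least $\tau_i(X)$, which is larger if $X_i$ contains anything beyond the two $A$-items.

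The key refinement is to sharpen the bound $\tau_i(X)\ge 20$. Write $S=X_i\setminus\{a_1,a_2\}$. If $S=\emptyset$, then $\tau_i=20$ and the other bundles contain only items of total cost $60$, so each of the three bundles must have cost exactly $20$.
\begin{itemize}
\item The bundle containing $a_3$ must be exactly $\{a_3\}$.
\item The remaining two bundles split $B\cup C$, of total cost $40$, into two parts of cost exactly $20$ each.
\item A part of cost $20$ has the form $x\cdot 1+y\cdot 7$ with $x\le 5$ and $y\le 5$, so $(x,y)\in\{(6,2),(13,1),(20,0)\}$; none of these is feasible.
\end{itemize}
Hence $S\neq\emptyset$. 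In that case the cheapest item of $X_i$ is removed and $\tau_i\ge 20+c_i(S)-\min$. More carefully, $\tau_i(X)=c_i(X_i)-\min_{g\in X_i}c_i(g)$, and the minimum is at most $7$. So $\tau_i(X)\ge 40+c_i(S)-\min(c_i(S\text{'s min}),20)\ge 40$ whenever $S$ is nonempty, because removing any single item still leaves one $A$-item plus the rest. Then each of the three other bundles needs cost at least $40$, for a total of at least $120>60$, a contradiction.

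Combining the two cases proves the proposition. I expect the main obstacle to be the case $S=\emptyset$: the crude counting there is tight ($3\cdot20=60$), so a small integrality check is needed to show that $B\cup C$ cannot be split into two halves of cost exactly $20$ for agent $i$.
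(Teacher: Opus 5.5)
Your proof is correct and follows essentially the paper's argument: tight cost counting forces $X_i=\{a_1,a_2\}$ and every other bundle to cost exactly $20$, which then yields an integrality contradiction. The differences are minor: you split on $S=\emptyset$ rather than on the number of $A$-items, and you solve $x+7y=20$ directly instead of bounding the number of $C$-items per bundle. One small slip: the claim that the minimum is at most $7$ fails when $X_i=A$, but your bound $\tau_i\ge 40$ still holds because $\min_{g\in X_i}c_i(g)\le\min_{g\in S}c_i(g)\le c_i(S)$.
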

\begin{proof}
    Each agent's total disutility for the entire item set is $100$. Therefore, for each agent $i$, we have $\min_{j=1,\ldots,4}c_i(X_j)\leq 25$.
    If we have $\tau_i(X_i)=c_i(X_i\setminus\{g\})>25$ for some $g\in X_i$ with minimum cost, the allocation cannot be EFX, as agent $i$ will envy agent $j$ with the minimum $c_i(X_j)$ even when $g$ is removed from $i$'s bundle $X_i$.

    Suppose for contradictions that some agent receives more than one item in $A$, and suppose this is agent $1$ without loss of generality.
    If $X_1$ contains three items in $A$, then $c_1(X_1)\geq 60$, and agent $1$'s disutility is still at least $40$ after removing one large item.
    We have seen that the disutility must be at most $25$ up to removing one item.
    This is a contradiction.

    If $X_1$ contains two items in $A$, then $c_1(X_1)\geq 40$ and $\tau_1(X_1)\geq20$.
    Since the total disutility is $100$, for some $i\in\{2,3,4\}$, we have $c_1(X_i)\leq (100-40)/3=20$.
    To guarantee EFX, $X_1$ must contain exactly two items from $A$.
    Moreover, we must have $c_1(X_2)=c_1(X_3)=c_1(X_4)=20$.
    It is easy to verify that this is impossible: one of the three bundles must contain exactly the remaining item from $A$; then each of the two remaining bundles can contain at most $2$ items from $C$ to avoid disutility exceeding $20$, but there are $5$ items in $C$.
\end{proof}

With the above proposition, it must be that one agent does not receive any large item in $A$ and each of the remaining three agents receives exactly one item in $A$.
Suppose without loss of generality that agent $1$ does not receive any item from $A$.

The next proposition shows that every agent's disutility for agent $1$'s bundle $X_1$ is at least $20$.
\begin{proposition}
    For any agent $i$, we have $c_i(X_1)\geq 20$.
\end{proposition}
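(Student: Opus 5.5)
The plan is to exploit the fact that each of agents $2,3,4$ holds exactly one item of $A$, which costs $20$ to everybody. The main tool is the following observation. Suppose $j\in\{2,3,4\}$ and $X_j\neq\{a\}$, where $a$ is the $A$-item in $X_j$. Every other item in $X_j$ costs $j$ at most $7<20$, so the minimum-cost item of $X_j$ is not $a$. Hence $\tau_j(X_j)\geq 20$, and EFX towards agent $1$ forces $c_j(X_1)\geq 20$.

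Since agents of the same type have identical cost functions, a bound $c_j(X_1)\geq 20$ transfers to every agent of $j$'s type. Agent $2$ is the only type-$\{1,2\}$ agent among $\{2,3,4\}$, and agents $3,4$ are the type-$\{3,4\}$ agents there. So the statement follows directly whenever $X_2$ is not a singleton and at least one of $X_3,X_4$ is not a singleton. What remains are the cases in which all agents of some type among $\{2,3,4\}$ hold only their $A$-item.

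For those cases I would use a second observation. Suppose some $k\in\{2,3,4\}$ has $X_k=\{a'\}$, so $c_j(X_k)=20$ for all $j$. If $j\in\{2,3,4\}$ has $X_j=\{a\}\cup R_j$ with $R_j\neq\emptyset$, then EFX from $j$ to $k$ gives
\[
20+c_j(R_j)-\min_{g\in R_j}c_j(g)\leq 20 .
\]
This forces $|R_j|=1$. Thus every non-singleton bundle among agents $2,3,4$ contains at most one item of $B\cup C$, and $X_1$ contains all but at most two of the ten items in $B\cup C$.

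With at least $8$ of these items in $X_1$, at least $3$ come from $B$ and at least $3$ from $C$. Therefore each agent's cost for $X_1$ is at least $3\cdot 7+5\cdot 1=26\geq 20$, whichever of the two types she is. If all three of $X_2,X_3,X_4$ are singletons, then $X_1=B\cup C$ and every agent values it at $40$.

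Concretely, I would split into two cases. In the first case, $X_2$ is non-singleton and at least one of $X_3,X_4$ is non-singleton; the first observation settles this for both types. In the second case, at least one of $X_2,X_3,X_4$ is a singleton $\{a'\}$. The second observation then bounds every other bundle among $\{2,3,4\}$ to at most one extra item, and the counting bound finishes.

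The second case also covers agent $1$'s own type, since agent $1$ and agent $2$ share a cost function. This is the step I expect to need the most care. When $X_2$ is a singleton, agent $2$'s EFX condition gives no lower bound on $c_1(X_1)$, so the bound for agent $1$ must come from the counting argument rather than from an envy relation.
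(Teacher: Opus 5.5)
Your proof is correct and follows the paper's approach. It rests on the same two facts. First, a non-singleton bundle among agents $2,3,4$ forces $c_j(X_1)\ge 20$ via EFX toward agent $1$. Second, a singleton bundle $\{a'\}$ limits every other $A$-holder to at most one extra item, so $X_1$ gets at least $8$ items of $B\cup C$, with at least $3$ of each, and the bound transfers between agents with identical cost functions. The only difference is in closing the singleton case: the paper derives a contradiction from agent $1$'s EFX ($\tau_1(X_1)\ge 23>20=c_1(X_i)$), whereas you observe directly that $X_1$ then costs everyone at least $26$, which is slightly more direct and does not need agent $1$'s EFX condition.
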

\begin{proof}
    Suppose some agent $i\in \{2,3,4\}$ believes $c_i(X_1)<20$.
    Since agent $i$ has received a large item from $A$ by our assumption,  to maintain EFX, $X_i$ must contain exactly one item from $A$ and nothing else.
    Each of the remaining two agents in $\{2,3,4\}\setminus\{i\}$ has also received a large item. 
    To prevent her from strongly envying $i$, this agent can receive at most one more item other than the large item.
    The total number of items received by agents $2$, $3$, and $4$ is therefore at most $5$, including at least $3$ large items.
    Agent $1$ then has received at least $8$ items from $B\cup C$, with at least $3$ items from $B$ and at least $3$ items from $C$.
    Up to removing one item from $B$, we have $\tau_1(X_1)\geq 2\times 1+3\times 7=23$.
    This violates EFX since agent $i$ has received only one large item so $c_1(X_i)=20<\tau_1(X_1)$.
    Therefore, $c_i(X_1)\geq 20$ for each of $i=2,3,4$.
    Since $c_1(\cdot)$ is identical to $c_2(\cdot)$, the proposition holds for all agents.
\end{proof}

Now, we are ready to derive a contradiction.
Let $x=|X_1\cap B|$ and $y=|X_1\cap C|$.
Since agent $1$ receives no large item, we have $|X_1|=x+y$.
By applying our proposition $c_i(X_1)\geq 20$ for $i=1,2,3,4$, we must have
$$7x+y\geq 20\qquad\mbox{and}\qquad x+7y\geq20.$$
Since $x,y\leq 5$, it is easy to check that we must have
$$x\geq 3\qquad\mbox{and}\qquad y\geq3.$$
Therefore, $c_1(X_1)\geq 3\times 7+3=24$, and, up to removing an item with cost $1$, we have $\tau_1(X_1)\geq 23$.
To guarantee EFX, we must then have $c_1(X_i)\geq23$ for each $i\in\{2,3,4\}$.

Since at least $3$ out of $5$ items from group $C$ has been taken by agent $1$, at most two agents in $\{2,3,4\}$ can take items in $C$, and so at least one agent $i\in \{2,3,4\}$ does not take any item from $C$.
Moreover, agent $i$ can take at most two items from $B$ (since agent $1$ has taken at least three items from $B$).
Thus, $c_1(X_i)\leq 20+1+1=22$, and agent $1$ strongly envies agent $i$.

We conclude that $X$ cannot be EFX, and no EFX allocation exists in this instance.

\begin{remark}
    For general $n$, agents are partitioned into two groups with $\lfloor n/2\rfloor$ and $\lceil n/2\rceil$ agents respectively. Items are also partitioned into three groups $A$, $B$, and $C$ where items in $A$ are large for all agents and items in $B$ and $C$ have similar structures in the example with $n=4$.
    $A$ contains $n-1$ items, and each of $B$ and $C$ contains $2\lceil n/2\rceil+1$ items.
    The three values $1,7$, and $20$ in the example above are changed to $p=1$, $q=s+2$, and $r=s(q+1)/2$ respectively.
    The analysis is mostly the same as it is in the case $n=4$.
    See Appendix~\ref{append:tri-n} for details.
\end{remark}

\begin{remark}
    Our counterexample uses only three types of chores. The number of types is also tight: for two types of chores, \citet{aziz2023fair} show that there always exists an EFX allocation. In addition, our counterexample only uses two types of agents, which is also tight: for one type of agents, all agents have the same cost functions, and a leximin++ solution gives an EFX allocation~\citep{plaut2020almost,chen2020fairness}.
    This is also in contrast to the setting with goods: an EFX allocation is known to exist for goods allocation even for up to three types of agents~\citep{hv2025efx}.
\end{remark}

\section{Bi-Valued Instances where Every EFX Allocation is not PO}
From now on, we focus on bi-valued instances.
In this section, we show that EFX is incompatible with PO for at least four agents.
This is a complement to the result of \citet{garg2023new} showing that EFX is compatible with fractional Pareto-optimality for $n\leq 3$.
This is also the first EFX+PO non-existential example where agents have positive costs on items.

Recall that $c_i(g)\in\{1,r\}$ for some $r>1$ in bi-valued instances.

\begin{theorem}\label{thm:EFXPO}
    For every $n\geq 4$ and $r>\lceil n/2\rceil+1$, there exists a $(1,r)$-bi-valued instance with $n$ agents where every EFX allocation is not Pareto-optimal.
\end{theorem}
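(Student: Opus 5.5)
We will construct an explicit $(1,r)$-bi-valued instance, mirroring the two-type structure of the tri-valued counterexample in Theorem~\ref{thm:tri}. Let $k=\lceil n/2\rceil$. Split the agents into two groups $N_1,N_2$ of sizes $\lfloor n/2\rfloor$ and $\lceil n/2\rceil$, and use two item classes $B$ and $C$. Items in $B$ are small (cost $1$) for $N_1$ and large (cost $r$) for $N_2$; items in $C$ are the reverse. The sizes of $B$ and $C$ will be chosen of order $k$; the first attempt is $|B|=|C|=2k+1$ (in the spirit of the remark after Theorem~\ref{thm:tri}), adjusted so the counting below closes. Possibly a few items that are large for everyone will also be added, playing the role of $A$. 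The key observation is the following. In any PO allocation of such an instance, every item small for some agent goes to an agent who finds it small, or else we reach a configuration that can be improved by a swap. Concretely, if $i\in N_2$ holds some $b\in B$ while some $j\in N_1$ holds some $c\in C$, then swapping $b$ and $c$ strictly lowers both costs. Hence a PO allocation is \emph{essentially sorted}: at most one side misassigns items, and only when the other side holds no misassigned items.

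The proof then proceeds in steps. \textbf{Step 1.} Use the swap argument (and transfer arguments: moving an item that is large for its holder to an agent for whom it is small strictly helps the holder, and is a Pareto improvement only if that agent's cost does not increase, so transfers alone are not improvements) to classify PO allocations. There are three types: fully sorted ($B\subseteq X(N_1)$, $C\subseteq X(N_2)$); some $B$-items in $N_2$ but all of $C$ in $N_2$; or the symmetric case. \textbf{Step 2.} Show that no sorted allocation is EFX. The $2k+1$ items of $C$ split among $k$ agents of $N_2$, so some agent gets at least $3$ items. By pigeonhole, some agent of $N_1$ then holds few items, and its bundle costs little to the $N_2$ agent. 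The condition $r>k+1$ will make the EFX comparison $\tau_i\ge 2$ versus a bundle of cost $\le 1$ (or a suitable analogue) fail. Tuning $|B|,|C|$ here is the crux; the target inequality is exactly where $r>k+1$ enters. Concretely, an agent of $N_1$ holding $t$ small items has cost $t$ to itself but $tr$ to $N_2$, so EFX constraints between the groups are loose. The violation must therefore come from within a group or from an empty or near-empty bundle. \textbf{Step 3.} Rule out the unsorted PO types. If an $N_2$ agent holds a $B$-item (cost $r$ to it), EFX forces it to have small total otherwise. Counting then shows the resulting allocation is Pareto-dominated by moving that $B$-item to an $N_1$ agent and compensating via a chain of reassignments that keeps every cost weakly lower. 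This uses $r>k+1$: one large item is worth more than $k+1$ small items.

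\textbf{Step 4.} Exhibit that EFX allocations exist, so the statement is not vacuous. This is not strictly needed for ``every EFX allocation is not PO'', but it is good to note. For example, exhibit one EFX allocation that is unsorted.

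The main obstacle is choosing the item counts (and possibly adding common-large items) so that Step 2 genuinely fails EFX for \emph{every} sorted allocation, while Step 3's domination argument works for every unsorted EFX one. I would first try small $n=4$ ($k=2$, $r>3$) by hand with $|B|=|C|=5$ plus a few all-large items, find the configuration that works, and then generalize the counts with $k$.
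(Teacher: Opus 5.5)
Your proposal has a genuine gap. The instance is never fixed, and the choices you suggest do not work. The part you defer (``tuning $|B|,|C|$'', ``possibly a few'' all-large items) is in fact the whole content of the theorem.

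Take $|B|=|C|=2k+1$ with no all-large items, and split $B$ as evenly as possible within $N_1$ and $C$ as evenly as possible within $N_2$. This allocation is PO. Every item goes to an agent who values it at $1$, so the social cost is $m$, the minimum possible, and any Pareto improvement would strictly lower the social cost. It is also EFX. Within a group, all agents value the group's items identically and bundle sizes differ by at most one. Across groups, every bundle has at least two items and so costs at least $2r$ to the other group, while every EFX threshold is at most $3<2r$. So Step~2 is false for this instance.

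Adding all-large items does not by itself rescue your sizes. Take $n=4$, $r=3.5$, $|A|=3$, $|B|=|C|=5$, with agents $1,2\in N_1$ and $3,4\in N_2$. The allocation $X_1=\{a,b\}$, $X_2=\{b,b,b,b\}$, $X_3=\{a,c,c\}$, $X_4=\{a,c,c,c\}$ is EFX: the thresholds $3.5,3,4.5,5.5$ are at most the cheapest other bundles $4,4.5,6.5,5.5$. It is also PO, again because its social cost $3r+10$ is minimal.

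The missing idea is the paper's choice of sizes. There are $n-1$ items costing $r$ to everyone, while $|B|=\lfloor n/2\rfloor+1$ and $|C|=\lceil n/2\rceil+1$ are small, so $m=2n+1$. A bundle of only own-small items then has at most $\lceil n/2\rceil+1<r$ items, so it costs less than $r$ to every agent of that group. This is where the hypothesis on $r$ enters. A counting argument turns it into the statement that in every EFX allocation each agent holds an item that is large for her. With only $n-1$ items in $A$, some item of $B\cup C$ must therefore be misassigned, which is your Step~2.

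Domination is also not obtained by a chain of reassignments starting at the misassigned item, which your Step~3 leaves unspecified. Instead, the paper proves global lower bounds for every EFX allocation:
\begin{itemize}
\item Each agent's cost is at least $r+1$. Otherwise that agent holds a single large item, every other agent holds a large item plus at most one more, and only $2n-1<m$ items can be placed.
\item The social cost is at least $(r+1)n+1$, so some agent $i$ has cost at least $r+2$.
\end{itemize}
A single explicit allocation then Pareto-dominates. Agent $i$ takes one $A$-item and two own-small items. Every other agent takes one $A$-item and one own-small item, except one agent of the other group, who takes the two remaining items of her own-small class at cost $2<r$. Your Step~1 classification of PO allocations is correct but never needed; only cost lower bounds on EFX allocations are.
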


The instance shares a similar structure as it is in the proof of Theorem~\ref{thm:tri}.

Let $t_1=\lfloor n/2\rfloor$ and $t_2=\lceil n/2\rceil$.
Agents are partitioned into two groups $T_1$ and $T_2$ with $|T_1|=t_1$ and $|T_2|=t_2$.
There are $m=2n+1$ items which are partitioned into three groups $A,B$, and $C$.
$A$ is the set of large items containing $n-1$ items which have cost $r$ to every agent.
$B$ contains $t_1+1$ items.
Each agent in $T_1$ values each item in $B$ at $1$ and each item in $C$ at $r$.
$C$ contains $t_2+1$ items.
Each agent in $T_2$ values each item in $B$ at $r$ and each item in $C$ at $1$.
Recall that $r$ can be any real number greater than $t_2+1$.
Table~\ref{tab:po} illustrates the instance.

\begin{table}[h]
    \centering
    \caption{A bi-valued instance where every EFX allocation fails to be Pareto-optimal.  Each entry is the cost of every item in the corresponding item class.}
    \begin{tabular}{cccc}
    \hline
         & $A=\{a_1,\ldots,a_{n-1}\}$ & $B=\{b_1,\ldots,b_{t_1+1}\}$ & $C=\{c_1,\ldots,c_{t_2+1}\}$\\
    \hline
       Agents in $T_1$ & $r$ & $1$ & $r$\\
       Agents in $T_2$ & $r$ & $r$ & $1$\\
    \hline
    \end{tabular}
    
    \label{tab:po}
\end{table}

Notice that an EFX allocation exists in this instance.
For each agent $i\in T_1$, she receives a large item from $A$ and a small item from $B$.
For all but one agent in $T_2$, each of them receives a large item from $A$ and a small item from $C$.
The remaining agent in $T_2$ receives one item from $B$ and two items from $C$.
The readers can easily verify this is an EFX allocation.

Let $X=(X_1,\ldots,X_n)$ be an arbitrary EFX allocation.
We will show that $X$ cannot be Pareto-optimal.

In the following proposition, we show that everyone must receive at least one item that is ``large'' for her.
This large item can either be an item in $A$, or an ``across-group'' item (an item in $C$ for some agent in $T_1$, or an item in $B$ for some agent in $T_2$).
Given that there are $n-1$ items in $A$, this proposition implies there is at least one item that is allocated ``across-group''.

\begin{proposition}\label{prop:PO-exist-large}
    For every $X_i$, there exists $g\in X_i$ with $c_i(g)=r$.
\end{proposition}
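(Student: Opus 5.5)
The plan is to argue by contradiction. Suppose some agent $i$ receives only items that are small for her in the EFX allocation $X$. By the symmetry between $(T_1,B)$ and $(T_2,C)$, which differ only in the sizes $t_1+1$ versus $t_2+1$, I would handle $i\in T_1$ in detail and then check that the same argument goes through for $i\in T_2$ with the roles of $B$ and $C$ swapped. So assume $i\in T_1$ and $X_i\subseteq B$, and let $k=|X_i|$, so $0\le k\le t_1+1$.

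\textbf{Step 1.} The case $k=0$ is immediate. Then $c_j(X_i)=0$ for every $j$, so every other agent may hold at most one item. But $2n+1$ items cannot be covered by $n-1$ bundles of size at most one.

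\textbf{Step 2.} For $k\ge1$, first pin down the other agents of $T_1$. Each such agent $j$ has $c_j(X_i)=k\le t_1+1<r$. Since $\tau_j$ must be at most $k<r$, the bundle $X_j$ is either a single item, or consists only of items of $B$ with $|X_j|\le k+1$. The reason is that any bundle with at least two items, one of which costs $r$ to $j$, has $\tau_j\ge r>k$. Consequently, these $t_1-1$ agents together hold at most $t_1-1$ items of $A\cup C$. The agents of $T_2$ must therefore absorb at least
\[
(n-1)+(t_2+1)-(t_1-1)=2t_2+1
\]
items of $A\cup C$. In particular, they hold at least $t_2$ items of $A$. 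They also hold all the $B$-items not taken by $i$ or by the other $T_1$ agents.

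\textbf{Step 3.} Pigeonhole over the $t_2$ agents of $T_2$ then gives some $j\in T_2$ holding at least three items of $A\cup C$, and generically an $A$-item together with at least two further items. Such a $j$ has $\tau_j\ge r+1$, or at least $\tau_j\ge r$ in the weakest configurations. To finish, I need a bundle that $j$ values at most $r$, or strictly less than $\tau_j$. The natural candidates are:
\begin{enumerate}[label=(\roman*)]
\item $X_i$ itself when $k=1$, since then $c_j(X_i)=r$;
\item an agent of $T_1\setminus\{i\}$ holding a single item, which $j$ values at most $r$;
\item a $T_1$ agent holding only $B$-items, or an empty bundle.
\end{enumerate}
When $k\ge2$, item (i) is unavailable because $c_j(X_i)=rk$. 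In that case I would use the fact that the $B$-items are then scarce: only $t_1+1-k\le t_1-1$ remain for the other $T_1$ agents. So some $T_1$ peer either holds at most one item, or the remaining bundles are small. I would then count to show that some bundle is worth at most $r$, or at most $2$, to $j$. That bundle is then strongly envied by the overloaded $T_2$ agent, whose bundle contains at least $2t_2+1$ items of $A\cup C$ spread over $t_2$ agents.

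I expect Step 3 to be the main obstacle: choosing the right witness bundle and making the counting tight for all $k$ and small $n$, especially $n=4$ and odd $n$. This is where the hypothesis $r>t_2+1$ must be used, for example to ensure that $r$ exceeds every all-small bundle value $k\le t_2+1$. I would first try a clean split into $k=1$ and $k\ge2$. If that stalls, I would instead compare total costs: sum $c_j$ over all bundles for a $T_2$ agent $j$ to show that some bundle other than $X_j$ has cost below $\tau_j$, mirroring the averaging argument used in the proof of Theorem~\ref{thm:tri}.
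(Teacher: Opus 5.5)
\textbf{There is a genuine gap, in Step 3.} Steps 1 and 2 are correct. Your witness for $T_2$ agents is also correct. For $k=1$ it is $X_i$. For $k\ge 2$ some $T_1$ peer must hold at most one item, because non-singleton peer bundles lie in $B$ and only $t_1+1-k<2(t_1-1)$ items of $B$ remain. Either way, every $T_2$ agent values the witness at most $r$.

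The problem is that the agent $j\in T_2$ produced by pigeonhole may hold only $C$-items. Then $\tau_j=|X_j|-1$ is small, not $\ge r$. In that situation none of your candidates (i)--(iii), which are all $T_1$ bundles, need be strongly envied.

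Concretely, take $n=4$, $T_1=\{1,2\}$, $T_2=\{3,4\}$, and
$X_1=B$, $X_2=\{a_3\}$, $X_3=\{a_1,a_2\}$, $X_4=C$.
\begin{itemize}
\item Every conclusion of Steps 1--2 holds: $k=3$, agent $2$ holds a single item, and $T_2$ holds exactly $2t_2+1=5$ items of $A\cup C$, two of them in $A$.
\item The only agent with three items of $A\cup C$ is agent $4$, with $\tau_4=2$.
\item Agent $3$ has $\tau_3=r=c_3(X_2)$, so no $T_1$ bundle is strongly envied by anyone.
\end{itemize}
This allocation fails EFX only because $\tau_3=r>3=c_3(X_4)$. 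The violated comparison points at a second all-small agent, this time in $T_2$. Your averaging fallback cannot rescue this. A $T_2$ agent's average bundle cost is $\bigl(r(n+t_1)+t_2+1\bigr)/n$, which always exceeds $r+1$.

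\textbf{The missing idea.} You must also use a $T_2$ bundle consisting only of $C$-items as a witness. This is the paper's case where both $U_1$ and $U_2$ (the all-small agents of $T_1$ and $T_2$) are nonempty. It fits your framework as follows.
\begin{itemize}
\item Suppose some $j'\in T_2$ has $X_{j'}\subseteq C$. Then $c_j(X_{j'})\le t_2+1<r$ for every other $j\in T_2$. Any $T_2$ agent holding an $A$-item together with anything else has $\tau_j\ge r$. So the $A$-items in $T_2$ sit in singleton bundles of at most $t_2-1$ agents, contradicting your bound of at least $t_2$.
\item Suppose no such $j'$ exists. Then every $T_2$ agent holds an item of cost $r$ to her. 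By your witness of value at most $r$, she holds at most two items. Hence $T_2$ absorbs at most $2t_2<2t_2+1$ items of $A\cup C$, a contradiction.
\end{itemize}
With this split, and its mirror for $i\in T_2$, your organization around a single all-small agent handles the paper's subcases $U_1=T_1$ and $U_1\subsetneq T_1$ uniformly. That is a reasonable alternative to the paper's case analysis on $U_1,U_2$. As written, however, the proposal does not close the argument.
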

\begin{proof}
    Let $U_1\subseteq T_1$ be the set of agents such that, for each $i\in U_1$, $X_i\subseteq B$.
    That is, $U_1$ is the set of agents in $T_1$ that only receive small items.
    Similarly, let $U_2\subseteq T_2$ be the set of agents such that, for each $i\in U_2$, $X_i\subseteq C$.
    This proposition says both $U_1$ and $U_2$ are empty sets.
    Suppose for the sake of contradictions this is not the case.

    If we have both $U_1\neq\emptyset$ and $U_2\neq\emptyset$, then items in $A$ must be allocated among $N\setminus(U_1\cup U_2)$ (which means $N\setminus(U_1\cup U_2)\neq\emptyset$).
    Moreover, each agent in $N\setminus(U_1\cup U_2)$  must receive at most one item in $A$.
    Suppose not, if an agent in $T_1\setminus U_1$ (in the case $T_1\setminus U_1\neq\emptyset$) receives more than one items in $A$, then, after removing one item, the disutility for this agent is at least $r$.
    On the other hand, by our choice of $r$, this agent values the bundle of each agent in $U_1$ strictly less than $r$.
    This is a violation of EFX.
    Similarly, each agent in $T_2\setminus U_2$ (in the case $T_2\setminus U_2\neq\emptyset$) can only receive one item from $A$.
    Since $|N\setminus(U_1\cup U_2)|\leq n-2$ for $U_1\neq\emptyset$ and $U_2\neq\emptyset$ and $|A|=n-1$, this is a contradiction.

    If one of $U_1$ and $U_2$ is nonempty, suppose $U_1\neq\emptyset$ and $U_2=\emptyset$. The case with $U_2\neq\emptyset$ and $U_1=\emptyset$ is analyzed similarly and is omitted.
    We further discuss two sub-cases.

    If $U_1=T_1$, then agents in $T_1$ only receive items from $B$. Since $|B|=t_1+1<2t_1$ for $n\geq 4$, at least one agent in $T_1$ receives at most one item.
    Now consider $T_2$.
    Firstly, since $|A|=n-1>t_2$ for $n\geq 4$, at least one agent in $T_2$ receives two large items from $A$.
    It then follows that everyone in $T_2$ must receive at least one large item from $A\cup B$: supposing not, the agent receiving two items from $A$ will strongly envy the agent receiving no item from $A\cup B$ even if the latter agent receives all items from $C$ (since $r>t_2+1$).
    Secondly, since $|C|>t_2$, at least one agent in $T_2$ must also receive two items from $C$.
    Therefore, we know an agent in $T_2$ will receive one large item from $A\cup B$ and two small items from $C$.
    This agent strongly envies the agent in $T_1$ who receives at most one item.

    If $U_1\subsetneq T_1$, to guarantee EFX, each agent $i\in T_1\setminus U_1$ must receive exactly one large item in $A\cup C$: she receives at least one large item by our definition of $U_1$, and she cannot receive more to avoid strongly envying agents in $U_1$.
    This item is valued either $1$ or $r$ for each agent in $T_2$.
    Since $U_2=\emptyset$, each agent in $T_2$ has received an item in $A\cup B$.
    To avoid strongly envying agents in $T_1\setminus U_1$, each agent in $T_2$ can then receive at most one more item other than the large item in $A\cup B$.
    The total number of items allocated among $T_2$ and $T_1\setminus U_1$ is at most $|T_1\setminus U_1|+2|T_2|\leq t_1-1+2t_2=n-1+t_2$.
    On the other hand, $A\cup C$ are allocated only to agents in $(T_1\setminus U_1)\cup T_2$ and $|A\cup C|=n-1+t_2+1$.
    This leads to a contradiction.
\end{proof}

Next, we show that every agent's disutility is at least $r+1$ and one of them is at least $r+2$.

\begin{proposition}\label{prop:r+1r+2}
    For every agent $i$, we have $c_i(X_i)\geq r+1$. Moreover, there exists an agent $i$ with $c_i(X_i)\geq r+2$.
\end{proposition}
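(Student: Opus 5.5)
Building on Proposition~\ref{prop:PO-exist-large}, every bundle $X_i$ already contains an item $g$ with $c_i(g)=r$, so $c_i(X_i)\geq r$. For the first claim it therefore suffices to rule out $X_i=\{g\}$ for a single item that is large for agent $i$. The second claim will come from counting: there are $m=2n+1$ items and $n$ agents, so the bundle sizes are constrained.

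\textbf{First claim.} Suppose some agent $i$ has $X_i=\{g\}$ with $c_i(g)=r$. Agent $i$'s bundle is then exactly one item, and every other agent $j$ gets the remaining $2n$ items. By Proposition~\ref{prop:PO-exist-large}, each $X_j$ contains an item that is large for $j$. The EFX condition towards $i$ requires $\tau_j(X_j)\leq c_j(g)\leq r$. I will derive a contradiction by counting items.
\begin{itemize}
\item An agent $j$ with $c_j(g)=1$ must have $\tau_j(X_j)\le 1$. Since $X_j$ contains a large item of cost $r>1$, this forces $X_j$ to be a single item.
\item An agent $j$ with $c_j(g)=r$ may hold one large item plus small items up to total $r$ after removing the cheapest item. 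That is, $X_j$ is a large item plus at most $\lfloor r\rfloor$ small items, or two large items.
\end{itemize}
This count alone may not close the argument, since $r$ can be large. So I would instead use the type structure, mimicking Proposition~\ref{prop:PO-exist-large}. All $n-1$ items of $A$ and the across-group items must be spread out. Agents holding two large items would then strongly envy someone holding few items, because the average bundle size is about $2$ and some agent holds at most two items.

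Concretely, take $g\in A$ without loss of generality, or $g$ an across-group item. Then the remaining $n-2$ or $n-1$ items of $A$ go to the other $n-1$ agents, at most one large item each unless two large items cost at most $r$ in $\tau$. Two large items give $\tau=r$, which is allowed only if $c_j(g)=r$. I expect this case analysis to be the main obstacle. The key comparison will be against the agent with the fewest items: with $r>t_2+1$, any agent holding two items large to her strongly envies any bundle of small items, and an agent with one large item plus small items must compare against $\{g\}$. The counting $|B|+|C|=n+2$ small-type items then forces some agent to hold a large item plus two small items. That agent has $\tau\geq r+1>r\ge c_j(g)$, which is the contradiction.

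\textbf{Second claim.} Suppose every agent has $c_i(X_i)=r+1$ exactly. By Proposition~\ref{prop:PO-exist-large}, every bundle is then exactly one large item plus one small item, using $2n$ items in total. But $m=2n+1$, so one item is left over, which is a contradiction.
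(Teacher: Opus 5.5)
You have a genuine gap in the first claim, caused by a miscalculation in your second bullet. For $X_j$ consisting of one large item and $k\ge 1$ small items, EFX removes the \emph{cheapest} item, so $\tau_j(X_j)=r+k-1$. Then $\tau_j(X_j)\le c_j(g)\le r$ forces $k\le 1$, not $k\le\lfloor r\rfloor$. Your bound is what you get by removing the large item, which is an EF1-type condition, not EFX.

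Because of this slip, you conclude that counting cannot close the argument. You then fall back on a type-based case analysis that you never carry out. Its decisive step, ``the counting $|B|+|C|=n+2$ forces some agent to hold a large item plus two small items,'' is not justified. It also contradicts your own bullet, which declared such bundles admissible.

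The correct observation is the one you state only in your last sentence, and it settles the claim by pure counting. Take any $X_j$ with $j\ne i$ that has at least three items, one of which is large for $j$. Then $\tau_j(X_j)\ge r+1>r\ge c_j(X_i)$; if all its items are large, even $\tau_j(X_j)\ge 2r$. Hence $|X_j|\le 2$ for every $j\ne i$. So at most $1+2(n-1)=2n-1<2n+1=m$ items are allocated, a contradiction. This is exactly the paper's argument. It needs nothing about the $A$/$B$/$C$ structure beyond Proposition~\ref{prop:PO-exist-large}.

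Your second claim is correct, and slightly more direct than the paper's route. The paper notes that some item must be allocated across groups and bounds the social cost below by $(r+1)n+1$. You instead observe that two-item bundles account for only $2n<2n+1$ items.

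You should, however, justify replacing ``every $c_i(X_i)<r+2$'' by ``every $c_i(X_i)=r+1$.'' A bundle with at least two large items costs at least $2r>r+2$, since $r>t_2+1\ge 3$. Combined with the first claim and Proposition~\ref{prop:PO-exist-large}, every bundle must then be exactly one large item plus one small item.
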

\begin{proof}
    For the first part, if there exists $i$ with $c_i(X_i)<r+1$, then it must be that $X_i$ contains exactly one large item (either an item in $A$ or an ``across-group'' item) by Proposition~\ref{prop:PO-exist-large}.
    By Proposition~\ref{prop:PO-exist-large} again, each remaining agent has already received one large item.
    To guarantee EFX, each remaining agent can at most receive one more item.
    The total number of items allocated is then bounded by $2n-1$.
    However, we have $m=2n+1$ items.
    This is a contradiction.

    For the second part, since $|A|=n-1$, Proposition~\ref{prop:PO-exist-large} implies that at least one agent receives an ``across-group'' item (a $T_1$ agent receives an item from $C$ or a $T_2$ agent receives an item from $B$).
    The social cost $\sum_{i=1}^nc_i(X_i)$ is then at least $r(n-1)+r+(t_1+t_2+1)=(r+1)n+1$ given that one item in $B\cup C$ is allocated to an agent valuing it at $r$.
    It cannot be the case that everyone's disutility is at most $r+1$.
\end{proof}

Now we are ready to show that $X$ is not Pareto-optimal.
For every agent $i$, we will construct an allocation where $c_i(X_i)=r+2$, $c_j(X_j)\leq r+1$ for all $j\neq i$, and $c_{j'}(X_{j'})<r$ for some $j'$.
This will conclude the proof of Theorem~\ref{thm:EFXPO} given Proposition~\ref{prop:r+1r+2}.
Suppose $i\in T_1$.
We let $i$ take one item from $A$ and two items from $B$, which has disutility $r+2$.
For each agent in $T_1\setminus\{i\}$, she takes one item from $A$ and one item from $B$, which has disutility $r+1$.
For all but one agent in $T_2$, each of them takes one item from $A$ and one item from $C$, which has disutility $r+1$.
For the remaining agent in $T_2$, she takes the remaining two items from $C$, which has disutility $2<r$ (given $r>\lceil n/2\rceil+1$ and $n\geq 4$).
The construction for $i\in T_2$ is similar.

\section{EFX Allocations for Four Agents with Bi-Valued Disutilities}
We continue to study bi-valued instances.
In this section, we show that an EFX allocation always exists for four agents.

\begin{theorem}\label{thm:four}
    For four agents, there exists an EFX allocation in every bi-valued instance.
\end{theorem}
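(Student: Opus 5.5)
We plan a constructive, algorithmic argument that tracks a partial EFX allocation and extends it one chore at a time. With costs in $\{1,r\}$, each chore $g$ splits the agents into those for whom $g$ is small and those for whom it is large.

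\textbf{Phase 1: small chores.} As long as some remaining chore is small for some agent, we assign it while keeping the partial allocation EFX. We would like to use an envy-cycle style procedure adapted to chores, always working with a sink of the envy graph. Fix the current allocation $X$ and a chore $g$. Let $S_g$ be the agents for whom $g$ is small. If some agent $i\in S_g$ is envied by nobody, or is a sink of the envy graph restricted appropriately, then giving $g$ to $i$ keeps EFX. The reason is that $\tau_i(X_i\cup\{g\})\le c_i(X_i)$, since $g$ has the minimum possible cost $1$ for $i$, and $c_i(X_i)\le c_i(X_j)$ for all $j$ because $i$ envies nobody. Other agents' EFX conditions toward $i$ only get easier. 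If no agent of $S_g$ is envy-free, we resolve envy cycles by rotating bundles. Rotating does not increase any agent's cost, and it preserves EFX provided we also keep the invariant that $\tau_i$ of a bundle is controlled. For bi-valued costs we will instead maintain the stronger invariant ``EFX and, for each agent, her bundle contains a $1$-cost chore whenever it contains two or more chores'', to be verified through the rotations. When rotation is impossible (the envy graph on $S_g$ is acyclic), we follow an envy path from an agent of $S_g$ to a sink $s$. If $g$ is large for $s$, we give $g$ to $s$ anyway, provided $s$'s removable item keeps EFX, i.e.\ $c_s(X_s)\le \min_j c_s(X_j)$. Otherwise we swap along the path. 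This is where we use that there are only four agents: the path has length at most $3$, and we will case-analyse the possible envy graphs.

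\textbf{Phase 2: chores large for everyone.} The remaining chores are large for all agents, so they behave like identical goods of cost $r$. We allocate them one at a time to an agent who currently has the minimum own cost among the envy-graph sinks. Adding a chore of cost $r$ that is the maximum-cost item may break $\tau$. We handle this by noting that $\tau_i(X_i\cup\{g\})=c_i(X_i)$ whenever $X_i$ already contains a large item; if it does not, the increase in $\tau$ is $r-1$ less than the increase in cost. So we should give these chores preferentially to agents already holding a large chore. We may also need to process such chores \emph{first}, in a leximin-like manner as for identical agents, before Phase 1.

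\textbf{Main obstacle.} The main difficulty is the interaction between the two phases, where an agent $i$ whose bundle contains only small chores must receive a large one. Theorem~\ref{thm:EFXPO}'s instance shows the structure can be delicate. We expect to settle this by choosing the order carefully. First, allocate the all-large chores by round-robin-like leximin so that every agent's bundle has a large chore or the numbers differ by at most one. Then insert the mixed chores via sink assignment and cycle elimination, with a finite case analysis over the envy graphs on four vertices that cannot be rotated. The termination potential will be the lexicographic vector of sorted costs together with the number of allocated chores. We expect the hardest subcase to be the one where the unique sink $s$ finds $g$ large and two other agents find $g$ small but envy each other in a chain. For that subcase we would first try a double swap that moves $s$'s bundle to one of them.
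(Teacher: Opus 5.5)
Your proposal has a genuine gap. The step that carries all the difficulty, inserting a chore that is small for only one or two agents, is never proved, and the concrete rules you give for it are incorrect.

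\textbf{The $\tau$ computations are wrong.} Adding a large chore $g$ to a nonempty bundle does not change the minimum item cost, so $\tau_s(X_s\cup\{g\})=c_s(X_s)+r-\min_{e\in X_s}c_s(e)$. This equals $c_s(X_s)$ only if $X_s$ consists entirely of large chores, and equals $c_s(X_s)+r-1$ as soon as $X_s$ contains a small one.
\begin{itemize}
\item The sink condition $c_s(X_s)\le\min_j c_s(X_j)$ therefore does not license giving $g$ to $s$ in exactly the situation your invariant enforces (every bundle with two or more chores contains a $1$-cost chore).
\item Your Phase~2 claim that $\tau_i(X_i\cup\{g\})=c_i(X_i)$ whenever $X_i$ contains a large chore fails when $X_i$ also contains a small one.
\item The invariant itself cannot be maintained in general, e.g.\ when all $m\ge5$ chores are large for everyone.
\end{itemize}

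\textbf{The path-swap fallback only works for $M_{34}$ chores.} You leave ``swap along the path'' unspecified. Its natural form is the rotation in Lemma~\ref{lem:insertion}: $s$ takes her new cheapest bundle $X_j$, bundles shift along a most-envy path $j=v_0\to\cdots\to v_k=s$, and $v_{k-1}$ receives $X_s\cup\{g\}$. That argument works only because $v_{k-1}$ regards $g$ as small, which is guaranteed exactly when $g$ is large for at most one agent, i.e.\ $g\in M_{34}$. For $g\in M_{01}\cup M_2$, every in-neighbour of the sink may regard $g$ as large. The receiver's threshold can then jump to $c_{v_{k-1}}(X_s)+r-1$, above her other options.

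The four-vertex case analysis you promise, including the subcase you call hardest, is where the proof would actually have to happen. You would also need to show that every EFX state your procedure can reach admits such a local repair, and nothing in the proposal supports this. In addition, rotations must follow cheapest-bundle edges. For chores, rotating along an arbitrary envy cycle can hand an agent a bundle whose $\tau$ exceeds the cost of some third bundle.

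\textbf{The missing idea} is that chores small for few agents have to be allocated globally rather than greedily. The paper inserts only $M_{34}$ chores one at a time. For $M_{01}\cup M_2$ it builds the allocation from balanced structures:
\begin{itemize}
\item A canonical allocation of $M_{01}$: every agent receives $a$ or $a+1$ chores, including as many of her uniquely-small chores as her quota allows. This is EFX by Lemma~\ref{lem:canonical}. The paper also uses the freedom to choose the short agents, and super-canonical versions that give short agents an $r$-advantage.
\item An allocation of the $M_2$ multigraph via a Hall-type balanced orientation plus the deficient-set analysis (Lemmas~\ref{lem:M2} and~\ref{lem:M2properties}). There, every envy is certified by removing a small chore except in two explicit exceptional configurations.
\item A gap-filling concatenation of the two (Lemmas~\ref{lem:composition} and~\ref{lem:easycombine}), done by cases on $|M_{01}|\bmod 4$.
\end{itemize}
Four agents enter through this chore classification and the structure of the $M_2$ multigraph on four vertices, not through the length of envy paths. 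As it stands, your proposal is a research plan rather than a proof.
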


For $r=2$, \citet{lin2025approximately} shows that EFX is compatible with Pareto-optimality for any number of agents.
If we only consider EFX, the case with $r\leq 2$ can be handled by the following simple variant of the round-robin algorithm for any number of agents.
Let $m=an+b$ for $b<n$.
If $b=0$, the standard round-robin algorithm outputs an EFX allocation $X=(X_1,\ldots,X_n)$: for any pair of agents $i,j$ with $i<j$, $i$ will not envy $j$ as $i$ always picks first; we also have $c_j(X_j)-c_j(X_i)\leq r-1$ (if in a round agent $j$ begins to receive an large item whereas agent $i$'s item in this round is small for agent $j$, then every item in the remaining process of this algorithm is large for agent $j$), $j$ will not envy $i$ up to removing any item from $j$'s bundle since $r-1\leq1$.
If $b>0$, we introduce a special ``first reverse round'' of the round-robin algorithm where agents $b,b-1,\ldots,1$ iteratively pick favorite items in the order.
After that, we perform the standard round-robin algorithm in the standard order $1,\ldots,n$.
For each pair of agents $i,j$ with $i<j$, agent $i$ will not envy agent $j$ after removing the item picked in the first special round. If this item is a small item for agent $i$, EFX condition holds; if this item is a large item, then $X_i$ only contains large items, and EFX condition also holds.
The reason agent $j$ will not strongly envy agent $i$ is similar as the $b=0$ case: agent $j$ has an advantage to agent $i$ in the first round, and after that the disadvantage is at most $r-1$.

We will assume $r>2$ in the remaining part of this section.
For a chore $g$, let
\[
 S(g)=\{i\in N:c_i(g)=1\}
\]
be the set of agents for whom $g$ is small.  We partition the chores into
\[
M_{01}=\{g:|S(g)|\le 1\},\qquad
M_2=\{g:|S(g)|=2\},\qquad
M_{34}=\{g:|S(g)|\ge 3\}.
\]
For an item in $M_2$, we write $(i,j)$ for an item small exactly for agents
$i$ and $j$.  Thus the items in $M_2$ form a multigraph on the vertex set
$N$, where an edge $(i,j)$ is a chore small for precisely agents $i$ and $j$.

\paragraph{Overview.}
We will first show that items in $M_{34}$ are easy to handle: given any EFX partial allocation for $M_{01}\cup M_2$, we can complete this partial allocation by further allocate items in $M_{34}$ one by one while keeping EFX in the entire process.
This is discussed in Sect.~\ref{subsec:34}.
After that, we will assume the item pool only contains $M_{01}$ and $M_{2}$ items.
Sect.~\ref{subsec:012over}, \ref{subsec:01}, \ref{subsec:2}, and \ref{subsec:combine} discuss how to find an EFX allocation in $M_{01}\cup M_2$.
Sect.~\ref{subsec:012over} gives an overview to this.
At a high level, we will show how to find EFX allocations for $M_{01}$ and for $M_2$ respectively in Sect.~\ref{subsec:01} and \ref{subsec:2}.
Lastly, we will show that EFX allocations for $M_{01}\cup M_2$ can be found by carefully combining the EFX allocations of $M_{01}$ and $M_2$.
Many case studies are required for this, and they are discussed in Sect.~\ref{subsec:combine}.

\subsection{Adding Items in $M_{34}$}
\label{subsec:34}
The first step shows that items small for at least three agents can be added
one by one after an EFX allocation has already been found for the remaining
items.

\begin{lemma}[Insertion of an $M_{34}$ item]\label{lem:insertion}
Let $X$ be an EFX allocation of a set of chores $T$.  Let $g\in M_{34}$ be a
chore that is small for at least three agents.  Then there is an EFX
allocation of $T\cup\{g\}$.
\end{lemma}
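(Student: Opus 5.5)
The plan is to add $g$ to the bundle of an agent for whom $g$ is small and who is envy-free. Adding $g$ to $X_i$ changes nobody else's bundle, so for every $j\neq i$ we have $\tau_j$ unchanged while $c_j(X_i\cup\{g\})\ge c_j(X_i)$. Hence EFX can only fail for agent $i$. If $c_i(g)=1$, then $g$ is a minimum-cost item of $X_i\cup\{g\}$, so $\tau_i(X_i\cup\{g\})=c_i(X_i)$. Agent $i$ therefore keeps EFX exactly when $c_i(X_i)\le c_i(X_j)$ for all $j$, i.e. when $i$ is envy-free in $X$. So the lemma reduces to the following claim: after possibly reallocating the bundles of $X$ among the agents while preserving EFX, some agent in $S(g)$ is envy-free. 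Since $g\in M_{34}$, $|S(g)|\ge 3$, so at most one agent $k$ has $c_k(g)=r$.

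To get such an agent I will use the top-trading envy graph. Draw an edge $i\to j$ if $X_j$ minimizes $c_i(\cdot)$ over all bundles, preferring the self-loop $i\to i$ whenever $i$ is envy-free. Every vertex has out-degree one, so following edges from any agent reaches a cycle. Now rotate the bundles along a cycle, giving each agent on it the bundle it points to. Each agent on the cycle then holds a bundle of minimum cost for her among the (unchanged) family of bundles, so she is envy-free and in particular EFX. Agents off the cycle see the same family of bundles as before and keep their own, so their EFX condition is unaffected. Thus rotation preserves EFX and makes every agent on the cycle envy-free. If the cycle has length at least $2$, it contains at least one agent other than $k$, hence an agent of $S(g)$. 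That agent is now envy-free, and we finish by giving $g$ to her.

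The remaining case is when every cycle reachable in the top-trading graph is a self-loop, and the only self-loop is at $k$. Then $k$ is the unique envy-free agent, every agent of $S(g)$ envies someone, and each path from an agent of $S(g)$ ends at $k$, say $i_1\to\cdots\to i_\ell\to k$. I expect this to be the main obstacle, because the naive rotation along the path evicts $k$, and $k$ has no guarantee on the bundle she would receive.

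First I would try to remove this case by tie-breaking: if any $i\in S(g)$ has a minimum-cost bundle other than $X_k$, redirect her edge so that it avoids $k$, and then look for a cycle among the small agents. If every small agent's unique minimum bundle is $X_k$, I would use that there are only four agents and that costs are bi-valued with $r>2$. Consider the agent $j\in S(g)$ with $c_j(X_k)$ smallest. Swap $X_j$ and $X_k$, so that $j$ becomes envy-free, and then give $g$ to $j$.

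It remains to check EFX for $k$ on $X_j$. Since $j$ and the other small agents are EFX toward $k$ but strictly envy $k$, their bundles exceed $c(X_k)$ by less than one item's cost in their own valuation. I would combine this with the fact that $k$ was EFX (and envy-free) in $X$, and with the bi-valued structure, to argue that $\tau_k(X_j)\le c_k(X_{j'})$ for all $j'$, if necessary choosing $j$ to minimize $c_k(X_j)$ instead. If this direct argument fails, the fallback is a short case analysis on whether the items of $X_j$ are large or small for $k$.
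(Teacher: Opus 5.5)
\textbf{There is a genuine gap in your final case.} Your opening reduction and your cycle case are correct. The final case fails, and in fact the claim you reduce to is false.

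\emph{A counterexample to the reduction.} Take $r=3$, items $u_1,u_2$ costing $1$ to everyone, items $w_1,\dots,w_9$ costing $1$ to agents $1,2,3$ and $3$ to agent $4$, and $g$ with $S(g)=\{1,2,3\}$. Let $X_4=\{u_1,u_2\}$, $X_1=\{w_1,w_2,w_3\}$, $X_2=\{w_4,w_5,w_6\}$, $X_3=\{w_7,w_8,w_9\}$.
\begin{itemize}
\item This allocation is EFX, agent $4$ is the only envy-free agent, and $X_4$ is the unique cheapest bundle of every small agent. This is exactly your last subcase.
\item A small agent is envy-free only while holding $\{u_1,u_2\}$. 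Agent $4$ then holds a $w$-bundle with $\tau_4=6>2=c_4(\{u_1,u_2\})$. So no EFX permutation of $X$ makes an agent of $S(g)$ envy-free.
\item Your swap fails for every choice of $j$: it leaves agent $4$ with $\tau_4(X_j)=6>5=c_4(X_4\cup\{g\})$.
\item The EFX completion here is to give $g$ to $k=4$ herself, since then $\tau_4=4\le 9$. Your plan never uses this option.
\end{itemize}

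\emph{Your case split is also not exhaustive.} Suppose agent $1$'s unique cheapest bundle is $X_2$, while agents $2$ and $3$ have unique cheapest bundle $X_k$. Then no tie-breaking produces a cycle, and not every small agent's unique minimum is $X_k$. Such configurations are easy to realise, and your plan says nothing about them. Finally, nothing about $n=4$ (beyond $g$ being large for at most one agent) or about $r>2$ is needed here.

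\textbf{The missing idea} is to let $k$ close the cycle herself, with $g$ attached to her old bundle.
\begin{itemize}
\item Tentatively give $g$ to $k$. If this is EFX, you are done.
\item Otherwise, let $X_j$ be $k$'s cheapest bundle in the family where $X_k$ is replaced by $X_k\cup\{g\}$; necessarily $j\neq k$.
\item The only cycle of your pointer graph is the self-loop at $k$. So following pointers from $j$ gives a path $j=v_0\to v_1\to\cdots\to v_s=k$.
\item Rotate along $k\to v_0\to\cdots\to v_{s-1}\to k$: $k$ gets $X_j$, each $v_t$ with $t<s-1$ gets $X_{v_{t+1}}$, and $v_{s-1}$ gets $X_k\cup\{g\}$.
\end{itemize}
Now check EFX for everyone:
\begin{itemize}
\item Agent $k$ holds a cheapest bundle of the new family, so she is envy-free.
\item Each $v_t$ with $t<s-1$ holds a bundle that was cheapest among the old bundles. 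The only altered comparison bundle, $X_k\cup\{g\}$, became costlier.
\item Agents off the cycle are unaffected for the same reason.
\item Agent $v_{s-1}\in S(g)$ has $\tau_{v_{s-1}}=c_{v_{s-1}}(X_k)$. This is at most the cost of every old bundle, because she pointed to $X_k$.
\end{itemize}
This is the paper's argument, phrased there via the acyclic most-envy graph with unique sink $k$.

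Note that $k$'s choice of $X_j$ depends on $X_k\cup\{g\}$. So the final allocation is not of the form ``permute $X$ to make a small agent envy-free, then add $g$.'' That is why your reduction was too strong.
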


\begin{proof}
Given an allocation $X$, define its envy graph $G(X)$ by putting an edge
$i\to j$ if $c_i(X_i)>c_i(X_j)$.  Define the \emph{most-envy graph} $H(X)$
by giving each envying agent $i$ only those outgoing edges $i\to j$, where
$X_j$ is a cheapest bundle from $i$'s perspective.  Agents that envy nobody
have no outgoing edge.  Hence $G(X)$ and $H(X)$ have exactly the same sinks.

First eliminate directed cycles in $H(X)$.  Suppose
\[
 i_1\to i_2\to\cdots\to i_k\to i_1
\]
is a directed cycle in $H(X)$.  Rotate bundles along the cycle: agent $i_t$
receives $X_{i_{t+1}}$, with indices modulo $k$.  This preserves EFX.  Indeed,
each agent on the cycle receives a bundle that was cheapest for her before
the rotation.  Therefore her new EFX threshold is at most the cost of her
new bundle, which is at most the cost of every old bundle from her
perspective.  Since the rotation merely permutes old bundles, all comparison
inequalities continue to hold.  Agents outside the cycle keep their own
bundles and only see comparison bundles permuted.  The rotated allocation is
therefore EFX.  Moreover, every agent on the cycle strictly decreases her
own cost, so repeated cycle elimination terminates.  We may therefore assume
that $H(X)$ is acyclic.

Let $g$ be the new item.  Since $H(X)$ is acyclic, it has at least one sink.
If there is a sink $i$ with $c_i(g)=1$, assign $g$ to $i$.  For every
$h\ne i$, the own bundle of $h$ is unchanged and bundle $i$ only becomes more costly from $h$'s perspective, so no EFX inequality of $h$ can
be violated.  For agent $i$, because $i$ was a sink, we had
\[
 c_i(X_i)\le c_i(X_j)\qquad\forall j.
\]
After receiving the small item $g$, we have
\[
 c_i(X_i\cup\{g\})-\min_{e\in X_i\cup\{g\}}c_i(e)
 \le c_i(X_i),
\]
so all EFX inequalities of $i$ hold.

It remains to consider the case in which no sink views $g$ as small.  Since
$g$ is small for at least three agents, this means that $H(X)$ has a unique
sink $i$, and $i$ is the unique agent for whom $g$ is large.
Tentatively assign $g$ to $i$.  If the resulting allocation is EFX, we are
done.  Otherwise, only agent $i$ can violate EFX, because every other agent's
own bundle is unchanged and $g$ is small for every other agent.  Let $j$ be
one of the cheapest bundles for $i$ after the tentative assignment.

Since $i$ was the unique sink of the old acyclic graph $H(X)$, there is a
path
\[
 j=v_0\to v_1\to\cdots\to v_k=i
\]
in the old graph $H(X)$.  Rotate along the cycle formed by the new edge
$i\to j$ and this old path:
\[
 i\to v_0\to v_1\to\cdots\to v_k=i.
\]
That is, set
\[
 X'_i=X_{v_0},\qquad
 X'_{v_t}=X_{v_{t+1}}\quad (t=0,\ldots,k-2),\qquad
 X'_{v_{k-1}}=X_i\cup\{g\},
\]
and leave all other bundles unchanged.

We verify EFX.  Agent $i$ receives $X_{v_0}=X_j$, which was cheapest for
$i$ after the tentative assignment, so all her EFX inequalities hold.
For every $v_t$ with $t<k-1$, the new bundle $X_{v_{t+1}}$ was cheapest for
$v_t$ in the old allocation.  Thus her new threshold is at most the cost of
any old bundle.  The only comparison bundle that is not an old bundle is
$X_i\cup\{g\}$; but $g$ costs positively for $v_t$, so this bundle is more costly than $X_i$ from $v_t$'s perspective.
Hence, $v_t$ remains EFX.
Finally, the predecessor $v_{k-1}$ receives $X_i\cup\{g\}$.  Since $g$ is
small for $v_{k-1}$,
\[
 c_{v_{k-1}}(X_i\cup\{g\})-1=c_{v_{k-1}}(X_i).
\]
The bundle $X_i$ was cheapest for $v_{k-1}$ in the old allocation, because
$v_{k-1}\to i$ was an edge of $H(X)$.  Therefore $v_{k-1}$ also satisfies
EFX.
All agents not on the path keep their own bundle, and their comparison
bundles are old bundles except that $X_i$ is replaced by the weakly more
costly bundle $X_i\cup\{g\}$.
Hence their EFX inequalities remain valid.
\end{proof}

Consequently, it is enough to prove the theorem for the items in
$M_{01}\cup M_2$.  Once an EFX allocation for $M_{01}\cup M_2$ has been
constructed, the items in $M_{34}$ can be inserted one by one using the
lemma above.  From now on we therefore assume that all items belong to
$M_{01}\cup M_2$.

\begin{remark}
    This idea also works for general numbers of agents.
    If we have a partial EFX allocation such that each of the remaining unallocated items is valued as large for at most one agent, then we can always allocate the remaining items one by one until we get a complete EFX allocation.
    Therefore, under bi-valued setting, items valued as large by at most one agent are never a real obstacle for EFX allocations.
\end{remark}

\subsection{Overview of the Proof for $M_{01}\cup M_2$}
\label{subsec:012over}
The remaining proof has two phases.  First we allocate the items in
$M_{01}$, where each item is small for at most one agent.  If
\[
 |M_{01}|=4a+b,
 \qquad b\in\{0,1,2,3\},
\]
then we let the $M_{01}$-allocation gives each agent either $a$ or $a+1$ items.  We will show that we
are free to choose which $b$ agents receive $a+1$ items.  This freedom is
important: extra $M_{01}$ slots should usually be given to agents who can
fill them with own-small items, but in some cases the structure of the
$M_2$ graph determines the better choice.
Details are described in Sect.~\ref{subsec:01}.

Second, we append an allocation of the items in $M_2$.  Some $M_2$ items may
be used immediately for \emph{gap filling}: they are placed on top of the
$M_{01}$ allocation in order to eliminate or at least mitigate envies between agents in the $M_{01}$ allocation.
Specifically, agents receiving only $a$ items in $M_{01}$-allocation, being potentially envied by the remaining agents, are allocated some items in $M_2$ in a careful way.
The remaining
$M_2$ items are allocated by the multigraph algorithm described in Sect.~\ref{subsec:2}.  
The final Sect.~\ref{subsec:combine} verifies, case by case according to $b$, that the prefix (EFX allocations of $M_{01}$) with the gap filling and
residual allocations (EFX allocations of the remaining items in $M_2$) concatenate to an EFX allocation.

The following simple composition observation will be used repeatedly.

\begin{lemma}[Composition]\label{lem:composition}
Let $P=(P_1,\ldots,P_n)$ be an allocation of some items and $B=(B_1,\ldots,B_n)$ an allocation of disjoint
items.  If
\[
 c_i(P_i)\le c_i(P_j)\qquad\forall i,j
\]
and
\[
 c_i(B_i)-1\le c_i(B_j)\qquad\forall i,j,
\]
then the combined allocation $X_i=P_i\cup B_i$ is EFX. 
More generally, $X$ is EFX whenever
\[
 c_i(P_i)-c_i(P_j)
 \le c_i(B_j)-c_i(B_i)+\min_{g\in X_i}c_i(g)
 \qquad\forall i,j.
\]
If the inequalities above hold only for a particular pair of agents $(i,j)$, then $i$ does not strongly envy $j$.
\end{lemma}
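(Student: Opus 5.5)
The plan is to prove the general inequality directly from the definition of EFX; the first claim is then the special case in which the envy terms vanish. Fix a pair $(i,j)$ and assume $X_i\neq\emptyset$, since otherwise there is nothing to show. By additivity,
\[
\tau_i(X)=c_i(P_i)+c_i(B_i)-\min_{g\in X_i}c_i(g),
\qquad
c_i(X_j)=c_i(P_j)+c_i(B_j).
\]
Hence $\tau_i(X)\le c_i(X_j)$ holds if and only if
\[
c_i(P_i)-c_i(P_j)\le c_i(B_j)-c_i(B_i)+\min_{g\in X_i}c_i(g).
\]
This is exactly the hypothesised inequality. So whenever it holds for the pair $(i,j)$, agent $i$ does not strongly envy $j$. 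Requiring it for all pairs gives that $X$ is EFX. This also settles the last sentence of the statement about a single pair.

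For the first claim, I will show that the two stated conditions imply the general inequality. Since all costs lie in $\{1,r\}$ with $r>1$, we have $\min_{g\in X_i}c_i(g)\ge 1$ whenever $X_i\neq\emptyset$. The hypothesis $c_i(B_i)-1\le c_i(B_j)$ then gives
\[
c_i(B_j)-c_i(B_i)+\min_{g\in X_i}c_i(g)\ge c_i(B_j)-c_i(B_i)+1\ge 0.
\]
The hypothesis $c_i(P_i)\le c_i(P_j)$ says the left-hand side $c_i(P_i)-c_i(P_j)$ is at most $0$. Chaining the two bounds yields the general inequality for every pair, so $X$ is EFX.

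The only point needing care is the bound $\min_{g\in X_i}c_i(g)\ge 1$. It uses that we are in the bi-valued setting with smallest cost normalised to $1$, and that $X_i$ is nonempty. The empty case is covered directly by the definition of EFX.
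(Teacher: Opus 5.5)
Your proof is correct and matches the paper's argument. The general inequality is just $\tau_i(X)\le c_i(X_j)$ rearranged using additivity, and the first claim follows because the left side is nonpositive while the right side is at least $\min_{g\in X_i}c_i(g)-1\ge 0$ for nonempty $X_i$. You also state explicitly that this bound relies on normalizing the smallest cost to $1$, which the paper leaves implicit.
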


\begin{proof}
The last displayed inequality is exactly the EFX inequality for the
combined allocation after moving the prefix terms to the left.  The first
part follows because the left-hand side is nonpositive and the right-hand side is at least $\min_{g\in X_i}c_i(g)-1$ which is nonnegative (unless $X_i=\emptyset$, in which case it is trivial that agent $i$ does not envy any agent).
\end{proof}

\subsection{The $M_{01}$ Prefix}
\label{subsec:01}
For each agent $i$, let
\[
 n_i=\left|\{g\in M_{01}:S(g)=\{i\}\}\right|
\]
be the number of $M_{01}$ items uniquely small for $i$.  Items in $M_{01}$
with $S(g)=\emptyset$ are all-large items.
Let $|M_{01}|=4a+b$ with $b\in\{0,1,2,3\}$.
As mentioned before, we will make sure each agent receives either $a$ items or $a+1$ items.

Fix quotas $q_i\in\{a,a+1\}$ with $\sum_i q_i=|M_{01}|$. 
We consider allocations of $M_{01}$ with quotas $q=(q_1,q_2,q_3,q_4)$ in which each agent $i$ first receives as many of her uniquely-small items as possible, up to quota $q_i$, and the remaining slots are filled with the remaining items.
We will show that all such allocations satisfy EFX.
It is also important that we have the freedom to choose which agents receive $a$ items and which agents receive $a+1$ items.
This freedom enables us to append the allocation of $M_2$ after the allocation of $M_{01}$.

We begin by defining this type of allocations.

\begin{definition}
    An allocation $P=(P_1,P_2,P_3,P_4)$ of $M_{01}$ is \emph{canonical} if, for each agent $i$, $|P_i|\in\{a,a+1\}$ and $P_i$ contains $\min\{|P_i|, n_i\}$ items that are small for agent $i$.
    In a canonical allocation with $b>0$, agents receiving $a$ items are called \emph{short} agents, and agents receiving $a+1$ items are called \emph{long} agents.
    For $b=0$ and $a>0$, every agent in a canonical is both long and short.
    For $b>0$, an allocation $P=(P_1,P_2,P_3,P_4)$ is \emph{super-canonical} if it is canonical and satisfies $c_i(P_i)\leq c_i(P_j)-r$ for every short agent $i$ and every long agent $j$.
\end{definition}

We will need the following lemma, which will be repeatedly used later.

\begin{lemma}\label{lem:canonical}
    We have the following properties regarding canonical allocations.
    \begin{enumerate}[label=(\alph*)]
        \item Every canonical allocation is EFX. For $b=0$, every canonical allocation is envy-free.
        \item For any $b>0$, there always exists a super-canonical allocation.
        \item If $a>0$ and $b=2$, for any partition $N_1,N_2$ of $N$ with $|N_1|=|N_2|=2$, there exist $i\in N_1$ and $j\in N_2$ such that there exists a super-canonical allocation $P$ with $|P_i|=|P_j|=a$.
    \end{enumerate}
\end{lemma}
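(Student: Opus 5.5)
We plan to prove the three parts separately. Each part only uses the definition of a canonical allocation and the fact that an item in $M_{01}$ is small for at most one agent. The most useful consequence is this: if agent $i$ receives all $n_i$ of her uniquely-small items, then every other bundle consists solely of items that are large for $i$.

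For (a), fix agents $i$ and $j$ and split according to whether $n_i\ge |P_i|$.
\begin{itemize}
\item If $n_i\ge |P_i|$, then $P_i$ is entirely small for $i$. Hence $\tau_i(P_i)\le |P_i|-1\le a\le |P_j|\le c_i(P_j)$.
\item If $n_i<|P_i|$, then $i$ holds all of her small items. So $c_i(P_j)=r|P_j|\ge ra$ for every $j\ne i$. On the other side, removing the cheapest item from $P_i$ leaves at most $|P_i|-1\le a$ items, each of cost at most $r$, so $\tau_i(P_i)\le ra$.
\end{itemize}
For $b=0$ all bundles have size $a$. In the first case $c_i(P_i)=a\le c_i(P_j)$. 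In the second case $c_i(P_i)\le ra=c_i(P_j)$. Hence the allocation is envy-free.

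For (b), we make the $4-b$ agents with the smallest $n_i$ short (ties broken arbitrarily) and build any canonical allocation with these quotas. Take a short agent $i$ and a long agent $j$.
\begin{itemize}
\item If $n_i\le a$, then $i$ holds all her small items. Thus $c_i(P_i)\le ra$ while $c_i(P_j)=r(a+1)$, which gives the required gap of $r$.
\item If $n_i\ge a+1$, then by the choice of short agents every long $j$ has $n_j\ge n_i\ge a+1$. So $P_j$ consists of $a+1$ items small for $j$, hence large for $i$. Therefore $c_i(P_j)=r(a+1)\ge a+r=c_i(P_i)+r$, using $r\ge 1$.
\end{itemize}
More generally, the super-canonical condition for a short $i$ with $n_i\ge a+1$ holds as soon as each long bundle contains at least one item large for $i$. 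This is because then $c_i(P_j)\ge a+r$ while $c_i(P_i)=a$.

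For (c), we start from the analogous choice: let $i$ minimize $n$ over $N_1$, let $j$ minimize $n$ over $N_2$, and make $i,j$ short. Write $\ell$ and $k$ for the partners of $i$ and $j$, which are long. Short agents with $n\le a$ are handled exactly as in (b). Long partners in the same group as the short agent are also fine, since their $n$ is at least as large.

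The only dangerous configuration is a short agent, say $i$, with $n_i\ge a+1$ facing the long agent $k$ of the other group, where $n_k\le a$ and $k$'s filler items might all be small for $i$. Here we exploit the freedom in choosing the filler items. We first give $k$ a filler item that is not small for $i$ whenever such an item remains in the leftover pool, for example an all-large item or an excess item of $\ell$. The $i$-small leftovers are then routed to the other agents who need fillers, such as $j$ when $n_j\le a$.

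If every leftover item is small for $i$, we intend to change the choice of short agent in $N_2$: swap the roles of $j$ and $k$, or rebalance by giving $i$ more of her own items. To justify this, we will count items. In this configuration $n_\ell=a+1$ exactly and there are no all-large items, and the counting should show that either the swap yields a valid super-canonical allocation or $i$ can absorb enough of her own items. We expect this filler bookkeeping in (c), including the symmetric case where $j$ is the problematic short agent, to be the main obstacle and the only place where a genuine case analysis is needed.
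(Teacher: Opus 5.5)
Parts (a) and (b) are correct. Your (a) is the paper's argument. Your (b) uses a different selection rule from the paper. The paper shows at most $b$ agents are heavy ($n_i\ge 2a+1$) and lets any $4-b$ light agents be short: at most $n_i-a\le a$ of a short agent's small items lie outside $P_i$, so every long bundle of size $a+1$ contains an item large for her. You instead make the $4-b$ agents with smallest $n_i$ short, so a short agent with $n_i\ge a+1$ faces long bundles made entirely of the long agents' own small items. This avoids the pigeonhole count. The paper's version, however, lets \emph{any} light agent be short under \emph{any} filling, and that flexibility is what makes its (c) short.

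Part (c) has a genuine gap: it stops at the critical case, and the remedies you suggest would not work. You correctly reduce to a short $i$ with $n_i\ge a+1$ facing the long agent $k$ of the other group. This is dangerous only if $P_k$ contains no item small for $k$, which forces $n_k=0$ and then $n_j\le n_k=0$. So swapping $j$ and $k$ changes nothing. Letting $i$ absorb more of her own items would break her quota $a$.

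What is missing is a count showing that your bad sub-case (every leftover item small for $i$) is empty. This count is exactly where the hypothesis $a>0$ enters, and your argument never uses that hypothesis. With $n_j=n_k=0$, the leftover pool has exactly $2a+1$ items (the slots of $j$ and $k$), of which $n_i-a$ are small for $i$. Since $n_i\le n_\ell$ and $n_i+n_\ell\le 4a+2$, we have $n_i\le 2a+1$. Hence at least $3a+1-n_i\ge a\ge 1$ leftover items are not small for $i$. Giving one of them to $k$ yields $c_i(P_k)\ge a+r=c_i(P_i)+r$. In your own terms: all leftovers small for $i$ forces $n_\ell=a+1$ and no all-large items, hence $n_i=3a+1>n_\ell$, contradicting the choice of $i$.

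The symmetric case with $j$ problematic is identical, and the two cases cannot occur together, since the first needs $n_j=0$. The filler choice actually matters only when $n_i-a\ge a+1$, which forces $n_i=n_\ell=2a+1$. That is precisely the both-heavy configuration the paper settles by explicit construction. For $a=0$ your bad configuration genuinely occurs (two items, each small for a different agent of $N_1$), and no choice of short agents works. This shows the $a>0$ hypothesis cannot be dropped.
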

\begin{proof}
For a canonical allocation $P$, write
\[
    q_i:=|P_i|\in\{a,a+1\}.
\]
By definition, $P_i$ contains exactly $\min\{q_i,n_i\}$ items that
are small for agent $i$.

\paragraph{Proof of (a).}

Firstly suppose that $n_i\ge q_i$.  Then all items in $P_i$ are small
for agent $i$, and hence
\[
    c_i(P_i)=q_i
    \qquad\text{and}\qquad
    \tau_i(P_i)=q_i-1.
\]
Every item has cost at least $1$ to agent $i$, so
\[
    c_i(P_j)\ge |P_j|=q_j.
\]
Since $q_i,q_j\in\{a,a+1\}$, we have $q_i-1\le q_j$.  Therefore,
\[
    \tau_i(P_i)=q_i-1\le q_j\le c_i(P_j).
\]

Now suppose that $n_i<q_i$.  Since $P$ is canonical, all $n_i$
items that are small for agent $i$ belong to $P_i$.  Consequently,
every item in $P_j$ (with $j\neq i$) is large for agent $i$.

If $n_i=0$, then all items in $P_i$ are large for agent $i$, and
\[
    \tau_i(P_i)=r(q_i-1)
    \le rq_j
    =c_i(P_j).
\]

If $n_i>0$, then the minimum-cost item in $P_i$ has cost $1$, and
\[
\begin{aligned}
    \tau_i(P_i)
        &=n_i+r(q_i-n_i)-1\\
        &=r(q_i-1)-(r-1)(n_i-1)\\
        &\le r(q_i-1)\le rq_j=c_i(P_j).
\end{aligned}
\]

Thus $P$ is EFX.
It remains to prove the stronger statement for $b=0$.  In this case,
$q_i=q_j=a$ for every pair of agents.

If $n_i\ge a$, then $P_i$ consists entirely of items that are small
for agent $i$, and therefore,
\[
    c_i(P_i)=a\le c_i(P_j).
\]

If $n_i<a$, then all items that are small for agent $i$ belong to
$P_i$, while every item in $P_j$ is large for agent $i$.  Hence,
\[
    c_i(P_i)=n_i+r(a-n_i)\le ra=c_i(P_j).
\]

Therefore, $P$ is envy-free when $b=0$.

\paragraph{Proof of (b).}
Call an agent $i$ \emph{heavy} if $n_i\ge 2a+1$, and \emph{light} otherwise, i.e., if $n_i\le 2a$.

We first show that there are at most $b$ heavy agents.  Indeed, if
there were at least $b+1$ heavy agents, then
\[
    \sum_{i\in N} n_i\ge (b+1)(2a+1).
\]
For each $b\in\{1,2,3\}$,
\[
    (b+1)(2a+1)>4a+b=|M_{01}|,
\]
contradicting
\[
    \sum_{i\in N}n_i\le |M_{01}|.
\]

Thus, there are at least $4-b$ light agents.

Choose arbitrarily $4-b$ light agents to be short and the remaining $b$ agents
to be long, and construct a canonical allocation with these quotas.
We show that this allocation is super-canonical.

Fix a short agent $i$ and a long agent $j$.  Suppose first that $n_i<a$.  All items that are small for agent $i$
belong to $P_i$, so every item in $P_j$ is large for $i$.  Therefore,
\[
    c_i(P_j)=r(a+1),
\]
whereas
\[
    c_i(P_i)=n_i+r(a-n_i)\le ra=c_i(P_j)-r.
\]

Now suppose that
\[
    a\le n_i\le 2a.
\]
Then $P_i$ consists of $a$ items that are small for agent $i$, and
hence
\[
    c_i(P_i)=a.
\]
After these $a$ items are assigned to $P_i$, at most $$n_i-a\le a$$ items that are small for agent $i$ remain.  Since $P_j$ contains
$a+1$ items, at least one item in $P_j$ must be large for agent $i$.
Consequently,
\[
    c_i(P_j)\ge a+r,
\]
and therefore
\[
    c_i(P_i)=a\le c_i(P_j)-r.
\]

Thus, the allocation is super-canonical.

\paragraph{Proof of (c).}
Let
\[
    N=N_1\mathbin{\dot\cup}N_2,
    \qquad
    |N_1|=|N_2|=2.
\]

Since $b=2$, a canonical allocation has exactly two short agents. Note that there are at least 2 light agents in $b=2$
 case.
 
If each of $N_1$ and $N_2$ contains a light agent, choose a light
agent $i\in N_1$ and a light agent $j\in N_2$ as the two short
agents.  The construction and proof in part~(b) then give a
super-canonical allocation satisfying
\[
    |P_i|=|P_j|=a.
\]

It remains to consider the case in which one of the two sets contains
no light agent.  Without loss of generality, write $N_1=\{i,k\}$ where both $i$ and $k$ are heavy.  Then
\[
    n_i,n_k\ge 2a+1.
\]
Since
\[
    n_i+n_k\le\sum_{h\in N}n_h
    \le |M_{01}|=4a+2,
\]
we must have
\[
    n_i=n_k=2a+1.
\]

Moreover, equality must hold throughout, so the two agents in
$N_2$ have no uniquely-small items, and every item in $M_{01}$ is
uniquely small for either $i$ or $k$.

Write $N_2=\{j,\ell\}$. We choose $i$ and $j$ as the short agents and $k$ and $\ell$ as the
long agents.  Construct the allocation as follows:
\begin{itemize}
    \item[1.] give agent $i$ exactly $a$ items that are small for $i$;
    \item[2.] give agent $k$ exactly $a+1$ items that are small for $k$;
    \item[3.] give agent $\ell$ exactly $a$ of the remaining items that
    are small for $i$, together with one of the remaining items that
    is small for $k$;
    \item[4.] give agent $j$ the last remaining item that is small for
    $i$, together with the remaining $a-1$ items that are small for
    $k$.
\end{itemize}
The last step is well defined because $a>0$.

The resulting bundle sizes are
\[
    |P_i|=|P_j|=a,
    \qquad
    |P_k|=|P_\ell|=a+1.
\]
Moreover, agents $i$ and $k$ receive as many of their own uniquely-small
items as their quotas allow, while agents $j$ and $\ell$ have no
uniquely-small items.  Hence, the allocation is canonical.

We verify the super-canonical inequalities.  Since $n_j=0$, every
item is large for agent $j$.  Therefore,
\[
    c_j(P_j)=ra \le c_j(P_k)-r=c_j(P_\ell)-r=r(a+1)-r = ra.
\]

For agent $i$, all $a$ items in $P_i$ are small, and therefore,
\[
    c_i(P_i)=a.
\]
Every item in $P_k$ is large for agent $i$, so
\[
    c_i(P_k)=r(a+1)\ge a+r.
\]
The bundle $P_\ell$ contains $a$ items that are small for agent $i$
and one item that is large for agent $i$, and hence
\[
    c_i(P_\ell)=a+r.
\]
Consequently,
\[
    c_i(P_i)\le c_i(P_k)-r
    \qquad\text{and}\qquad
    c_i(P_i)=c_i(P_\ell)-r.
\]

Thus, the constructed allocation is super-canonical, with one short
agent in each of $N_1$ and $N_2$.
\end{proof}

\subsection{The $M_2$ Multigraph Algorithm}
\label{subsec:2}
We now give a complete allocation procedure for the items in $M_2$.
Each item is small for some agents $i$ and $j$ and
large for the other two agents, and we say that the type of this item is $(i,j)$, or $ij$ for short.
In this section, we consider the case where the instance only has $M_{2}$ items.
We represent the instance by a
multigraph $R$ on $N=\{1,2,3,4\}$, where an item of type $(i,j)$ is an edge $(i,j)$.
We slightly abuse the notation by letting $R$ also represent the edge set, i.e., $R$ exactly corresponds to items in $M_2$.
Let $x_{ij}$ denote the multiplicity of edge $(i,j)$ and let $m_R=|R|$.

For $U\subseteq N$, define
\[
 \operatorname{inc}_R(U)=\left|\{e\in R:e\cap U\neq\emptyset\}\right|
\]
and
\[
 \Delta_R(U)=\frac{|U|m_R}{4}-\operatorname{inc}_R(U).
\]
We call $U$ \emph{deficient} if $\Delta_R(U)>0$.  
Intuitively, we will make the allocation mostly balanced (except for one special case).
If there is no deficient set, we can ``almost evenly'' allocate items in $M_2$ to the four agents such that each agent receives only small items.
If $U$ is deficient, there are not that many items that are small for some agents in $U$, and we need to fill the bundles of agents in $U$ by some large items (that are small for some agents in $N\setminus U$). 

Since every edge is
incident to every three-agent set, no set of size three is deficient.  Hence,
only singleton and pair deficiencies need to be considered.  In particular,
if $U=\{i,j\}$ and $N\setminus U=\{k,\ell\}$, then
\[
 \Delta_R(U)=x_{k\ell}-\frac{m_R}{2}.
\]

We shall repeatedly use the following standard orientation fact.

\begin{lemma}[Balanced orientation]\label{lem:balancedorientation}
Let $G=(V,E)$ be a multigraph and let $(b_i)_{i\in V}$ be nonnegative
integers satisfying $\sum_{i\in V}b_i=|E|$.  There exists an assignment of
every edge to one of its endpoints such that vertex $i$ receives exactly
$b_i$ edges if and only if
\begin{equation}\label{eqn:orientation}
    \sum_{i\in U}b_i\le \operatorname{inc}_G(U)
 \qquad\text{for every }U\subseteq V.
\end{equation}

If we have \(\sum_{i\in V}b_i\le |E|\) instead, and \eqref{eqn:orientation} is satisfied, then one can assign some edges to their endpoints so that vertex \(i\)
receives exactly \(b_i\) edges, leaving the remaining edges unassigned.
\end{lemma}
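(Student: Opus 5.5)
We prove the lemma by reducing it to a maximum-flow (equivalently, Hall-type) argument on a bipartite network, with the two directions handled separately.

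\emph{Necessity.} Suppose an assignment exists in which vertex $i$ receives exactly $b_i$ edges. Every edge assigned to a vertex of $U$ has an endpoint in $U$. The edges assigned to vertices of $U$ are distinct, so
\[
\sum_{i\in U}b_i\le \operatorname{inc}_G(U).
\]
This holds in both the exact and the relaxed version.

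\emph{Sufficiency.} Build a network with a source $s$, a node for each edge $e\in E$, a node for each vertex $i\in V$, and a sink $t$. The arcs are:
\begin{itemize}
  \item $s\to e$ with capacity $1$ for every $e\in E$;
  \item $e\to i$ with capacity $1$ whenever $i$ is an endpoint of $e$;
  \item $i\to t$ with capacity $b_i$ for every $i\in V$.
\end{itemize}
An integral flow saturating every arc $i\to t$ is exactly an assignment of edges to endpoints in which vertex $i$ receives $b_i$ edges; unused edges carry no flow. By max-flow/min-cut and the integrality theorem, it suffices to show that every $s$–$t$ cut has capacity at least $\sum_i b_i$.

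Consider a cut, and let $W\subseteq V$ be the set of vertex-nodes on the sink side. The arcs $i\to t$ with $i\notin W$ contribute $\sum_{i\notin W}b_i$. Now take an edge-node $e$ that has an endpoint in $W$. Either $e$ is on the source side, and then an arc $e\to i$ with $i\in W$ is cut, contributing at least $1$; or $e$ is on the sink side, and then the arc $s\to e$ is cut, contributing $1$. Hence the cut capacity is at least
\[
\sum_{i\notin W}b_i+\operatorname{inc}_G(W)\;\ge\;\sum_{i\notin W}b_i+\sum_{i\in W}b_i=\sum_i b_i,
\]
where the inequality uses \eqref{eqn:orientation} applied to $U=W$.

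So an integral flow of value $\sum_i b_i$ exists, and it saturates every arc $i\to t$. When $\sum_i b_i=|E|$, this flow value also forces every arc $s\to e$ to be saturated, so every edge is assigned, which gives the first statement. When $\sum_i b_i\le|E|$, the same flow assigns some edges to their endpoints with exact counts $b_i$ and leaves the rest unassigned, which gives the second statement.

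The only delicate step is the cut accounting: we must ensure each edge meeting $W$ is charged at least once and that edges with both endpoints in $W$ are not double-counted. Charging each edge-node at most through its single arc from $s$, or through one of its outgoing cut arcs, handles this. Alternatively, the whole argument can be phrased via Hall's theorem by replicating each vertex $i$ into $b_i$ copies and matching them to edges.
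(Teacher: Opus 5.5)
Your proof is correct and takes the same approach as the paper: the same source--edge--vertex--sink flow network with unit capacities on the first two layers and capacity $b_i$ into the sink, a min-cut bound derived from the condition on $\operatorname{inc}_G$, and integrality of maximum flow. The only difference is that you carry out the cut accounting explicitly, charging each edge-node that meets $W$ exactly once, whereas the paper simply asserts that the displayed inequalities are the Hall-type cut conditions.
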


\begin{proof}
For the first part, necessity is immediate.  For sufficiency, construct a bipartite flow network
with one node for every edge and one node for every vertex.  The source is
connected to each edge-node with capacity $1$, every edge-node is connected
to its two endpoints with capacity $1$, and vertex $i$ is connected to the
sink with capacity $b_i$.  The displayed inequalities are exactly the
Hall--type cut conditions for a flow of value $|E|$.  Integrality of the
network yields an integral flow, and hence the desired orientation.

The second part of the lemma holds for the same arguments above (the sufficiency part).
\end{proof}

\begin{lemma}[EFX Allocation of $M_2$]\label{lem:M2}
There is an EFX allocation of $M_2$.
\end{lemma}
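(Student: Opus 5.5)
We will prove Lemma~\ref{lem:M2} by an almost balanced orientation of the multigraph $R$, with one exceptional configuration handled separately. Write $m_R=4s+t$ with $t\in\{0,1,2,3\}$. Assign every agent a quota $b_i\in\{s,s+1\}$ with $\sum_i b_i=m_R$. If these quotas satisfy \eqref{eqn:orientation} for all $U$, then Lemma~\ref{lem:balancedorientation} gives an allocation in which every agent receives only items that are small for her. In that allocation agent $i$ has $\tau_i=b_i-1\le s\le |B_j|\le c_i(B_j)$, so the allocation is EFX. The first step is therefore to show that whenever $R$ has no deficient set, the quotas can be chosen to meet the cut conditions. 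Only singletons and pairs matter, and we choose which agents get $s+1$ by preferring agents of large degree. The slack from $\Delta_R(U)\le 0$ plus integrality should absorb the rounding.

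The second step handles deficient sets. A deficient singleton $\{i\}$ has $\deg(i)<m_R/4$. A deficient pair $\{i,j\}$ has $x_{k\ell}>m_R/2$, and then $\{k,\ell\}$ has most of the items. In the pair case we give $i$ and $j$ all items incident to them, which are small for them, and fill their bundles up to a common target using $k\ell$ items. Those items are large for $i$ and $j$, so each of them holds at most one large item, or we balance costs directly. Agents $k$ and $l$ split the remaining $k\ell$ items evenly, all small, so they do not strongly envy anyone. For $i$ and $j$ we check the EFX inequality: their bundle costs are at most about $s+r$ with the large item as the non-minimal part, while $k$'s and $l$'s bundles contain many items. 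The threshold inequality can be enforced by choosing the number of large items each receives as the smallest number such that their cost is within one small item of the others' bundle sizes. A deficient singleton is treated analogously: $i$ takes all its incident edges plus enough large items to reach roughly a quarter, and the other three agents are balanced by Lemma~\ref{lem:balancedorientation} applied to the residual graph. On three vertices, cut conditions reduce to pair/singleton degree checks, which we verify from the fact that no pair among them is deficient.

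The main obstacle is the deficient case. There, agents forced to take large items may end up with a bundle whose cost exceeds another agent's bundle by more than one item, especially when $r$ is huge and the number of large items needed is odd. I would first try to give each deficient agent exactly $\lceil \text{deficit}\rceil$ large items, and let the nondeficient agents absorb the rounding by getting an extra small item. If that fails, I would use the alternative of giving a deficient agent fewer items in total (e.g.\ only her small items plus at most one large one). Since a lighter bundle only hurts others' envy toward her, I would then verify that the others, holding only small items with counts differing by at most one, remain EFX toward her. A small number of boundary configurations, for example when several singletons and a pair are deficient at once, would be checked by hand.
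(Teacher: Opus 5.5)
Your overall decomposition matches the paper's: a balanced orientation via Lemma~\ref{lem:balancedorientation} when no set is deficient, and otherwise the deficient agents take all their incident edges and are topped up with edges that are large for them. However, the deficient cases carry all the difficulty, and your treatment of them is only a plan that fails as written in two places.

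\textbf{Which deficient set to treat.} You never decide this, and you defer overlapping deficiencies to ``checked by hand.'' The paper takes $S$ maximizing $\Delta_R$, with ties broken toward smaller sets. This maximality is exactly what makes the constructions feasible:
\begin{itemize}
\item $\Delta_R(\{1\})\ge\Delta_R(\{1,i\})$ gives $\deg_H(i)\ge m/4$, which the triangle orientation needs.
\item $\Delta_R(\{1,2\})\ge\Delta_R(\{1\}),\Delta_R(\{2\})$ gives $b,c\le m/4$, so the cross edges fit under the target $T$.
\end{itemize}
Your justification, ``no pair among them is deficient,'' is beside the point. Pairs inside $\{2,3,4\}$ are never deficient when $\{1\}$ is; the relevant pairs are $\{1,i\}$. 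Concretely, take $m=8$ with one $(1,2)$-item and seven $(3,4)$-items. Then $\{1\}$ is deficient, but your singleton construction needs the orientation to give agent $2$ two triangle edges while $\deg_H(2)=0$. Here $\{1,2\}$ is more deficient and must be used instead.

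There is also a smaller hole in your Step~1. ``Prefer large degree'' fails at ties when $m_R\equiv 2\pmod 4$. For example, with one $(1,2)$-item and one $(3,4)$-item, all degrees equal $1$, and giving the extra slots to agents $1,2$ violates the cut condition for $\{1,2\}$. The correct rule is that the two agents with the extra slot must not be a pair whose complementary edge has multiplicity $2q+1$.

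\textbf{The comparison inside the deficient pair.} You never check whether $i$ and $j$ envy each other, and that is where the real constraints sit. Your claim that each of them holds at most one large item is false. If all items have type $(3,4)$, the four bundle sizes must pairwise differ by at most one, so agents $1,2$ each hold about $m/4$ large items.

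Your fallback, giving a deficient agent fewer items, fails in the other direction. Agents $k,\ell$ hold only small items, and when most items are of type $(k,\ell)$, $c_k(B_i)$ is tiny while $|B_k|-1$ is large, so they strongly envy her.

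With equal sizes $T$, the paper needs $\ell_1\le c+\ell_2$ and $\ell_2\le b+\ell_1$. It makes these compatible with the size constraints by choosing the split $a_1$ of the $(1,2)$-edges from the interval \eqref{eqn:maxmin}, and nonemptiness of that interval uses $b+c\ge 1$.

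When $b=c=0$ and $a\ge 3$ is odd, no equal-size split is EFX between agents $1$ and $2$ once $r>2$: the agent with fewer $(1,2)$-items strongly envies the other. The paper therefore abandons a common target size. It gives both agents the same number $L$ of $(3,4)$-items, lets their sizes differ by one, and picks $L$ from the $r$-dependent window \eqref{eqn:intervalL}; the cases $a=0$ and $a=1$ are handled separately.

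This is precisely the ``large $r$, odd count'' obstacle you flagged, and your proposal does not resolve it.
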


\begin{proof}
Write $m=m_R$ and let $S\subseteq N$ be a maximum-deficiency set (i.e., with the maximum $\Delta_R(S)$), with ties broken in favor of smaller cardinality.  If $\Delta_R(S)\le 0$, then no set is deficient.  Otherwise, $S$ is either a singleton or a pair.

\paragraph{Case 1: no deficient set.}
Write $m=4q+s$, where $s\in\{0,1,2,3\}$.  We choose integers
$b_i\in\{q,q+1\}$ with $\sum_i b_i=m$ and then orient every edge to one of
its endpoints so that agent $i$ receives exactly $b_i$ items.

When $s\neq 2$, any choice of the $s$ agents with quota $q+1$ satisfies
\[
 \sum_{i\in U}b_i\le
 \left\lceil\frac{|U|m}{4}\right\rceil
 \le \operatorname{inc}_R(U)
 \qquad\text{for every }U\subseteq N.
\]
When $s=2$, the only possible obstruction is a pair $U$ satisfying
$\operatorname{inc}_R(U)=2q+1$.  Such a tight pair corresponds to an edge
type of multiplicity $2q+1$ on the complementary pair.  Since
$m=4q+2$, there are at most two tight pairs.  We therefore choose the two
agents with demand $q+1$ so that they do not form a tight pair.  Lemma~\ref{lem:balancedorientation} then applies.

Every item is assigned to an endpoint and is therefore small for its owner.
Moreover, every bundle has size $q$ or $q+1$.  Thus, for every agent $i$,
\[
 \tau_i(B)=|B_i|-1\le q,
\]
whereas every other bundle contains at least $q$ items, each of cost at least
$1$ to agent $i$.  Hence, $c_i(B_j)\ge q\ge \tau_i(B)$ for all $i,j$, and the
allocation is EFX.

\paragraph{Case 2: a singleton is maximally deficient.}
Assume, after relabeling, that $S=\{1\}$.  Let
\[
 d_1=x_{12}+x_{13}+x_{14}
\]
be the number of edges incident to agent $1$, and let $q=\lfloor m/4\rfloor$.
Since $\{1\}$ is deficient, $d_1<m/4$, and hence $d_1\le q$.

Assign all $d_1$ incident edges to agent $1$.  These items are small for her.
We shall additionally assign exactly $q-d_1$ edges from the triangle on
$H=\{2,3,4\}$ to agent $1$; these items are large for her.  All remaining
triangle edges will be oriented to their endpoints, so agents $2,3,4$ receive
only small items.
We will first show that this is always possible by giving the explicit construction, and then verify the EFX property.

For $i\in H$, let $\deg_H(i)$ be the degree of $i$ in the triangle.  If
$\{j,k\}=H\setminus\{i\}$, maximality of the deficiency of $\{1\}$ gives
\[
 \frac m4-d_1=\Delta_R(\{1\})
 \ge \Delta_R(\{1,i\})
 =x_{jk}-\frac m2.
\]
Consequently,
\[
 x_{jk}\le \frac{3m}{4}-d_1
 \quad\text{and hence}\quad
 \deg_H(i)=(m-d_1)-x_{jk}\ge \frac m4.
\]
Choose targets $b_i\in\{q,q+1\}$ for $i\in H$ such that
\[
 b_2+b_3+b_4=m-q.
\]
The preceding degree bounds imply $b_i\le \deg_H(i)$.  Moreover, every pair
of vertices in the triangle is incident to all $m-d_1$ triangle edges, and
$m-d_1\ge m-q=b_2+b_3+b_4$.  Therefore, Lemma~\ref{lem:balancedorientation}
assigns exactly $b_i$ triangle edges to each $i\in H$.  The unassigned
triangle edges are given to agent $1$, whose total bundle size is
\[
 d_1+(m-d_1)-(m-q)=q.
\]

We verify EFX.  Each of agents $2,3,4$ receives only small items and at most
$q+1$ items, so her EFX threshold is at most $q$.  Every bundle has size at
least $q$, and hence costs every agent at least $q$.

For agent $1$, every bundle of an agent in $H$ consists entirely of triangle
edges and therefore entirely of items that are large for agent $1$.  Such a
bundle has size at least $q$.  If $d_1>0$, then agent $1$ owns a small item and
\[
 \tau_1(B)
 =q-1+(r-1)(q-d_1)
 \le rq-1
 <rq
 \le c_1(B_j)
 \qquad(j\in H).
\]
If \(d_1=0\) and \(q=0\), then \(B_1=\emptyset\), so agent \(1\)'s
EFX constraints are vacuous.  If \(d_1=0\) and \(q>0\), then all items
in \(B_1\) are large for agent \(1\), and
\[
    \tau_1(B)=r(q-1)\le rq\le c_1(B_j)
    \qquad(j\in H).
\]
Thus, the allocation is EFX.

\paragraph{Case 3: a pair is maximally deficient.}
Assume, after relabeling, that $S=\{1,2\}$.  Set
\[
 a=x_{12},\qquad
 b=x_{13}+x_{14},\qquad
 c=x_{23}+x_{24},\qquad
 f=x_{34}.
\]
Then $m=a+b+c+f$, and
\[
 \Delta_R(\{1,2\})=f-\frac m2>0.
\]
Since $\{1,2\}$ has maximum deficiency,
\[
 f-\frac m2\ge \Delta_R(\{1\})
 =\frac m4-(a+b)
\]
and similarly
\[
 f-\frac m2\ge \Delta_R(\{2\})
 =\frac m4-(a+c).
\]
Equivalently,
\begin{equation}\label{eqn:bcbound}
     b\le \frac m4,
 \qquad
 c\le \frac m4.
\end{equation}
We distinguish whether cross edges are present.

\subparagraph{Case 3.1: $b+c>0$.}
Write $m=4q+s$, with $s\in\{0,1,2,3\}$, and define
\[
 T=\left\lfloor\frac{m+1}{4}\right\rfloor
 =
 \begin{cases}
 q, & s=0,1,2,\\
 q+1, & s=3.
 \end{cases}
\]
By \eqref{eqn:bcbound} and that $b,c$ are integers, we have $b,c\le T$.  We will construct the allocation so that agents $1$ and $2$
receive exactly $T$ items each, and each of agents $3$ and $4$ receives either $q$ or $q+1$ items, depending on the value of $s$ (or $m=4q+s$).
Specifically,
\begin{itemize}
    \item if $s=0$, each of the four agents receives $q$ items;
    \item if $s=1$, each of agents $1$ and $2$ receives $q=T$ items; one of agents $3$ and $4$ receives $q$ items, and the other one receives $q+1$ items;
    \item if $s=2$, each of agents $1$ and $2$ receives $q=T$ items, and each of agents $3$ and $4$ receives $q+1$ items;
    \item if $s=3$, each of agents $1$ and $2$ receives $q+1=T$ items; one of agents $3$ and $4$ receives $q$ items, and the other one receives $q+1$ items.
\end{itemize}
In addition to the above requirements, we will assign all edges of types $13$ and $14$ to agent $1$, and all edges of types $23$ and $24$ to agent $2$.  
We then split the $a$ edges of type $12$ and some edges of type $34$ between agents $1$ and $2$.
In the next few paragraphs, we will describe how those $a$ edges and those type $34$ edges are split among agents $1$ and $2$.

Let
\[
 L=2T-a-b-c.
\]
Since $f>m/2$, the number $a+b+c=m-f$ is at most
$2q-1,2q,2q,2q+1$ for $s=0,1,2,3$, respectively.  In all cases this is at
most $2T$, and hence $L\ge 0$.

We choose nonnegative integers $a_1,a_2,\ell_1,\ell_2$ satisfying
\[
 a_1+a_2=a,
 \qquad
 \ell_1+\ell_2=L,
\]
\begin{equation}\label{eqn:eqT}
    b+a_1+\ell_1=T,
 \qquad
 c+a_2+\ell_2=T,
\end{equation}
and
\begin{equation}\label{eqn:ineqL}
    \ell_1\le c+\ell_2,
 \qquad
 \ell_2\le b+\ell_1.
\end{equation}
We will allocate $a_1$ type $12$ edges, $\ell_1$ type $34$ edges, and $b$ types $13$ and $14$ edges to agent $1$; we will allocate $a_2$ type $12$ edges, $\ell_2$ type $34$ edges, and $c$ types $23$ and $24$ edges to agent $2$.
Intuitively, Eqn.~(\ref{eqn:eqT}) guarantees the bundle size constraints, and Eqn.~(\ref{eqn:ineqL}) guarantees the EFX conditions between agents $1$ and $2$.
In the next, we will first show that a feasible solution to the above system of equations exists, and then show that the constructed allocation is EFX.

To show the feasibility of the equation group, it is enough to choose an integer $a_1$ in the interval
\begin{equation}\label{eqn:maxmin}
     \max\left\{0,\,a-T+c,\,
 \left\lceil\frac{a-b}{2}\right\rceil\right\}
 \le a_1\le
 \min\left\{a,\,T-b,\,
 \left\lfloor\frac{a+c}{2}\right\rfloor\right\}.
\end{equation}
The interval is nonempty.  Indeed, we need to verify that all the three expressions in min are greater than those in max, i.e., we need to prove nine inequalities. Here we just give those non-trivial ones:
\begin{equation}\label{eqn:maxmin1}
  T-b \ge a-T+c  
\end{equation} holds because $a+b+c \le 2T$ as proved above;

\begin{equation}\label{eqn:maxmin2}
 a-T+c
 \le 
 \left\lfloor\frac{a+c}{2}\right\rfloor
\end{equation}
holds because $a+c \le 2T$ (by discussing the parity of $a+c$ and notice that $a,c,T$ are integers);

\begin{equation}\label{eqn:maxmin3}
 \left\lceil\frac{a-b}{2}\right\rceil
 \le 
 \left\lfloor\frac{a+c}{2}\right\rfloor
\end{equation}
holds because $b+c\ge 1$.

After choosing $a_1$, set
\[
 a_2=a-a_1,
 \qquad
 \ell_1=T-b-a_1,
 \qquad
 \ell_2=T-c-a_2=a_1-(a-T+c).
\]
Give agent $i$ exactly $a_i$ edges of type $12$ and $\ell_i$ edges of type
$34$, for $i=1,2$.  The remaining
\[
 f-L=m-2T
\]
edges of type $34$ are split as evenly as possible between agents $3$ and
$4$.  Agents $3$ and $4$ receive only small items, their bundle sizes differ
by at most one, every bundle has size at least $q$, and the bundles of agents
$3$ and $4$ have size at least $T-1$.

We then show that the allocation is EFX.
Agents $3$ and $4$ have EFX thresholds at most $q$, whereas every bundle has
size at least $q$.  Hence, they satisfy EFX.

Consider agent $1$.  The only items in $B_1$ that are large for her are the
$\ell_1$ edges of type $34$.  In $B_2$, exactly $c+\ell_2$ items are large
for agent $1$.  By \eqref{eqn:ineqL},
\[
 \tau_1(B)
 \le T-1+(r-1)\ell_1
 \le T+(r-1)(c+\ell_2)
 =c_1(B_2).
\]
Each of $B_3$ and $B_4$ consists only of type-$34$ items and has size at least
$T-1$.  If $B_1$ contains a small item for agent $1$, then
$\ell_1\le T-1$ and
\[
 \tau_1(B)=T-1+(r-1)\ell_1
 \le r(T-1)
 \le c_1(B_j)
 \qquad(j=3,4).
\]
If $B_1$ contains no small item, then $\ell_1=T$ and
\[
 \tau_1(B)=r(T-1)\le c_1(B_j)
 \qquad(j=3,4).
\]
The argument for agent $2$ is symmetric, using the second inequality in \eqref{eqn:ineqL}.
Thus, the allocation is EFX.

\subparagraph{Case 3.2: $b=c=0$.}
Only edge types $12$ and $34$ remain, and pair deficiency implies $f>a$.
We further distinguish three subcases.

\medskip
\noindent\emph{Subcase 3.2.1: $a=0$.}
All items have type $34$.  Allocate them so that the four bundle sizes differ
by at most one.  Agents $1$ and $2$ regard every item as large, whereas agents
$3$ and $4$ regard every item as small.  Removing one item from a largest
bundle leaves a cost no larger than that of any smallest bundle, so the
allocation is EFX.

\medskip
\noindent\emph{Subcase 3.2.2: $a=1$.}
Write $f=4p+t$, where $t\in\{0,1,2,3\}$.  Give the unique type-$12$ item to
agent $1$.  Give
\[
 \ell_1=p,
 \qquad
 \ell_2=
 \begin{cases}
 p, & t=0,1,\\
 p+1, & t=2,3
 \end{cases}
\]
type-$34$ items to agents $1$ and $2$, respectively.  Split the remaining
type-$34$ items as evenly as possible between agents $3$ and $4$.  Let
$t^-$ and $t^+$ denote the smaller and larger of the two resulting bundle
sizes.  A direct check of the four values of $t$ gives
\begin{equation}\label{eqn:tcheck}
    t^-\ge \ell_1,
 \qquad
 t^-\ge \ell_2-1,
 \qquad
 t^+-1\le \ell_2.
\end{equation}
Agent $1$ has threshold $r\ell_1$ and values $B_2$ at $r\ell_2$; agent $2$
has threshold $r(\ell_2-1)$ and values $B_1$ at $1+r\ell_1$.  The first two
inequalities in \eqref{eqn:tcheck} give their comparisons with agents $3$ and $4$.
Agents $3$ and $4$ receive only small items and have threshold at most
$t^+-1$.  The last inequality in \eqref{eqn:tcheck} gives their comparison with $B_2$, while
\[
 c_3(B_1)=c_4(B_1)=r+\ell_1\ge \ell_2.
\]
Their mutual comparison follows from the even split.  Hence, the allocation is
EFX.

\medskip
\noindent\emph{Subcase 3.2.3: $a\ge 2$.}
Let
\[
 p^- =\left\lfloor\frac a2\right\rfloor,
 \qquad
 p^+ =\left\lceil\frac a2\right\rceil.
\]
Give $p^+$ type-$12$ items to agent $1$ and $p^-$ type-$12$ items to agent
$2$.  Define
\[
 \eta=
 \begin{cases}
 2, & f\text{ is even},\\
 1, & f\text{ is odd}.
 \end{cases}
\]
Choose an integer $L\ge 0$ satisfying
\begin{equation}\label{eqn:intervalL}
    \left\lceil\frac{f-2rp^- -\eta}{4}\right\rceil
 \le L\le
 \left\lfloor\frac{f-2p^+ +2}{4}\right\rfloor.
\end{equation}
Such an integer exists.  If $a$ is even or $f$ is even, the underlying real
interval has length greater than $1$, and its upper endpoint is nonnegative
because $f>a$.  If both $a$ and $f$ are odd, write $a=2p+1$ and $f=2F+1$.
Then $p=p^-=p^+-1\ge 1$, the floor of the upper endpoint equals
$\lfloor(F-p)/2\rfloor$, and the lower endpoint is strictly smaller than
$(F-p)/2-1/2$.  Hence its ceiling is at most the same integer.

Give $L$ type-$34$ items to each of agents $1$ and $2$, and split the
remaining $f-2L$ type-$34$ items as evenly as possible between agents $3$ and
$4$.  Let
\[
 t^- =\left\lfloor\frac{f-2L}{2}\right\rfloor,
 \qquad
 t^+ =\left\lceil\frac{f-2L}{2}\right\rceil.
\]
The upper bound in \eqref{eqn:intervalL} implies
\begin{equation}\label{eqn:t-}
     t^-\ge L+p^+-1,
\end{equation}
whereas the lower bound, together with the definition of $\eta$, implies
\begin{equation}\label{eqn:t+}
     t^+\le L+rp^-+1.
\end{equation}
Agents $1$ and $2$ satisfy EFX with respect to each other because they receive
the same number of type-$34$ items and their numbers of type-$12$ items differ
by at most one.  By \eqref{eqn:t-},
\[
 \tau_1(B)=p^++rL-1\le rt^-\le c_1(B_j)
 \qquad(j=3,4),
\]
and the analogous inequality for agent $2$ is weaker because $p^-\le p^+$.
Agents $3$ and $4$ receive only small items and have threshold at most
$t^+-1$.  Each of $B_1$ and $B_2$ costs them at least $L+rp^-$.  Therefore
\eqref{eqn:t+} yields their EFX comparisons with agents $1$ and $2$, and their mutual
comparison follows from the even split.  This completes the final subcase.

The three main cases exhaust all possibilities, and therefore every multiset
of $M_2$ items admits an EFX allocation.
\end{proof}

If our goal is to find an EFX allocation when the item pool is just $M_2$, we are already done.
However, in our actual goal, after the allocation of $M_{01}$ and with some items in $M_2$ are used for gap fillings, we need to find an EFX allocation for the remaining items in $M_2$, and this allocation is a \emph{residue} appended to the prefix.
We need some additional properties for the residual allocation of $M_2$ that makes the concatenated allocation EFX.

\begin{lemma}[Properties of the EFX allocations in Lemma~\ref{lem:M2}]\label{lem:M2properties}
The EFX allocation $B=(B_1,\ldots,B_4)$ ensured by Lemma~\ref{lem:M2} can be chosen to satisfy the
following additional properties.
\begin{enumerate}[label=(\roman*)]
\item Except in the two configurations described in~(ii), every possible envy is certified by removing a small item.  More precisely,
whenever $B_i$ contains an item that is small for agent $i$,
\[
 c_i(B_i)-1\le c_i(B_j)\qquad\forall j\in N,
\]
whereas every agent whose bundle contains only large items is envy-free:
\[
 c_i(B_i)\le c_i(B_j)\qquad\forall j\in N.
\]

\item Up to relabeling, the only configurations in which an agent whose
bundle contains only large items may fail to be envy-free, and hence
must use a large item as the EFX-removal item, are the following.
\begin{enumerate}[label=(\alph*)]
\item All items have type $(3,4)$ and
\[
 |R|=4q+3.
\]
Then agents $1,2$ receive $q$ and $q+1$ items, respectively, in either
order, while agents $3,4$ receive $q+1$ items each.  We are free to choose
which of agents $1,2$ receives $q+1$ items, and this agent is the unique agent who use a large item as the EFX-removal item.

\item The multiset $R$ consists of one item of type $(1,2)$ and
$4q+3$ items of type $(3,4)$.  One of agents $1,2$ receives the unique
type-$(1,2)$ item together with $q$ type-$(3,4)$ items, the other receives
$q+1$ type-$(3,4)$ items, and agents $3,4$ receive $q+1$ type-$(3,4)$
items each.  We are free to choose which of agents $1,2$ receives the unique
type-$(1,2)$ item.  The other agent is the unique agent who must use a large
item as the EFX-removal item.
\end{enumerate}

\item If every edge of $R$ has type $(1,2)$ or $(3,4)$, with
$x_{34}\ge x_{12}\ge 0$, then the allocation can be chosen so that agent $1$
is envy-free:
\[
 c_1(B_1)\le c_1(B_j)\qquad\forall j\in N.
\]
The same statement holds with agent $2$ in place of agent $1$.
\end{enumerate}
\end{lemma}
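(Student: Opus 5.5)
The plan is to re-walk the case analysis in the proof of Lemma~\ref{lem:M2}. For each case I would record, for every agent, whether her bundle contains a small item. Then I would strengthen the EFX verification given there into the sharper inequality $c_i(B_i)-1\le c_i(B_j)$ when she has a small item, and into $c_i(B_i)\le c_i(B_j)$ when she has none. Along the way I would note the only places where this fails; these should be exactly configurations (a) and (b) of~(ii).

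\textbf{Cases 1 and 2.} In Case~1 every agent owns only small items and has bundle size at most $q+1$. Every other bundle has size at least $q$, so $c_i(B_i)-1\le q\le c_i(B_j)$ and~(i) holds. In Case~2, agents $2,3,4$ are handled the same way. If $d_1>0$, agent~1 satisfies $c_1(B_1)-1=q-1+(r-1)(q-d_1)\le rq-1<c_1(B_j)$. If $d_1=0$, agent~1 owns $q$ large items while each $B_j$ with $j\in H$ has at least $q$ large-for-her items, so she is envy-free.

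\textbf{Case 3.1.} The displayed chain for agent~1 already bounds $T-1+(r-1)\ell_1$, which is exactly $c_1(B_1)-1$ when $B_1$ has a small item. If $B_1$ is all large, then $\ell_1=T$ and $b=a_1=0$. By \eqref{eqn:ineqL}, $T\le c+\ell_2$, so $c_1(B_2)\ge rT$. Also $c_1(B_j)\ge r|B_j|$ for $j=3,4$, and $|B_j|\ge T$ unless $s=3$. When $s=3$ I would first check that the all-large situation forces a contradiction with $b+c>0$ and the size bookkeeping. Failing that, I would choose which of agents $3,4$ gets $q+1$, or perturb $a_1$ inside \eqref{eqn:maxmin}, to avoid it.

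\textbf{Case 3.2 and part (iii).} This is where the exceptions live, and I expect it to be the main obstacle. In Subcase~3.2.1 I would split the $f$ items so that agents $3,4$ get the larger shares. Agents $1,2$ then hold all-large bundles of size at most the minimum size, and so are envy-free, unless $f\equiv 3 \pmod 4$, which gives configuration~(a). In Subcase~3.2.2 the same rebalancing should isolate $f=4q+3$ as configuration~(b), with $\ell_1=q$ and $\ell_2=q+1$, where the agent without the $(1,2)$ item must use a large item. In Subcase~3.2.3 both agents $1,2$ own small items. For them I would strengthen \eqref{eqn:t-} to $t^-\ge L+p^+-1$, which already gives $c_1(B_1)-1\le rt^-$. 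Agents $3,4$ are checked via \eqref{eqn:t+}. For~(iii), I would take the Case~3.2 allocation, or Case~1 when $x_{34}\le m/2$, and choose $L$ and the assignment of the odd $(1,2)$ item in favor of agent~1. Concretely, give agent~1 at most as many type-$(3,4)$ items as agent~2, with at most $\min(t^-,\cdot)$ of them, so that $c_1(B_1)\le c_1(B_j)$. The hardest bookkeeping will be checking feasibility of the interval \eqref{eqn:intervalL} under this extra one-sided constraint, for each parity of $a$ and $f$.
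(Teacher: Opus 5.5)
Your treatment of Cases 1 and 2, of Case 3.1 with $m\not\equiv 3 \pmod 4$, and of Case 3.2 matches the paper. The plan breaks down in Case 3.1 with $m=4q+3$ when $B_1$ consists only of large items, and none of the remedies you list works there.

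\textbf{The all-large situation actually occurs.} Take $R$ to be one type-$(2,3)$ item and six type-$(3,4)$ items. Then $m=7$, $q=1$, $T=2$, $a=b=0$, $c=1$, $f=6$, and $\{1,2\}$ is the unique maximally deficient set with $b+c>0$. In general, once $B_1$ is all large you get $a=b=a_1=0$, and $a_2=0$ from \eqref{eqn:ineqL}. The interval \eqref{eqn:maxmin} then reads $0\le a_1\le \min\{0,T,\lfloor c/2\rfloor\}$, so it collapses to $a_1=0$. There is nothing to perturb, and $\ell_1=T=q+1$ is forced. Choosing which of agents $3,4$ gets $q+1$ items does not help either. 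The other one gets $q$ type-$(3,4)$ items, which agent $1$ values at $rq<r(q+1)=c_1(B_1)$. So in the example, every allocation produced by the Case 3.1 construction has an all-large agent who is not envy-free. Yet $R$ is neither configuration (ii)(a) nor (ii)(b), so part (i) fails for the whole family of allocations you are working in.

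\textbf{The missing step: change the bundle sizes.} Move one type-$(3,4)$ item from $B_1$ to the smaller of $B_3,B_4$. Agent $1$ then holds $q$ items and everyone else holds $q+1$.
\begin{itemize}
\item Agent $1$ becomes envy-free, because every other bundle has at least $q$ items that are large for her.
\item Agents $3,4$ hold $q+1$ small items each, so their threshold is $q$.
\item Agent $2$ still satisfies EFX toward agent $1$ by removing a small item. Since $b=0$ and $b+c>0$ force $c>0$, she owns a small item; removing it leaves $q$ items of total cost at most $rq=c_2(B_1)$.
\item Agent $2$'s comparisons with agents $3,4$ only become easier.
\end{itemize}
The case where agent $2$ is the all-large agent is symmetric.

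\textbf{Part (iii) is still only sketched.} For $a=2$, inequality \eqref{eqn:t-} yields only $t^-\ge L$. That does not give $1+rL\le rt^-$, so you need a specific choice such as $L=\lfloor (f-2)/4\rfloor$, which guarantees $t^-\ge L+1$. For $a=1$ with balanced sizes, the type-$(1,2)$ item must go to agent $2$ when $f\not\equiv 3\pmod 4$, since otherwise $c_1(B_1)=1+rq>rq=c_1(B_2)$. When $f\equiv 3\pmod 4$ it must instead go to agent $1$.
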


\begin{proof}
We use the constructions in Lemma~\ref{lem:M2}, with the tie-breaking choices specified
below.

First observe that if $B_i$ contains an item that is small for agent $i$,
then
\[
 \tau_i(B)=c_i(B_i)-1.
\]
Since $B$ is EFX,
\begin{equation}\label{eqn:efxineq}
     c_i(B_i)-1\le c_i(B_j)\qquad\forall j\in N.
\end{equation}
It therefore remains only to examine agents whose bundles contain exclusively
large items.

\paragraph{No deficient set.}
In Case~1 of Lemma~\ref{lem:M2} every item is assigned to one of its endpoints.  Hence,
every nonempty bundle consists entirely of small items for its owner, and
\eqref{eqn:efxineq} applies.  An agent with an empty bundle is trivially
envy-free.

\paragraph{A maximally deficient singleton.}
Suppose that $\{1\}$ is maximally deficient.  Agents $2,3,4$ receive only
small items.  Agent $1$ receives all $d_1$ items incident to her and
$q-d_1$ triangle items.  If $d_1>0$, then $B_1$ contains a small item, so
\eqref{eqn:efxineq} applies.

If $d_1=0$, then $B_1$ consists of $q$ large items.  Every bundle
$B_j$, $j\in\{2,3,4\}$, consists of at least $q$ triangle items, all of
which are large for agent $1$.  Thus
\[
 c_1(B_1)=rq\le c_1(B_j)\qquad(j=2,3,4),
\]
and agent $1$ is envy-free.

\paragraph{A maximally deficient pair with cross edges.}
Suppose that $\{1,2\}$ is maximally deficient and $b+c>0$ (where $b$ and $c$ are as defined in the proof of Lemma~\ref{lem:M2}; the same holds for $c,f,T,\ell_1,\ell_2,a_1,a_2$ in the analysis below).  Agents $3$ and
$4$ receive only small items.  Consider agent $1$.  If $B_1$ contains a
small item, then \eqref{eqn:efxineq} applies.  Otherwise,
\[
 b=a_1=0,\qquad \ell_1=T.
\]
The inequality $\ell_1\le c+\ell_2$ in \eqref{eqn:ineqL}, together with
$c+a_2+\ell_2=T$, implies $a_2=0$, and hence $a=0$.  Thus, all cross edges
are incident to agent $2$.

If $m\not\equiv3\pmod 4$, then the bundles of agents $3$ and $4$ have size
at least $T$.  Since all their items have type $(3,4)$, they are large for
agent $1$, and
\[
 c_1(B_1)=rT\le c_1(B_3),c_1(B_4).
\]
Moreover, $B_2$ contains exactly $T$ items that are large for agent $1$, so
$c_1(B_2)=rT$.  Hence, agent $1$ is envy-free.

If $m=4q+3$, then $T=q+1$, and agents $3,4$ initially receive $q$ and
$q+1$ type-$(3,4)$ items.  Move one type-$(3,4)$ item from $B_1$ to the
smaller of $B_3,B_4$.  Agent $1$ now receives $q$ large items and is
envy-free, since every other bundle costs her at least $rq$.
Each of the agents $3$ and $4$ now receives $q+1$ small items, and the EFX threshold is $q$.
Since every agent receives at least $q$ items, EFX holds for both agents $3$ and $4$.
The envy from agent $2$ to agent $1$ is up to a small item: we have assumed $b+c>0$, and $b=0$ implies $c>0$; this means agent $2$'s bundle contains at least one item that is small for agent $2$; since $B_2$ contains $q+1$ items with at least one item that is small for agent $2$ and the updated $B_1$ contains $q$ items that are large for agent $2$, agent $2$ does not envy agent $1$ up to removing a small item from $B_2$.
The EFX comparisons from agent $2$ to agents $3$ and $4$ are weakly relaxed.  
Thus, the modified allocation remains EFX.  

The argument for agent $2$ is symmetric.

\paragraph{Two disjoint edge types.}
Suppose that $b=c=0$, so only types $(1,2)$ and $(3,4)$ occur.  Recall that
\[
 a=x_{12},\qquad f=x_{34}.
\]

\smallskip
\noindent\emph{The case $a=0$.}
Write $f=4q+s$, where $s\in\{0,1,2,3\}$.  If $s\neq3$, choose a balanced
allocation in which agents $1,2$ receive no more items than agents $3,4$.
Then agents $1,2$, who regard every item as large, are envy-free,
whereas agents $3,4$ own only small items.

If $s=3$, every balanced allocation has bundle-size multiset
\[
 \{q,q+1,q+1,q+1\}.
\]
Choose either agent $1$ or agent $2$ to receive the unique bundle of size
$q$.  The other receives $q+1$ large items, while agents $3,4$ receive
$q+1$ small items each.  The agent among $1,2$ with $q+1$ items is not
envy-free but becomes envy-free after removing one large item.  This
is exactly configuration~(ii)(a).

\smallskip
\noindent\emph{The case $a=1$.}
Write $f=4q+t$, where $t\in\{0,1,2,3\}$.

For $t=0,1,2$, give $q$ type-$(3,4)$ items to each of agents $1,2$, give
the unique type-$(1,2)$ item to either one of them, and split the remaining
type-$(3,4)$ items as evenly as possible between agents $3,4$.  The bundle
sizes of agents $3,4$ are, respectively,
\[
 (q,q),\qquad(q,q+1),\qquad(q+1,q+1)
\]
for $t=0,1,2$.  Hence, the agent among $1,2$ who does not receive the
type-$(1,2)$ item owns $q$ large items and is envy-free.  The other
agent owns a small item, and agents $3,4$ own only small items.  Thus \eqref{eqn:efxineq}
applies to all remaining agents.

For $t=3$, give the unique type-$(1,2)$ item and $q$ type-$(3,4)$ items
to one of agents $1,2$, give $q+1$ type-$(3,4)$ items to the other, and
give $q+1$ type-$(3,4)$ items to each of agents $3,4$.  The agent who
receives only $q+1$ type-$(3,4)$ items is not envy-free, since she
values the other deficient-side bundle at
\[
 1+rq<r(q+1).
\]
After removing one large item, however, her cost becomes $rq$, which is no
larger than the cost of any other bundle.  This is exactly
configuration~(ii)(b).  The choice of which agent receives the unique
type-$(1,2)$ item is arbitrary.

\smallskip
\noindent\emph{The case $a\ge2$.}
In Subcase~3.2.3 of Lemma~\ref{lem:M2}, both agents $1,2$ receive at least one
type-$(1,2)$ item, while agents $3,4$ receive only type-$(3,4)$ items.
Thus every agent owns a small item, and \eqref{eqn:efxineq} applies.

This proves parts~(i) and~(ii).

\paragraph{Prescribing an envy-free agent.}
We now prove part~(iii).  By symmetry, it is enough to prescribe agent $1$.
Let $f\ge a\geq0$.

If $f=a$, split the type-$(1,2)$ items between agents $1,2$ so that agent
$1$ receives $\lfloor a/2\rfloor$ and agent $2$ receives
$\lceil a/2\rceil$, and split the type-$(3,4)$ items evenly between agents
$3,4$.  Every item is assigned to a small endpoint and the bundle sizes
differ by at most one, so the allocation is EFX.  Moreover,
\[
 c_1(B_1)=\left\lfloor\frac a2\right\rfloor
 \le c_1(B_2),
\]
while each of $B_3,B_4$ contains at least $\lfloor a/2\rfloor$
type-$(3,4)$ items and therefore costs agent $1$ at least
$r\lfloor a/2\rfloor$.  Thus, agent $1$ is envy-free.

Assume henceforth that $f>a$.

If $a\le 1$, write $f=4q+t$.
For $t=0,1,2$, give agent $1$ exactly $q$ type-$(3,4)$ items and give the
unique type-$(1,2)$ item which may exist to agent $2$.  Give agent $2$ exactly $q$
type-$(3,4)$ items and split the remainder evenly between agents $3,4$.
Then
\[
 c_1(B_1)=rq,
\]
whereas each of $B_2,B_3,B_4$ costs agent $1$ at least $r q$.  Thus, agent $1$ is envy-free.

For $t=3$, give the unique type-$(1,2)$ item which may exist and $q$ type-$(3,4)$ items
to agent $1$, give $q+1$ type-$(3,4)$ items to agent $2$, and give
$q+1$ type-$(3,4)$ items to each of agents $3,4$.  Then
\[
 c_1(B_1)\le 1+rq\le r(q+1)
 =c_1(B_2)=c_1(B_3)=c_1(B_4),
\]
so agent $1$ is envy-free.  Notice that in this exceptional
configuration the unique large-removal agent is agent $2$, not agent $1$.

If $a=2$, give one type-$(1,2)$ item to each of agents $1,2$ and set
\[
 L=\left\lfloor\frac{f-2}{4}\right\rfloor.
\]
Give $L$ type-$(3,4)$ items to each of agents $1,2$, and split the
remaining $f-2L$ type-$(3,4)$ items evenly between agents $3,4$.  If
$t^-$ denotes the smaller of the latter two bundle sizes, then
$t^-\ge L+1$.  Consequently,
\[
 c_1(B_1)=1+rL=c_1(B_2)
 \le r(L+1)\le c_1(B_3),c_1(B_4),
\]
and agent $1$ is envy-free.  The same inequalities also give the
required EFX comparisons.

Finally, suppose that $a\ge3$.  In Subcase~3.2.3 of Lemma~\ref{lem:M2}, assign
\[
 p^-=\left\lfloor\frac a2\right\rfloor
\]
type-$(1,2)$ items to agent $1$ and
$p^+=\lceil a/2\rceil$ such items to agent $2$, and give both agents $L$
type-$(3,4)$ items, where $L$ satisfies \eqref{eqn:intervalL}.  The EFX proof from Lemma~\ref{lem:M2}
is unchanged.  Moreover,
\[
 c_1(B_1)=p^-+rL\le p^++rL=c_1(B_2).
\]
If $a$ is odd, then $p^+-1=p^-$, and \eqref{eqn:t-} yields
$t^-\ge L+p^-$.  Hence,
\[
 c_1(B_1)=p^-+rL
 \le r(L+p^-)
 \le rt^-
 \le c_1(B_3),c_1(B_4).
\]
If $a$ is even, write $p^-=p^+=p\ge2$.  By \eqref{eqn:t-},
$t^-\ge L+p-1$, and since $r>2$,
\[
 p\le r(p-1).
\]
Therefore,
\[
 c_1(B_1)=p+rL
 \le r(L+p-1)
 \le rt^-
 \le c_1(B_3),c_1(B_4).
\]
Thus, agent $1$ is envy-free.  Interchanging agents $1$ and $2$
proves the symmetric statement.
\end{proof}

\subsection{Concatenating the $M_{01}$ prefix and the $M_2$ allocation}
\label{subsec:combine}
We now finish the proof for $M_{01}\cup M_2$.  Write
\[
 |M_{01}|=4a+b,
 \qquad b\in\{0,1,2,3\}.
\]
We discuss the four values of $b$.
In each case and each of the sub-case, we will discuss how the concatenation is done and show that the resultant allocation is EFX.
We will exploit many features discussed earlier: the freedom to choose those agents that receive one more item when allocating $M_{01}$, the freedom to choose envy-free agents suggested in Lemma~\ref{lem:M2properties}, etc.

At a high level, we will try to use some items in $M_2$ for gap-filling such that the allocation of $M_{01}$ in the first phase becomes envy-free after gap-filling.
If the residual of $M_2$ is not of the two configurations in Lemma~\ref{lem:M2properties}(ii), we can allocate this residue so that every envy is up to a small item.
Lemma~\ref{lem:composition} then implies the combined allocation is EFX.
The tricky part is to try to make the residue of $M_2$ after gap-filling avoid the two configurations in Lemma~\ref{lem:M2properties}(ii).
However, there are hard cases where this is not always possible, and we need to handle these cases separately.

If, after the gap-filling, the residual $M_2$ items form one of the two configurations in Lemma~\ref{lem:M2properties}(ii), we will call this configuration \emph{exceptional}.
In an exceptional configuration, one edge $(i,j)$ has multiplicity at least $3$, and, other than those $(i,j)$-items, there can be at most one other item of type $(i',j')$ where $(i',j')$ is disjoint from $(i,j)$.
We will call the two agents $i',j'$ \emph{exceptional agents}.
By Lemma~\ref{lem:M2properties}(ii), for the two exceptional agents $i',j'$, one of them receives only large item in the residual of $M_2$ and envies the other by a large item.
We call this agent the \emph{disadvantageous agent}.
Lemma~\ref{lem:M2properties}(ii) tells us we are free to choose the disadvantageous agent from the two exceptional agents.

We first prove the following lemma that will be repeatedly used in this section.

\begin{lemma}\label{lem:easycombine}
    Suppose $P=(P_1,P_2,P_3,P_4)$ is a canonical allocation of $M_{01}$. Suppose some items of \(M_2\) are used for gap-filling, all assigned to short agents, such that \(P\) is extended to an envy-free allocation.
    If the residual of $M_2$ is of an exceptional configuration with exceptional agents $i$ and $j$ and both agents are long agents in the canonical allocation, then there exists an EFX allocation for $M_{01}\cup M_2$.
\end{lemma}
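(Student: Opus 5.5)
The plan is to apply Lemma~\ref{lem:composition} to $Q$ and $B$, and to use the freedom in Lemma~\ref{lem:M2properties}(ii) to pick the disadvantageous agent so that its large-item envy is absorbed by the prefix. Let $Q=(Q_1,\ldots,Q_4)$ be the envy-free extension of the canonical allocation $P$ obtained after gap-filling. Let $B$ be the allocation of the residual $M_2$ items given by Lemma~\ref{lem:M2properties}(ii). Output $X_k=Q_k\cup B_k$. The disadvantageous agent $d\in\{i,j\}$ is still to be chosen; write $e$ for the other exceptional agent.

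\emph{All agents other than $d$.} For every agent $k\neq d$, Lemma~\ref{lem:M2properties}(i)--(ii) gives $c_k(B_k)-1\le c_k(B_{k'})$ for all $k'$ when $B_k$ contains a small item, and $c_k(B_k)\le c_k(B_{k'})$ otherwise. Since $Q$ is envy-free, $c_k(Q_k)\le c_k(Q_{k'})$. Hence the first part of Lemma~\ref{lem:composition} shows that no agent $k\ne d$ strongly envies anyone in $X$.

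\emph{The agent $d$: reduce to one inequality.} In both exceptional configurations, $B_d$ consists only of large items for $d$. I would check directly from the bundle sizes listed in Lemma~\ref{lem:M2properties}(ii) that $d$ does not envy the two non-exceptional agents in $B$. The only envy is towards $e$, and it satisfies $c_d(B_d)-c_d(B_e)\le r$. Gap-filling items go only to short agents, and $d,e$ are long, so $Q_d=P_d$ and $Q_e=P_e$, each of size $a+1$. By the general inequality of Lemma~\ref{lem:composition}, and since $d$'s envy towards the other agents is already covered by the envy-freeness of $Q$ and $B$, it suffices to show
\[
c_d(P_d)-c_d(P_e)\le \min_{g\in X_d}c_d(g)-r .
\]

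\emph{Choosing $d$ and verifying the inequality.} If $P_d$ has no item small for $d$, then $X_d$ consists only of large items, the minimum is $r$, and the inequality follows from envy-freeness of $Q$. Otherwise the minimum is $1$, and I need $c_d(P_e)-c_d(P_d)\ge r-1$. There are two cases, depending on whether $n_d<a+1$ or $n_d\ge a+1$.

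If $0<n_d<a+1$, canonicity places all of $d$'s uniquely-small items in $P_d$. Then $P_e$ is entirely large for $d$, so $c_d(P_e)=r(a+1)\ge c_d(P_d)+(r-1)$.

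If $n_d\ge a+1$, then $c_d(P_d)=a+1$. The inequality holds as soon as $P_e$ contains an item large for $d$, because then $c_d(P_e)\ge a+r$.

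The potential obstacle is the case where every item of $P_e$ is small for $d$. Here I use the freedom to swap the roles of $d$ and $e$. The bad case for both choices would require $P_e$ entirely small for $d$ and $P_d$ entirely small for $e$, while $P_d$ (respectively $P_e$) also contains items small for its own owner. In particular the nonempty bundle $P_d$, of size $a+1\ge1$, would contain an item small for both $d$ and $e$. This is impossible, since every $M_{01}$ item is small for at most one agent. So I choose $d$ to be an exceptional agent for which the check above succeeds. I expect the only delicate step to be confirming from the explicit configurations that $e$ is the only bundle $d$ envies in $B$, and that this envy is at most $r$.
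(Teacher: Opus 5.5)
Your proof is correct and follows essentially the same route as the paper. You apply Lemma~\ref{lem:composition} to the envy-free gap-filled prefix together with the exceptional residual allocation, and you choose the disadvantageous agent so that either her final bundle is all-large or the prefix gives her an $r-1$ advantage over the other exceptional agent. The paper organizes the case split more directly: if some exceptional agent has no own-small prefix item it takes her, and otherwise it notes that each agent's own-small prefix item is large for the other, which is exactly what your swap argument establishes. One small point: when $b=0$ the long bundles have size $a\ge 1$ rather than $a+1$, but this changes nothing in your argument.
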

\begin{proof}
    We will carefully choose the disadvantageous agent from $i$ and $j$ so that the final allocation is EFX.
    Consider the allocation of the residual of $M_2$.
    Except for the envy between the disadvantageous agent in $\{i,j\}$ to the other agent in $\{i,j\}$, the remaining envy can only be up to a small item.
    Since the allocation of $M_{01}$ together with the gap-filling items is assumed to be envy-free, by Lemma~\ref{lem:composition}, we only need to check the EFX condition from the disadvantageous agent to the other agent in $\{i,j\}$.

    If one of agents $i,j$ has no small item in her $M_{01}$ bundle $P_i,P_j$,
    choose that agent as the disadvantageous agent.  Her final bundle then contains only large items, so removing a large residual item is compatible with EFX.  
    If both agents have small items in their prefix bundles $P_i,P_j$, then canonicity gives an $r-1$ prefix advantage in both directions (the small item for an agent must be large for the other three agents for items in $M_{01}$):
    \[
     c_i(P_j)\ge c_i(P_i)+r-1
     \qquad\text{and}\qquad
     c_j(P_i)\ge c_j(P_j)+r-1.
    \]
    In this case, we can choose either agent as the disadvantageous agent.
    The $r-1$ prefix advantage exactly compensates for replacing one unit of small-item residual slack by the removal of a large item.  Thus, the combined
    allocation is EFX.
\end{proof}

Below we give a sketch for each of the four cases $b=0,1,2,3$. Complete analyses are available in Appendix~\ref{append:concatenation}. Specifically, in the sketch below, the construction of the allocation is fully described in each (sub)case; the verifications of the EFX properties are routine for most parts, and they are deferred to the full version in Appendix~\ref{append:concatenation}.

\subsubsection{Case $b=0$}
\label{subsec:b=0}
If $a=0$, we have $M_{01}=\emptyset$, and we can choose any EFX allocation of $M_2$ (whose existence is guaranteed by Lemma~\ref{lem:M2}) be the final allocation.
In the case $a>0$, we directly append the allocation of $M_2$ to a canonical allocation of $M_{01}$.
Lemma~\ref{lem:canonical}(a) implies that the prefix allocation $M_{01}$ is envy-free.
If the configuration of $M_2$ is not exceptional, Lemma~\ref{lem:composition} implies the combined allocation is EFX.
If the configuration of $M_2$ is exceptional, by noting that every agent is a long agent in the prefix canonical allocation (for $a>0$), we can apply Lemma~\ref{lem:easycombine} to conclude an EFX allocation of $M_{01}\cup M_2$.

\subsubsection{Case $b=1$}
We discuss two sub-cases: (sub-case i) there exists an edge in $M_2$ with multiplicity at least three, and (sub-case ii) the multiplicities of all edges in $M_2$ are bounded by two.

\paragraph{Sub-Case i.}
Consider an edge with the maximum multiplicity, and assume this edge is $(1,2)$ without loss of generality. We further split into two sub-sub-cases.

i(a): If there is another edge that intersects $(1,2)$, say, it is $(2,3)$ without loss of generality, we choose a canonical allocation $P$ with $|P_4|=a+1$. 
We allocate two $(1,2)$-items to agents $1$ and $2$ each and the $(2,3)$-item to agent $3$ for gap-filling.
The current allocation is envy-free since the three gap-filling items are all large items for agent $4$ (who receives one more item in the canonical allocation).

Let $M_2^-$ be the remaining items in $M_2$.
If $M_2^-$ is not an exceptional configuration, applying Lemma~\ref{lem:composition} yields an EFX allocation for $M_{01}\cup M_2$.
Otherwise, since the multiplicity of $(1,2)$ was maximum and was at least three before gap-filling.
There is still at least one $(1,2)$-item in $M_2^-$.
Thus, the exceptional pair of agents can either be $(1,2)$ or $(3,4)$.

In the former case where the exceptional pair is $(1,2)$, it can only be that $x_{12}=x_{34}=3$ since $x_{12}$ was the maximum, and $M_2$ contains exactly three $(1,2)$-items, one $(2,3)$-item, and three $(3,4)$-items.
In this case, we revert the gap-filling of two $(1,2)$-items and one $(2,3)$-item described above, and we directed find an EFX allocation of $M_{01}\cup M_2$.
Consider a super-canonical allocation $P$ of $M_{01}$ whose existence is guaranteed by Lemma~\ref{lem:canonical}(b).
Exactly one agent is long.
If this is agent $1$, the items in $M_2$ are allocated as follows: agent $2$ takes the three $(1,2)$-items, agent $3$ takes the $(2,3)$-item and one of the $(3,4)$-items, and agent $4$ takes the remaining two $(3,4)$ items.
The case where agent $4$ is long is handled symmetrically.
If agent $2$ is long, we let agent $1$ takes the three $(1,2)$-items, agent $3$ takes the $(2,3)$-item and one of the $(3,4)$-items, and agent $4$ takes the remaining two $(3,4)$ items.
The case where agent $3$ is long is symmetric.
It is straightforward to verify that the allocations described above is EFX, by using the property of super-canonicity.

In the latter case where $(3,4)$ is the exceptional pair, we can let agent $4$ be the disadvantageous agent.
We only need to verify that agent $4$ does not strongly envy agent $3$ for EFX.
The reasoning is similar to the proof of Lemma~\ref{lem:easycombine}.
If $P_4$ contains only large items in the prefix allocation $P$, then agent $4$ only receives large items at the end, and EFX is satisfied.
If $P_4$ contains a small item, then agent $4$ has an advantage of at least $r-1$ over agent $3$ after gap filling has been performed on top of the canonical allocation, since the gap-filling item for agent $3$ is a $(2,3)$-item which is large for agent $4$.
EFX is again guaranteed.

i(b): If no other edge intersects $(1,2)$, then $M_2$ contains at least three $(1,2)$-items, possibly some $(3,4)$-items, and no other types of items.
Recall our assumptions $x_{12}\geq x_{34}$ and $x_{12}\geq3$.

We first find a canonical allocation $P$ of $M_{01}$ where one of agents $3$ and $4$ is long, and the short agent in $\{3,4\}$ has an advantage of at least $r$ towards the long agent in $\{3,4\}$.
Notice that $P$ needs not to be super-canonical, as we did not require the advantage of agent $1$ or $2$ over the long agent to be at least $r$.
This is always possible.
We can first tentatively let $|P_3|=a+1$.
If agent $4$'s advantage over agent $3$ is not at least $r$, it must be that all items in $P_3$ are small for agent $4$ and large for agent $3$.
Moving one item from $P_3$ to $P_4$ gives a canonical allocation satisfying our description.
We will then fix such an allocation and assume $|P_3|=a+1$ without loss of generality.

We first handle two very special cases with $x_{12}=x_{34}=3$ and $x_{12}=4,x_{34}=3$.
In the first case, we will let agent $1$ receives two $(1,2)$-items, agent $2$ receives one $(1,2)$-item, agent $3$ receive none of $M_2$, and agent $4$ receives all the three $(3,4)$-items.
We can easily check that the allocation is EFX by our property of $P$ and $r>2$.
In the second case, one just need to ask agent $2$ to receive one more $(1,2)$-item.

For the remaining cases, we use three $(1,2)$-items for gap-filling such that each of agents $1,2,4$ receives exactly one $(1,2)$-item.
The current allocation is envy-free by our property of $P$.
If the remaining part $M_2^-$ is not an exceptional configuration, we are done (Lemma~\ref{lem:composition}).
Otherwise, the exceptional pair can only be $(3,4)$: since $x_{12}\geq x_{34}$ and we have use three $(1,2)$-items for gap-filling, and since $x_{34}\equiv 3\mod 4$ in order to make $(1,2)$ the exceptional pair, we must have $x_{12}=x_{34}=3$ or $x_{12}=4,x_{34}=3$ if $(1,2)$ is the exceptional pair. But we have handled these two cases before.
For the $(3,4)$-exceptional pair, we let agent $3$ be the disadvantageous agent.
The reasoning for EFX is similar as before, by noting that the gap-filling item for agent $4$ is an $(1,2)$-item, which is large for the long agent $3$.

\paragraph{Sub-Case ii.}
In this case, there can never be any exceptional configurations.
We only need to make sure we can use some items from $M_2$ for gap-filling so that the resultant allocation is envy-free.
After that, we know the $M_2^-$ cannot be of an exceptional configuration, and we can apply Lemma~\ref{lem:composition} for a final EFX allocation.
We consider two cases: ii(a) and ii(b).

ii(a): First suppose there is an agent, say agent $1$, such that at least three items of $M_2$ are large for her. 
Choose three such items and let $P$ be a canonical allocation with $|P_1|=a+1$.
If $c_i(P_i)\leq c_i(P_1)-r$ for all $i=2,3,4$, we use $P$ as the prefix allocation.  
Use the three chosen $M_2$ items for gap filling.  
Whenever possible, give each gap item to a short agent for whom it is small.  
To be specific, if the three chosen items all have the same edge type, then two
short agents receive small gap items and the third receives a large gap item; otherwise, we can always make sure each of agents $2,3,4$ receives a small gap item.
It is easy to verify that the current allocation is envy-free.

If $c_i(P_i)\leq c_i(P_1)-r$ fails for some $i\in\{2,3,4\}$, this agent $i$ must see only small items in $P_1$.  
Since each item in $M_{01}$ is small for at most one agent, the condition $c_i(P_i)\leq c_i(P_1)-r$ can only fail for one agent.
Suppose it is agent $4$.
If one of the three chosen items is small for agent $4$, give such an item to agent $4$ as her gap item; the prefix is again envy-free.  
If none of the
three chosen items is small for agent $4$, then the three items are large for both agents $1$ and $4$, so they all have the same edge type $(2,3)$.  In this case, make agent $4$, rather than agent $1$, the long
$M_{01}$ agent, i.e., use the prefix allocation $P'$ obtained by moving one item from $P_1$ to $P_4$.
Use the three chosen items to fill the gaps of the other
three agents.
The current allocation is still envy-free.

After the above-mentioned gap-filling process, allocate the remaining $M_2$ items using Lemma~\ref{lem:M2properties}(i). Lemma~\ref{lem:composition} implies the combined allocation is EFX since there cannot be any exceptional configurations.

ii(b): It remains to handle the corner in which no agent views three items of $M_2$ as large.  Since each $M_2$ item is large for exactly two agents, this means $|M_2|\le 4$.  
Recall also the sub-case ii assumption: the multiplicity of each edge is at most $2$.

If $|M_2|\leq 3$, by the multiplicity bound of $2$, we can make sure $M_2$ is allocated such that exactly $|M_2|$ agents receive one small item each.
We only need to choose the prefix allocation $P$ such that an agent not receiving items in $M_2$ is long.
This is then a complete EFX allocation for $M_{01}\cup M_2$.

If $|M_2|=4$, with the multiplicity bound, there can only be two configurations of $M_2$ up to relabeling: 1) two $(1,2)$-items and two $(3,4)$-items, and 2) one item from each of the four types $(1,2),(2,3),(3,4),(1,4)$.
We first find a super-canonical allocation $P$ of $M_{01}$.
Since the two cases 1) and 2) are symmetric, we assume agent $1$ is long in the super-canonical allocation $P$.
In the first case, we let agent $2$ take the two $(1,2)$-item, agent $3$ take one $(3,4)$-item, and agent $4$ take one $(3,4)$-item.
In the second case, we let agent $2$ take one $(1,2)$-item and one $(2,3)$-item, agent $3$ take one $(3,4)$-item, and agent $4$ take one $(1,4)$-item.
It can be easily verified that the resultant allocation is EFX for both cases.

\subsubsection{Case $b=2$}
We consider two sub-cases: (sub-case i) $M_2$ contains an edge type with multiplicity at least two, and (sub-case ii) all edge types in $M_2$ have multiplicities at most one.

\paragraph{Sub-Case i.} First suppose some edge type in $M_2$ has multiplicity at least two.  Choose
an edge type of maximum multiplicity, say $(1,2)$, and use a prefix canonical allocation $P$ where agents $1$ and $2$ are short.
Use two $(1,2)$ items for gap filling, one assigned to agent $1$ and one to agent $2$.  The prefix is envy-free: 
envy-freeness for agents $1,2$ is obvious;
for agents $3,4$, each of them views the gap-filling items as large items.

Allocate the residual $M_2$ items, $M_2^-$, using Lemma~\ref{lem:M2properties}.  
If $M_2^-$ is not exceptional, we are done by Lemma~\ref{lem:composition}.
Otherwise, the exceptional pair can only be $(1,2)$ or $(3,4)$: if for contradiction $(2,4)$ is exceptional after we have used two $(1,2)$-items for gap filling, the multiplicity of $(1,3)$ is at least $3$ while the multiplicity of $(1,2)$ is $2$ before the gap-filling; then we should use $(1,3)$-items for gap-filling instead (since we use the edge type with maximum multiplicity for gap-filling).  

If $(3,4)$ is exceptional, We can directly invoke Lemma~\ref{lem:easycombine}.
Then the only remaining case is when $(1,2)$ is exceptional. 
Since $(1,2)$ had maximum multiplicity before the two gap items were removed, this can occur only in the case $x_{12}=x_{34}=3$: the residual would consist of one type-$(1,2)$ item and three type-$(3,4)$ items.  
We avoid this residual directly by reverting our above gap-filling strategy and handling it separately.  
If $a>0$ for $M_{01}$, we find a super-canonical allocation by Lemma~\ref{lem:canonical}(c) with $N_1=(1,2)$ and $N_2=(3,4)$.
Suppose without loss of generality that agents $1$ and $3$ are long.
We let agent $2$ take the three $(1,2)$-items and agent $4$ take the three $(3,4)$-items.
The allocation is EFX by super-canonicity and $r>2$.
If $a=0$ (so $|M_{01}|=2$), the only ``hard'' case is when each of agents $1$ and $2$ has one small item in $M_{01}$ (or symmetrically, agents $3$ and $4$), as, in any other case, we can have a super-canonical prefix where exactly one of $1,2$ is long and exactly one of $3,4$ is long, which reduce to the $a>0$ case.
For this ``hard'' case, we can use the super-canonical prefix with agents $1$ and $2$ being long.
Then, allocate two of $(1,2)$-items to agent $1$ and one of them to agents $2$; allocate two of $(3,4)$-items to agent $3$ and the last $(3,4)$-item to agent $4$.
This allocation is still EFX for $r>2$ since each of agents $1$ and $2$ only takes at most three small items in total, and the items received by $3$ and $4$ are large for $1$ and $2$.

\paragraph{Sub-Case ii.}
Now suppose $M_2$ is a simple graph; equivalently, every edge type corresponds to at most one item.
We choose any super-canonical allocation $P$ guaranteed by Lemma~\ref{lem:canonical}(b).
Suppose $1$ and $2$ are the short agents without loss of generality.
We consider three subcases for ii.
\begin{itemize}
    \item $x_{13}+x_{14}>0$ and $x_{23}+x_{24}>0$. In this case, let agent $1$ take $(1,3)$ and $(1,4)$ whenever the items exist, and let agent $2$ take $(2,3)$ and $(2,4)$ whenever exist. 
    
    If both $(1,2)$ and $(3,4)$ exists, give $(1,2)$ to the agents in $1,2$ who receive fewer items (break tie arbitrarily). Then, for at least one of agents $3$ and $4$, her evaluation to the gap-filling items for both agents $1$ and $2$ is at least $r$. To see this, if agent $1$ takes the item $(1,2)$, then both $3$ and $4$ think this is a large item. For agent $2$, she takes at least one of $(2,3)$ and $(2,4)$, so at least one of agents $4$ and $3$ thinks this is a large item.
    We will allocate $(3,4)$ to this agent.

    If $(1,2)$ exists and $(3,4)$ does not, the item $(1,2)$ can be allocated to any of $1$ and $2$.

    If $(1,2)$ does not exist and $(3,4)$ exists, the item $(3,4)$ is allocated by this rule: if $x_{13}=x_{14}=x_{23}=x_{24}=1$, item $(3,4)$ is allocated to any of agents $3$ and $4$; otherwise, item $(3,4)$ is given to any agent $i\in\{1,2\}$ with $x_{i3}+x_{i4}\leq1$.
    \item $x_{13}+x_{14}>0$ and $x_{23}=x_{24}=0$. We further discuss three cases:
    \begin{itemize}
        \item $x_{13}+x_{14}=2$, i.e., $x_{13}=x_{14}=1$. In this case, if the $(1,2)$-item exists, we let agent $1$ take $(1,3)$ and $(1,4)$, agent $2$ take $(1,2)$, and any of agents $3,4$ take $(3,4)$ (if exists). If the $(1,2)$-item does not exist but the $(3,4)$-item exist, let agent $1$ take $(1,3)$ and $(1,4)$, and let agent $2$ take $(3,4)$. If the $(1,2)$-item does not exist and the $(3,4)$-item does not exist, then agent $1$ takes $(1,3)$ and agent $2$ takes $(1,4)$.
        \item $x_{13}+x_{14}=1$. Let agent $1$ takes the one incident edge from $\{(1,3),(1,4)\}$. If at most one of $(1,2)$ and $(3,4)$ exists, let agent $2$ take this item. Otherwise, if both $(1,2)$ and $(3,4)$ exist: give $(1,2)$ to agent $2$; for $(3,4)$, give it to agent $3$ if $x_{14}=1$, and give it to agent $4$ if $x_{13}=1$.
    \end{itemize}
    \item $x_{23}+x_{24}>0$ and $x_{13}=x_{14}=0$. This is symmetric to the second case, and is handled in the same way.
    \item $x_{13}=x_{14}=x_{23}=x_{24}=0$. In this case, only \((1,2)\)- and \((3,4)\)-items can exist. Allocate the \((3,4)\)-item, if it exists, to a short agent whose prefix bundle contains no small item, if such an agent exists; otherwise allocate it to either short agent.  If the \((1,2)\)-item also exists, allocate it to the other short agent.

    If the short agent receiving $(3,4)$ has no small prefix item, her final bundle contains only large items, so her EFX threshold is just her old prefix cost; the equal-quota canonical property handles the comparison with the other short agent. If both short agents have small prefix items, then for each short agent $i$, the other short bundle contains an item large for $i$, so
$$c_i(P_j)\geq c_i(P_i)+r-1.$$
Therefore, adding the large $(3,4)$-item to one short agent and the small $(1,2)$-item to the other preserves EFX.
\end{itemize}
It is routine to verify that the constructed allocation in each case above is EFX.

\subsubsection{Case $b=3$}
We discuss three different sub-cases: i) there exist two intersecting edges in $M_2$, and assume $(1,2)$ and $(1,3)$ exist without loss of generality; ii) no two edges intersect, and there are two types of edges; we will assume $(1,2)$ and $(3,4)$ exist without loss of generality; iii) only one type of items exists, and we assume only $(1,2)$ exists without loss of generality.

\paragraph{Sub-Case i.}
Consider the canonical allocation $P$ where agent $1$ is short.
If $P$ is super-canonical, we use $P$ as the prefix.
Let agent $1$ take one $(1,2)$-item and one $(1,3)$-item for gap filling.
Since both items are small for agent $1$ and $P$ is super-canonical, agent $1$ does not envy any other agent.
Since each agent $i\in\{2,3,4\}$ view at least one of the two gap-filling items as large, none of agents $(2,3,4)$ envy the others.
This allocation is thus envy-free.
We then apply Lemma~\ref{lem:M2properties} to find an EFX allocation for the residual $M_2^-$ of $M_2$.
If $M_2^-$ is not exceptional, we are done according to Lemma~\ref{lem:composition}.
If $M_2^-$ has an exceptional pair that avoids agent $1$, we can apply Lemma~\ref{lem:easycombine} to get an EFX allocation.
If the exceptional pair $(1,i)$ contains agent $1$, we can make $i$ disadvantageous.
Agent $i$ will not strongly envy agent $1$ as one of the two items with types $(1,2)$ and $(1,3)$ respectively, used for gap-filling, is large for agent $i$.

If $P$ is not super-canonical and there are items of types
$(1,2)$ and $(1,3)$, choose the weakly more numerous of these two types; say $x_{12}\ge x_{13}$.  
Make agent $2$, rather than agent $1$, the short $M_{01}$ agent, and use one $(1,2)$ item as the gap-filling item for agent $2$.  
This modified prefix (with gap-filling) is envy-free: agent $1$'s long bundle $P_1$ contains only small items (since the old prefix $P$ is not super-canonical), agent $2$'s gap-filling item $(1,2)$ is small for her and large for agents $3$ and $4$.
The residual $M_2^-$ contains at least one $(1,3)$-item.
If $M_2^-$ is not exceptional, we can find an EFX allocation by Lemma~\ref{lem:M2properties}(i) and Lemma~\ref{lem:composition}.
If $M_2^-$ is exceptional, since $x_{12}\ge x_{13}$ before gap-filling and $M_2^-$ contains at least one $(1,3)$-item, it must be that $M_2^-$ contains exactly one $(1,3)$-item and many $(2,4)$-items, making $(1,3)$ the exceptional pair.
If the current bundle for agent $3$ contains only large items for her, then we can make agent $3$ the disadvantageous agent.
Otherwise, agent $3$'s bundle contains a small item for her.
Moreover, agent $1$'s long bundle $P_1$ contains only small items for agent $1$, and these are large items for agent $3$. 
Agent $3$ has an advantage of $r-1$ over agent $1$.
It is again safe to let agent $3$ be the disadvantageous agent.

\paragraph{Sub-Case ii.}
Assume $x_{34}\ge x_{12}$ without loss of generality.  
Consider all super-canonical allocations of $M_{01}$.

If there exists a super-canonical allocation $P$ where either agent $1$ or $2$ is short, we use this as the prefix and assume agent $1$ is short without loss of generality.
Let agent $1$ take one $(1,2)$-item and one $(3,4)$-item for gap-filling.
In the current allocation, agents $2,3,4$ are not envious, and agent $1$ may envy some agent but only up to a small item.
We apply Lemma~\ref{lem:M2properties}(iii) to find an EFX allocation of the residual $M_2^-$ such that agent $1$ does not envy any other agent in $M_2^-$-allocation.
If $M_2^-$ is not exceptional, Lemma~\ref{lem:composition} implies the combined allocation is EFX.
If $M_2^-$ is exceptional, the exceptional pair can only be $(1,2)$ and $2$ is disadvantageous if we want to keep agent $1$ from envying any other agent in $M_2^-$-allocation.
We only need to show that agent $2$ does not strongly envy agent $1$ in the combined allocation.
Indeed, if agent $2$'s prefix bundle $P_2$ contains only large items, agent $2$'s final bundle also does, and EFX condition is met.
If agent $2$'s prefix bundle $P_2$ contains a small item, then agent $2$ has an advantage of at least $r$ over agent $1$ since, among the two gap-filling items $(1,2)$ and $(3,4)$ for agent $1$, one of them is large and one of them is small for agent $2$.
Agent $2$ will not envy agent $1$ in the combined allocation.

If both agents $1$ and $2$ are long in all super-canonical allocations, use any super-canonical allocation $P$ as the prefix, and it must be that agents $1$ and $2$ receive some small items in $P$ (actually, it can only be the case that $a=0$ and each of agents $1$ and $2$ receives a single small item, but we do not need this for the later part).
Let the short agent (either $3$ or $4$) take one $(1,2)$-item for gap-filling.
The current allocation is envy-free.
If the residual $M_2^-$ is not exceptional, we apply Lemma~\ref{lem:M2properties}(i) and Lemma~\ref{lem:composition} to get an EFX allocation.
If $M_2^-$ is exceptional, then the only possible exceptional pair is $(1,2)$, and we apply Lemma~\ref{lem:easycombine} to get an EFX allocation.

\paragraph{Sub-Case iii.}
Consider again all super-canonical allocations of $M_{01}$.
If there exists a super-canonical allocation $P$ where either agent $1$ or $2$ is short, we use this as the prefix and assume agent $1$ is short without loss of generality.
If $|M_2|\leq r$, we let agent $1$ take all items from $M_2$ (only $(1,2)$-type items).
The resultant allocation is EFX since $P$ is super-canonical.
Otherwise, we let agent $1$ take $\lfloor r\rfloor$ items from $M_2$ for gap-filling.
The remaining $(1,2)$-type items are allocated in the round-robin fashion with the order either $1,2,3,4$ or $1,2,4,3$.
If agent $3$'s prefix bundle $P_3$ contains only large items, we use the order $1,2,3,4$; otherwise, we use the order $1,2,4,3$.
Similar analysis in the proof of Lemma~\ref{lem:easycombine} shows that this allocation is EFX.

If both agents $1$ and $2$ are long in every super-canonical allocations of $M_{01}$, we choose an arbitrary super-canonical allocation $P$ as the prefix.
We use one $(1,2)$-item to fill the gap, i.e., assign it to the short agent in $\{3,4\}$.
The allocation is envy-free for the same reason as it is in the last paragraph of sub-case ii.
If $M_2^-$ is not exceptional (precisely when $|M_2^-|\not\equiv3\mod 4$), we apply Lemma~\ref{lem:M2properties}(i) and Lemma~\ref{lem:composition} to find an EFX allocation.
Otherwise, the only possible exceptional pair is $(3,4)$.
Let $4$ be the short agent who is filled with an item $(1,2)$.
We can make $3$ disadvantageous.
If $P_3$ contains only large items for agent $3$, agent $3$ only receives large item, and she will not strongly envy agent $4$.
Otherwise, agent $3$ has an advantage of $r-1$ towards agent $4$ since the item $(1,2)$ for gap-filling is large for agent $3$.
Thus, we obtain a combined EFX allocation in both cases.

This completes all cases for $M_{01}\cup M_2$.  Therefore an EFX allocation
of $M_{01}\cup M_2$ exists.  By the insertion lemma (Lemma~\ref{lem:insertion}) from Sect.~\ref{subsec:34}, the items
in $M_{34}$ can be added one by one while preserving EFX.  Hence the theorem
follows.

\section{Open Problems}
We have shown that there exist instances for every $n\geq4$ where no EFX allocation exists.
On the other hand, an EFX allocation always exists for $n=2$ agents: the simple divide-and-choose algorithm can compute it.
It is open whether an EFX allocation always exists for three agents.

Previous work and our results also show that an EFX allocation always exists for bi-valued instances with up to four agents (although EFX is not compatible with PO for four agents).
Another natural future direction is to check if EFX allocation always exists for more than four agents.

The existence of EFX allocations for goods is also a prominent open problem.
Recently, \citet{akrami2026counterexample} proved that EFX allocation may not exist for submodular valuations, and more explicit counterexamples were provided by
\citet{mackenzie2026counterexamples}.
The existence of EFX allocations for additive valuations remains open.
We only know EFX allocations always exist for up to three agents~\citep{chaudhury2024efx,akrami2025efx} or three types of agents~\citep{hv2025efx}.

In contrast to the setting with chores, for the setting with goods, EFX allocations are known to exist for bi-valued valuations and EFX is compatible with fractional PO in the bi-valued setting~\citep{amanatidis2021maximum,bu2024best}.
The existence of EFX allocations extends to the setting with personalized bi-valued valuations~\citep{jin2025pareto,byrka2026probing}, but its compatibility with PO is unknown.

\bibliographystyle{plainnat}
\bibliography{reference}

\begin{thebibliography}{54}
\providecommand{\natexlab}[1]{#1}
\providecommand{\url}[1]{\texttt{#1}}
\expandafter\ifx\csname urlstyle\endcsname\relax
  \providecommand{\doi}[1]{doi: #1}\else
  \providecommand{\doi}{doi: \begingroup \urlstyle{rm}\Url}\fi

\bibitem[Akrami et~al.(2025)Akrami, Alon, Chaudhury, Garg, Mehlhorn, and Mehta]{akrami2025efx}
Hannaneh Akrami, Noga Alon, Bhaskar~Ray Chaudhury, Jugal Garg, Kurt Mehlhorn, and Ruta Mehta.
\newblock {EFX}: a simpler approach and an (almost) optimal guarantee via rainbow cycle number.
\newblock \emph{Operations Research}, 73\penalty0 (2):\penalty0 738--751, 2025.

\bibitem[Akrami et~al.(2026)Akrami, Mayorov, Mehlhorn, Srinivas, and Weidenbach]{akrami2026counterexample}
Hannaneh Akrami, Alexander Mayorov, Kurt Mehlhorn, Shreyas Srinivas, and Christoph Weidenbach.
\newblock A counterexample to {EFX} $n\geq 3$ agents, $m\geq n+ 5$ items, submodular valuations via {SAT}-solving.
\newblock \emph{arXiv preprint arXiv:2604.18216}, 2026.

\bibitem[Amanatidis et~al.(2020)Amanatidis, Markakis, and Ntokos]{amanatidis2020multiple}
Georgios Amanatidis, Evangelos Markakis, and Apostolos Ntokos.
\newblock Multiple birds with one stone: Beating 1/2 for {EFX} and {GMMS} via envy cycle elimination.
\newblock \emph{Theoretical Computer Science}, 841:\penalty0 94--109, 2020.

\bibitem[Amanatidis et~al.(2021)Amanatidis, Birmpas, Filos-Ratsikas, Hollender, and Voudouris]{amanatidis2021maximum}
Georgios Amanatidis, Georgios Birmpas, Aris Filos-Ratsikas, Alexandros Hollender, and Alexandros~A Voudouris.
\newblock Maximum nash welfare and other stories about {EFX}.
\newblock \emph{Theoretical Computer Science}, 863:\penalty0 69--85, 2021.

\bibitem[Amanatidis et~al.(2023)Amanatidis, Aziz, Birmpas, Filos-Ratsikas, Li, Moulin, Voudouris, and Wu]{amanatidis2023fair}
Georgios Amanatidis, Haris Aziz, Georgios Birmpas, Aris Filos-Ratsikas, Bo~Li, Herv{\'e} Moulin, Alexandros~A. Voudouris, and Xiaowei Wu.
\newblock Fair division of indivisible goods: Recent progress and open questions.
\newblock \emph{Artificial Intelligence}, 322:\penalty0 103965, 2023.
\newblock \doi{10.1016/j.artint.2023.103965}.

\bibitem[Amanatidis et~al.(2024)Amanatidis, Filos-Ratsikas, and Sgouritsa]{amanatidis2024pushing}
Georgios Amanatidis, Aris Filos-Ratsikas, and Alkmini Sgouritsa.
\newblock Pushing the frontier on approximate {EFX} allocations.
\newblock In \emph{Proceedings of the 25th ACM Conference on Economics and Computation}, pages 1268--1286, 2024.

\bibitem[Aziz et~al.(2023)Aziz, Lindsay, Ritossa, and Suzuki]{aziz2023fair}
Haris Aziz, Jeremy Lindsay, Angus Ritossa, and Mashbat Suzuki.
\newblock Fair allocation of two types of chores.
\newblock In \emph{Proceedings of the 2023 International Conference on Autonomous Agents and Multiagent Systems}, pages 143--151, 2023.

\bibitem[Aziz et~al.(2025)Aziz, Lu, Mackenzie, and Suzuki]{aziz2025fair}
Haris Aziz, Xinhang Lu, Simon Mackenzie, and Mashbat Suzuki.
\newblock Fair division with indivisible goods, chores, and cake.
\newblock \emph{arXiv preprint arXiv:2511.04891}, 2025.

\bibitem[Barman and Suzuki(2026)]{barman2026compatibility}
Siddharth Barman and Mashbat Suzuki.
\newblock Compatibility of fairness and nash welfare under subadditive valuations.
\newblock In \emph{Proceedings of the 2026 Annual ACM-SIAM Symposium on Discrete Algorithms (SODA)}, pages 1724--1746. SIAM, 2026.

\bibitem[Barman and Verma(2025)]{barman2025introspectively}
Siddharth Barman and Paritosh Verma.
\newblock Introspectively envy-free and efficient allocation of indivisible mixed manna.
\newblock \emph{arXiv preprint arXiv:2509.18673}, 2025.

\bibitem[Barman et~al.(2018)Barman, Krishnamurthy, and Vaish]{barman2018finding}
Siddharth Barman, Sanath~Kumar Krishnamurthy, and Rohit Vaish.
\newblock Finding fair and efficient allocations.
\newblock In \emph{Proceedings of the 2018 ACM Conference on Economics and Computation}, pages 557--574, 2018.

\bibitem[Barman et~al.(2020)Barman, Bhaskar, Krishna, and Sundaram]{barman2020tight}
Siddharth Barman, Umang Bhaskar, Anand Krishna, and Ranjani~G Sundaram.
\newblock Tight approximation algorithms for p-mean welfare under subadditive valuations.
\newblock In \emph{28th Annual European Symposium on Algorithms (ESA 2020)}, pages 11--1. Schloss Dagstuhl--Leibniz-Zentrum f{\"u}r Informatik, 2020.

\bibitem[Barman et~al.(2023)Barman, Narayan, and Verma]{barman2023fair}
Siddharth Barman, Vishnu Narayan, and Paritosh Verma.
\newblock Fair chore division under binary supermodular costs.
\newblock In \emph{Proceedings of the 2023 International Conference on Autonomous Agents and Multiagent Systems}, pages 2863--2865, 2023.

\bibitem[Barman et~al.(2025)Barman, Vishwa~Prakash, Sethia, and Suzuki]{barman2025fair}
Siddharth Barman, HV~Vishwa~Prakash, Aditi Sethia, and Mashbat Suzuki.
\newblock Fair and efficient allocation of indivisible mixed manna.
\newblock In \emph{International Conference on Web and Internet Economics}, pages 467--483. Springer, 2025.

\bibitem[Bei et~al.(2021)Bei, Li, Liu, Liu, and Lu]{bei2021fair}
Xiaohui Bei, Zihao Li, Jinyan Liu, Shengxin Liu, and Xinhang Lu.
\newblock Fair division of mixed divisible and indivisible goods.
\newblock \emph{Artificial Intelligence}, 293:\penalty0 103436, 2021.
\newblock \doi{10.1016/j.artint.2020.103436}.

\bibitem[Berger et~al.(2022)Berger, Cohen, Feldman, and Fiat]{berger2022almost}
Ben Berger, Avi Cohen, Michal Feldman, and Amos Fiat.
\newblock Almost full {EFX} exists for four agents.
\newblock In \emph{Proceedings of the AAAI Conference on Artificial Intelligence}, volume~36, pages 4826--4833, 2022.

\bibitem[Bu et~al.(2023)Bu, Song, and Yu]{bu2023efx}
Xiaolin Bu, Jiaxin Song, and Ziqi Yu.
\newblock {EFX} allocations exist for binary valuations.
\newblock In \emph{International Workshop on Frontiers in Algorithmics}, pages 252--262. Springer, 2023.

\bibitem[Bu et~al.(2024)Bu, Li, Liu, Lu, and Tao]{bu2024best}
Xiaolin Bu, Zihao Li, Shengxin Liu, Xinhang Lu, and Biaoshuai Tao.
\newblock Best-of-both-worlds fair allocation of indivisible and mixed goods.
\newblock In \emph{International Conference on Web and Internet Economics}, pages 277--294. Springer, 2024.

\bibitem[Budish(2011)]{budish2011combinatorial}
Eric Budish.
\newblock The combinatorial assignment problem: Approximate competitive equilibrium from equal incomes.
\newblock \emph{Journal of Political Economy}, 119\penalty0 (6):\penalty0 1061--1103, 2011.

\bibitem[Byrka et~al.(2026)Byrka, Malinka, and Ponitka]{byrka2026probing}
Jaroslaw Byrka, Franciszek Malinka, and Tomasz Ponitka.
\newblock Probing {EFX} via {PMMS}:(non-) existence results in discrete fair division.
\newblock In \emph{Proceedings of the AAAI Conference on Artificial Intelligence}, volume~40, pages 16735--16742, 2026.

\bibitem[Caragiannis et~al.(2019{\natexlab{a}})Caragiannis, Gravin, and Huang]{caragiannis2019envy}
Ioannis Caragiannis, Nick Gravin, and Xin Huang.
\newblock Envy-freeness up to any item with high nash welfare: The virtue of donating items.
\newblock In \emph{Proceedings of the 2019 ACM Conference on Economics and Computation}, pages 527--545, 2019{\natexlab{a}}.

\bibitem[Caragiannis et~al.(2019{\natexlab{b}})Caragiannis, Kurokawa, Moulin, Procaccia, Shah, and Wang]{caragiannis2019unreasonable}
Ioannis Caragiannis, David Kurokawa, Herv{\'e} Moulin, Ariel~D Procaccia, Nisarg Shah, and Junxing Wang.
\newblock The unreasonable fairness of maximum nash welfare.
\newblock \emph{ACM Transactions on Economics and Computation (TEAC)}, 7\penalty0 (3):\penalty0 1--32, 2019{\natexlab{b}}.

\bibitem[Chaudhury et~al.(2021)Chaudhury, Kavitha, Mehlhorn, and Sgouritsa]{chaudhury2021little}
Bhaskar~Ray Chaudhury, Telikepalli Kavitha, Kurt Mehlhorn, and Alkmini Sgouritsa.
\newblock A little charity guarantees almost envy-freeness.
\newblock \emph{SIAM Journal on Computing}, 50\penalty0 (4):\penalty0 1336--1358, 2021.

\bibitem[Chaudhury et~al.(2024)Chaudhury, Garg, and Mehlhorn]{chaudhury2024efx}
Bhaskar~Ray Chaudhury, Jugal Garg, and Kurt Mehlhorn.
\newblock {EFX} exists for three agents.
\newblock \emph{Journal of the ACM}, 71\penalty0 (1):\penalty0 1--27, 2024.

\bibitem[Chen and Liu(2020)]{chen2020fairness}
Xingyu Chen and Zijie Liu.
\newblock The fairness of leximin in allocation of indivisible chores.
\newblock \emph{arXiv preprint arXiv:2005.04864}, 2020.

\bibitem[Christoforidis(2026)]{christoforidis2026note}
Vasilis Christoforidis.
\newblock A note on efx inapproximability for chores.
\newblock \emph{arXiv preprint arXiv:2605.21448}, 2026.

\bibitem[Christoforidis and Santorinaios(2024)]{christoforidis2024pursuit}
Vasilis Christoforidis and Christodoulos Santorinaios.
\newblock On the pursuit of {EFX} for chores: non-existence and approximations.
\newblock In \emph{Proceedings of the Thirty-Third International Joint Conference on Artificial Intelligence}, pages 2713--2721, 2024.

\bibitem[Ebadian et~al.(2022)Ebadian, Peters, and Shah]{ebadian2022fairly}
Soroush Ebadian, Dominik Peters, and Nisarg Shah.
\newblock How to fairly allocate easy and difficult chores.
\newblock In \emph{Proceedings of the 21st International Conference on Autonomous Agents and Multiagent Systems}, pages 372--380, 2022.

\bibitem[Eckart et~al.(2024)Eckart, Psomas, and Verma]{eckart2024fairness}
Owen Eckart, Alexandros Psomas, and Paritosh Verma.
\newblock On the fairness of normalized p-means for allocating goods and chores.
\newblock In \emph{Proceedings of the 25th ACM Conference on Economics and Computation}, pages 1267--1267, 2024.

\bibitem[Feldman et~al.(2024)Feldman, Mauras, and Ponitka]{feldman2024optimal}
Michal Feldman, Simon Mauras, and Tomasz Ponitka.
\newblock On optimal tradeoffs between efx and nash welfare.
\newblock In \emph{Proceedings of the AAAI Conference on Artificial Intelligence}, volume~38, pages 9688--9695, 2024.

\bibitem[Gafni et~al.(2023)Gafni, Huang, Lavi, and Talgam-Cohen]{gafni2023unified}
Yotam Gafni, Xin Huang, Ron Lavi, and Inbal Talgam-Cohen.
\newblock Unified fair allocation of goods and chores via copies.
\newblock \emph{ACM Transactions on Economics and Computation}, 11\penalty0 (3-4):\penalty0 1--27, 2023.

\bibitem[Garg and Murhekar(2026)]{garg2026existence}
Jugal Garg and Aniket Murhekar.
\newblock Existence of 2-{EFX} allocations of chores.
\newblock In \emph{Proceedings of the AAAI Conference on Artificial Intelligence}, volume~40, pages 16922--16929, 2026.

\bibitem[Garg et~al.(2022)Garg, Murhekar, and Qin]{garg2022fair}
Jugal Garg, Aniket Murhekar, and John Qin.
\newblock Fair and efficient allocations of chores under bivalued preferences.
\newblock In \emph{Proceedings of the AAAI Conference on Artificial Intelligence}, volume~36, pages 5043--5050, 2022.

\bibitem[Garg et~al.(2023)Garg, Murhekar, and Qin]{garg2023new}
Jugal Garg, Aniket Murhekar, and John Qin.
\newblock New algorithms for the fair and efficient allocation of indivisible chores.
\newblock In \emph{32nd International Joint Conference on Artificial Intelligence, IJCAI 2023}, pages 2710--2718. International Joint Conferences on Artificial Intelligence, 2023.

\bibitem[Garg et~al.(2025)Garg, Murhekar, and Qin]{garg2025constant}
Jugal Garg, Aniket Murhekar, and John Qin.
\newblock Constant-factor {EFX} exists for chores.
\newblock In \emph{Proceedings of the 57th Annual ACM Symposium on Theory of Computing}, pages 1580--1589, 2025.

\bibitem[Hosseini et~al.(2023)Hosseini, Sikdar, Vaish, and Xia]{hosseini2023fairly}
Hadi Hosseini, Sujoy Sikdar, Rohit Vaish, and Lirong Xia.
\newblock Fairly dividing mixtures of goods and chores under lexicographic preferences.
\newblock In \emph{Proceedings of the International Joint Conference on Autonomous Agents and Multiagent Systems, AAMAS}, volume 2023, pages 152--160. International Foundation for Autonomous Agents and Multiagent Systems (IFAAMAS), 2023.

\bibitem[Hv et~al.(2025)Hv, Ghosal, Nimbhorkar, and Varma]{hv2025efx}
Vishwa~Prakash Hv, Pratik Ghosal, Prajakta Nimbhorkar, and Nithin Varma.
\newblock {EFX} exists for three types of agents.
\newblock In \emph{Proceedings of the 26th ACM Conference on Economics and Computation}, pages 101--128, 2025.

\bibitem[Jin and Tao(2025)]{jin2025pareto}
Jiarong Jin and Biaoshuai Tao.
\newblock On pareto-optimal and fair allocations with personalized bi-valued utilities.
\newblock In \emph{International Conference on Web and Internet Economics}, pages 521--537. Springer, 2025.

\bibitem[Kobayashi et~al.(2025)Kobayashi, Mahara, and Sakamoto]{kobayashi2025efx}
Yusuke Kobayashi, Ryoga Mahara, and Souta Sakamoto.
\newblock {EFX} allocations for indivisible chores: Matching-based approach.
\newblock \emph{Theoretical Computer Science}, 1026:\penalty0 115010, 2025.

\bibitem[Li et~al.(2022)Li, Li, and Wu]{li2022almost}
Bo~Li, Yingkai Li, and Xiaowei Wu.
\newblock Almost (weighted) proportional allocations for indivisible chores.
\newblock In \emph{Proceedings of the ACM Web Conference 2022}, pages 122--131, 2022.

\bibitem[Li et~al.(2025)Li, Tao, Wang, Wu, Yang, and Zhou]{li2025truthfully}
Bo~Li, Biaoshuai Tao, Fangxiao Wang, Xiaowei Wu, Mingwei Yang, and Shengwei Zhou.
\newblock When is truthfully allocating chores no harder than goods?
\newblock In \emph{International Symposium on Algorithmic Game Theory}, pages 247--264. Springer, 2025.

\bibitem[Lin et~al.(2025)Lin, Wu, and Zhou]{lin2025approximately}
Zehan Lin, Xiaowei Wu, and Shengwei Zhou.
\newblock Approximately {EFX} and f{PO} allocations for bivalued chores.
\newblock In \emph{Proceedings of the Thirty-Fourth International Joint Conference on Artificial Intelligence}, pages 3952--3960, 2025.

\bibitem[Lin et~al.(2026)Lin, Wu, and Zhou]{lin2026allocating}
Zehan Lin, Xiaowei Wu, and Shengwei Zhou.
\newblock Allocating chores with restricted additive costs: Achieving {EFX}, {MMS}, and efficiency simultaneously.
\newblock In \emph{Proceedings of the ACM Web Conference 2026}, pages 146--156, 2026.

\bibitem[Lipton et~al.(2004)Lipton, Markakis, Mossel, and Saberi]{Lipton04onapproximately}
Richard Lipton, Evangelos Markakis, Elchanan Mossel, and Amin Saberi.
\newblock On approximately fair allocations of indivisible goods.
\newblock In \emph{Proceedings of the ACM Conference on Electronic Commerce (EC)}, pages 125--131, 2004.

\bibitem[Liu et~al.(2024)Liu, Lu, Suzuki, and Walsh]{liu2024mixed}
Shengxin Liu, Xinhang Lu, Mashbat Suzuki, and Toby Walsh.
\newblock Mixed fair division: A survey.
\newblock \emph{Journal of Artificial Intelligence Research}, 80:\penalty0 1373--1406, 2024.
\newblock \doi{10.1613/jair.1.15800}.

\bibitem[Mackenzie and Suzuki(2026)]{mackenzie2026counterexamples}
Simon Mackenzie and Mashbat Suzuki.
\newblock Counterexamples to {EFX} for submodular and subadditive valuations.
\newblock \emph{arXiv preprint arXiv:2605.06451}, 2026.

\bibitem[Mahara(2023)]{mahara2023existence}
Ryoga Mahara.
\newblock Existence of {EFX} for two additive valuations.
\newblock \emph{Discrete Applied Mathematics}, 340:\penalty0 115--122, 2023.

\bibitem[Mahara(2024)]{mahara2024extension}
Ryoga Mahara.
\newblock Extension of additive valuations to general valuations on the existence of {EFX}.
\newblock \emph{Mathematics of operations research}, 49\penalty0 (2):\penalty0 1263--1277, 2024.

\bibitem[Mahara(2026)]{mahara2026existence}
Ryoga Mahara.
\newblock Existence of fair and efficient allocation of indivisible chores.
\newblock In \emph{Proceedings of the 2026 Annual ACM-SIAM Symposium on Discrete Algorithms (SODA)}, pages 6742--6766. SIAM, 2026.

\bibitem[Plaut and Roughgarden(2020)]{plaut2020almost}
Benjamin Plaut and Tim Roughgarden.
\newblock Almost envy-freeness with general valuations.
\newblock \emph{SIAM Journal on Discrete Mathematics}, 34\penalty0 (2):\penalty0 1039--1068, 2020.

\bibitem[Sun et~al.(2021)Sun, Chen, and Doan]{sun2021connections}
Ankang Sun, Bo~Chen, and Xuan~Vinh Doan.
\newblock Connections between fairness criteria and efficiency for allocating indivisible chores.
\newblock In \emph{Proceedings of the 20th International Conference on Autonomous Agents and Multiagent Systems (AAMAS 2021)}, pages 1281--1289. ACM, 2021.

\bibitem[Tao et~al.(2025)Tao, Wu, Yu, and Zhou]{tao2025existence}
Biaoshuai Tao, Xiaowei Wu, Ziqi Yu, and Shengwei Zhou.
\newblock On the existence of {EFX} (and pareto-optimal) allocations for binary chores.
\newblock \emph{Theoretical Computer Science}, 1042:\penalty0 115248, 2025.

\bibitem[Zhou and Wu(2024)]{zhou2024approximately}
Shengwei Zhou and Xiaowei Wu.
\newblock Approximately {EFX} allocations for indivisible chores.
\newblock \emph{Artificial Intelligence}, 326:\penalty0 104037, 2024.

\bibitem[Zhou et~al.(2024)Zhou, Wei, Li, and Li]{zhou2024complete}
Yu~Zhou, Tianze Wei, Minming Li, and Bo~Li.
\newblock A complete landscape of {EFX} allocations on graphs: Goods, chores and mixed manna.
\newblock In \emph{Proceedings of the Thirty-Third International Joint Conference on Artificial Intelligence}, pages 3049--3056, 2024.
\newblock \doi{10.24963/ijcai.2024/338}.

\end{thebibliography}

\newpage
\appendix

\section{Proof of Theorem~\ref{thm:tri} for General $\boldsymbol{n}$}
\label{append:tri-n}
In this section, we prove Theorem~\ref{thm:tri} for an arbitrary fixed $n\geq 4$.

\subsection{The construction}
Fix an integer $n\ge 4$. Partition the agents into two groups $T_1$ and $T_2$ such that
\[
    |T_1|=\left\lfloor \frac n2\right\rfloor,
    \qquad
    |T_2|=\left\lceil \frac n2\right\rceil.
\]
Let $t_1=|T_1|$ and $t_2=|T_2|$.
Let $s=2t_2+1$.
The instance contains $n-1+2s$ items that are partitioned into three groups $A,B,C$ where $A$ contains $n-1$ items and each of $B$ and $C$ contains $s$ items.

The three values in the tri-valued instances are given by
$$p=1,\quad q=s+2=2t_2+3,\quad\mbox{and}\quad r=\frac{s(q+1)}2=(2t_2+1)(t_2+2).$$

The costs are defined as follows. For every agent $i\in T_1$,
\[
    c_i(a)=r \quad(a\in A),
    \qquad
    c_i(b)=p \quad(b\in B),
    \qquad
    c_i(c)=q \quad(c\in C),
\]
and for every agent $i\in T_2$,
\[
    c_i(a)=r \quad(a\in A),
    \qquad
    c_i(b)=q \quad(b\in B),
    \qquad
    c_i(c)=p \quad(c\in C).
\]

For $n=4$, we have $t_1=t_2=2$, $s=5$, $q=7$, and $r=20$.
This resembles our construction in Sect.~\ref{sec:tri}.

\subsection{Non-existence of EFX allocations}
Let $X=(X_1,\ldots,X_n)$ be an arbitrary allocation. For a bundle $Y\subseteq A\cup B\cup C$, write
\[
    a(Y)=|Y\cap A|,
    \qquad
    b(Y)=|Y\cap B|,
    \qquad
    c(Y)=|Y\cap C|.
\]
All agents in $T_1$ have the same cost for a bundle $Y$, namely
\[
    P_1(Y)=r\cdot a(Y)+b(Y)+q\cdot c(Y),
\]
and all agents in $T_2$ have the same cost for $Y$, namely
\[
    P_2(Y)=r\cdot  a(Y)+q\cdot b(Y)+c(Y).
\]
Define
\[
    p_1=\min_{i\in \{1,\ldots,n\}} P_1(X_i),
    \qquad\mbox{and}\qquad
    p_2=\min_{i\in \{1,\ldots,n\}} P_2(X_i).
\]
For each agent, the total cost of all items is
\[
    (n-1)r+s(1+q)=(n-1)r+2r=(n+1)r,
\]
where the middle equality uses $r=s(q+1)/2$. 
Hence
\begin{equation}\label{eqn:2rbound}
    p_1\le \frac{(n+1)r}{n}<2r,
    \qquad\mbox{and}\qquad
    p_2\le \frac{(n+1)r}{n}<2r.
\end{equation}
We show that $X$ cannot be EFX.

\begin{proposition}
    In an EFX allocation, no bundle contains two or more items from $A$.
\end{proposition}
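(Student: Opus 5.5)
We mimic the $n=4$ argument, using the bound \eqref{eqn:2rbound}. Suppose $X$ is EFX and some bundle $X_i$ contains at least two items of $A$. Whichever group $i$ belongs to, every item of $A$ costs her $r$ and every other item costs her at least $1$. So the cheapest item in $X_i$ costs at most $r$, and hence
\[
\tau_i(X_i)\ge c_i(X_i)-r\ge r\,(a(X_i)-1).
\]

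\textbf{Step 1: three or more items of $A$.} If $a(X_i)\ge 3$, then $\tau_i(X_i)\ge 2r$. By \eqref{eqn:2rbound}, agent $i$ sees some bundle of cost $p_1<2r$ (or $p_2<2r$), so she strongly envies it. Contradiction.

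\textbf{Step 2: exactly two items of $A$.} Now $\tau_i(X_i)\ge r$, and EFX forces $c_i(X_j)\ge \tau_i(X_i)\ge r$ for every $j\ne i$. Say $i\in T_1$; the case $i\in T_2$ is symmetric with $B$ and $C$ swapped. The total cost to agent $i$ is $(n+1)r$, so
\[
\sum_{j\ne i}P_1(X_j)\le (n+1)r-c_i(X_i)\le (n+1)r-2r=(n-1)r .
\]
The $n-1$ other bundles each cost at least $r$, so every inequality is tight: $c_i(X_i)=2r$ exactly (that is, $X_i$ is exactly two $A$-items, since any further item has positive cost), and $P_1(X_j)=r$ for all $j\ne i$.

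\textbf{Step 3: contradiction via counting.} This is the main step. The remaining $n-3$ items of $A$ sit in $n-3$ distinct bundles among the other agents, each of which must be exactly that single item, because adding anything exceeds $r$. So all of $B\cup C$ is split among the remaining two bundles, each of cost exactly $r$ under $P_1$ and containing no $A$-items. Their total cost is $s+qs=s(q+1)=2r$, which is consistent, so a pure sum argument fails. I would use divisibility instead: a bundle with $b$ items of $B$ and $c$ items of $C$ has $b+qc=r$ with $b\le s=q-2<q$. Thus $b\equiv r \pmod q$ is forced, and $b$ is uniquely determined as $r \bmod q$, with $c=\lfloor r/q\rfloor$. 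Both bundles then have the same $b$, so $s=2b$. But $s=2t_2+1$ is odd, a contradiction.

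It remains to check that $r \bmod q$ is indeed at most $s$, or else no such bundle exists at all, which also gives a contradiction. Either way this step closes. For $n=4$: $r=20$, $q=7$, $b\equiv 6$, which exceeds $s=5$, so no bundle works, matching the paper's "at most 2 items of $C$" argument. I expect this parity/divisibility step to be the only delicate point; the earlier steps are direct transcriptions of the $n=4$ proof.
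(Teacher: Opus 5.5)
Your proof is correct and follows essentially the same argument as the paper: three or more $A$-items contradict $p_1,p_2<2r$. Exactly two $A$-items force every other bundle to cost exactly $r$, which leaves two $A$-free bundles of cost exactly $r$, and these are ruled out modulo $q$ (you handle the ``extra item'' case through tightness of the sum, where the paper removes that item directly, a minor variation). In the last step the paper computes $r=qt_2+(s+1)$, so $r\equiv s+1\pmod q$ with $s<s+1<q$, and hence no single $A$-free bundle of cost $r$ exists; this is the second branch of your case split, so your parity fallback is valid but never actually needed.
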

\begin{proof}
Suppose some agent receives a bundle containing at least two items from $A$. 
If the bundle contains at least three items from $A$, then after removing one of them, the remaining cost for that agent is at least $2r$, contradicting \eqref{eqn:2rbound} and EFX. 
If the bundle contains exactly two items from $A$ and also contains some other item, then after removing that other item, the remaining cost is again at least $2r$, contradicting \eqref{eqn:2rbound} and EFX.

Therefore, the only possible case is that some bundle is exactly $\{a,a'\}$ for two items $a,a'\in A$. 
Suppose first that this bundle is assigned to an agent in $T_1$, and suppose this agent is $1$ without loss of generality.
Removing one of the two items leaves cost $r$ (i.e., $\tau_1(X)=r$), so EFX implies $p_1\ge r$. 
Since $c_1(X_1)=2r$, and since the cost for all items is $(n+1)r$, the remaining $n-1$ bundles must all have cost exactly $r$ to maintain EFX. 
Among those $n-1$ bundles, there are only $n-3$ remaining large items from $A$; every bundle with one large item in $A$ with cost $r$ must be a singleton bundle. 
Consequently, two bundles must contain no large items from $A$ and have cost exactly $r$ to agent $1$. 
Thus, there must be nonnegative integers $y,z\le s$ such that
\begin{equation}\label{eqn:yqzr}
    y+q\cdot z=r.
\end{equation}
On the other hand, since $s=2t_2+1$ and $q=2t_2+3$,
\[
    r=(2t_2+1)(t_2+2)=qt_2+(s+1).
\]
Thus $r\equiv s+1\pmod q$. The left-hand side of \eqref{eqn:yqzr} is congruent to $y$ modulo $q$, while $0\le y\le s$. This is impossible because $s+1$ is not in the set $\{0,1,\ldots,s\}$.

The case where the bundle $\{a,a'\}$ is assigned to an agent in $T_2$ is symmetric: EFX implies $p_2\geq r$, and the same reasoning gives nonnegative integers $y,z\le s$ satisfying $q\cdot y+z=r$, which is impossible modulo $q$ because $0\le z\le s$ but $r\equiv s+1\pmod q$.
\end{proof}

By the proposition above, since there are $n-1$ items in $A$ and $n$ agents, exactly one bundle contains no item in $A$ and every other bundle contains exactly one item in $A$.

\begin{proposition}
    In an EFX allocation, $p_1\ge r$ and $p_2\ge r$.
\end{proposition}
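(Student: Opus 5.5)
The plan is to reduce the claim to the single bundle that contains no item of $A$ and then argue by contradiction. By the preceding proposition, exactly one bundle, call it $X_0$ and its owner $h$, contains no item from $A$, and every other bundle contains exactly one item from $A$. Any bundle containing an item of $A$ costs at least $r$ under both $P_1$ and $P_2$. Hence $p_1=\min\{r',P_1(X_0)\}$ with $r'\ge r$, and the same holds for $p_2$. So it suffices to show $P_1(X_0)\ge r$ and $P_2(X_0)\ge r$. I will prove the first; the second is symmetric with $T_1,B$ exchanged for $T_2,C$.

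Suppose $P_1(X_0)<r$. Since $t_1\ge 2$ and only one agent owns $X_0$, at least $t_1-1\ge 1$ agents of $T_1$ own a bundle containing an item $a\in A$. Take such an agent $k\in T_1$. If $X_k$ contained any further item, removing that item would leave cost at least $r>P_1(X_0)$, so $k$ would strongly envy $h$. Therefore every $T_1$ agent holding an $A$-item holds exactly the singleton $\{a\}$. Fix one such singleton $\{a\}$; it costs $r$ to every agent. Now consider any agent $j$ holding an $A$-item together with at least two further items. Removing a cheapest item still leaves $a$ plus another item, so $\tau_j\ge r+1>r=c_j(\{a\})$, and EFX fails. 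Hence every $A$-holder has at most one extra item.

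The main step is the counting. The $A$-holders in $T_1$ (at least $t_1-1$ of them) contribute no items of $B\cup C$. The remaining $A$-holders number at most $(n-1)-(t_1-1)=t_2$, and each contributes at most one item of $B\cup C$. So at most $t_2$ items of $B\cup C$ lie outside $X_0$. Since $|C|=s$, this gives
\[
  c(X_0)\ge s-t_2=t_2+1,
\]
and therefore
\[
  P_1(X_0)\ge q\,(t_2+1)=(2t_2+3)(t_2+1)=2t_2^2+5t_2+3>2t_2^2+5t_2+2=(2t_2+1)(t_2+2)=r .
\]
This contradicts $P_1(X_0)<r$.

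For $P_2$, assume $P_2(X_0)<r$. The same argument, using $t_2\ge2$, makes every $T_2$ agent holding an $A$-item a singleton. At most $t_1\le t_2$ items of $B\cup C$ then lie outside $X_0$, so $b(X_0)\ge s-t_1\ge t_2+1$. This gives $P_2(X_0)\ge q(t_2+1)>r$, again a contradiction. The only delicate point is the counting: it must exploit that all $T_1$ (respectively $T_2$) $A$-holders are singletons. The cruder bound of $2n-3$ items among $A$-holders, used in the $n=4$ case, is not strong enough for large $n$.
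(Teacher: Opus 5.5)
Your proof is correct and follows essentially the same route as the paper. You assume the no-$A$ bundle is cheap, force the $T_1$ (resp.\ $T_2$) $A$-holders to be singletons, use such a singleton to cap every other $A$-holder at one extra item, and count to get at least $t_2+1$ expensive items in the no-$A$ bundle. The differences are cosmetic: you compare against the singleton directly instead of passing through $p_2\le r$, and you bound $P_1(X_0)$ using only the $C$-items, $q(t_2+1)=r+1$, rather than the paper's $s+(t_2+1)q$.
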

\begin{proof}
We prove $p_1\ge r$; the proof of $p_2\ge r$ is symmetric. Suppose toward a contradiction that $p_1<r$. 
Consider any agent in $T_1$ whose bundle contains a large item in $A$. 
By the previous proposition, this bundle contains exactly one large item in $A$. 
It cannot contain any additional item: if it did, then removing that additional item would leave cost at least $r>p_1$, violating EFX. 
Thus every $T_1$-agent who receives a large item receives a singleton large item bundle.

There is at least one such agent, because $|T_1|=\lfloor n/2\rfloor\ge 2$ and only one bundle contains no large item. Therefore, $p_2\le r$. Now consider any agent in $T_2$ whose bundle contains a large item in $A$. Since $p_2\le r$, this agent can receive at most one more item outside $A$; otherwise, after removing this extra item, her remaining cost would be strictly larger than $r$, hence strictly larger than $p_2$, violating EFX.

It follows that the unique bundle containing no large item in $A$ must contain at least $2s-t_2$ items from $B\cup C$: outside this bundle, the $t_2$ agents in $T_2$ can hold at most one non-$A$ item each, and the $T_1$ agents holding an $A$-item hold none. 
Since $2s-t_2=s+(t_2+1)$ and only $s$ items of $B$ are cheap under $P_1$, the $P_1$-cost of the no-$A$ bundle is at least
\[
    s+(t_2+1)q.
\]
This quantity is strictly larger than $r$, since
\[
    s+(t_2+1)q-r
    =(2t_2+1)+(t_2+1)(2t_2+3)-(2t_2+1)(t_2+2)
    =2t_2+2>0.
\]
Thus the no-$A$ bundle has $P_1$-cost greater than $r$, while every bundle containing an $A$-item has $P_1$-cost at least $r$. This contradicts $p_1<r$. Hence $p_1\ge r$. 

The proof of $p_2\ge r$ is identical after interchanging the roles of $T_1$ and $T_2$, and of $B$ and $C$; in the counting step, we use $|T_1|\le t_2$, so the no-$A$ bundle contains at least $2s-|T_1|\ge 2s-t_2$ non-$A$ chores, which is enough for the same lower bound. 
\end{proof}

We can now derive the final contradiction. Let $X_i$ be the unique bundle containing no $A$-item.

First suppose $i\in T_1$. Write $y=b(X_i)$ and $z=c(X_i)$.
By the proposition above,
\begin{equation}\label{eqn:yzsymm}
    y+q z=P_1(X_i)\ge r,
    \qquad\mbox{and}\qquad
    q y+z=P_2(X_i)\ge r. 
\end{equation}
Since $0\le y,z\le s$, the inequalities in \eqref{eqn:yzsymm} imply
\[
    z\ge \left\lceil \frac{r-s}{q}\right\rceil
    \qquad\text{and}\qquad
    y\ge \left\lceil \frac{r-s}{q}\right\rceil.
\]
Moreover,
\[
    \frac{r-s}{q}
    =\frac{(2t_2+1)(t_2+2)-(2t_2+1)}{2t_2+3}
    =t_2+\frac{1}{2t_2+3},
\]
so
\begin{equation}\label{eqn:yzt_2+1}
    y\ge t_2+1,
    \qquad\mbox{and}\qquad
    z\ge t_2+1.  
\end{equation}
Because $i\in T_1$ and $y\ge t_2+1>0$, the bundle $X_i$ contains a $B$-item of cost $1$ for agent $i$. 
EFX for agent $i$ therefore implies
\[
    P_1(X_i)-1\le p_1.
\]
Using \eqref{eqn:yzt_2+1}, we get
\begin{equation}\label{eqn:p1lower}
    p_1\ge P_1(X_i)-1
    =y+q z-1
    \ge (t_2+1)(q+1)-1
    =r+t_2+1.   
\end{equation}
Now at most $s-y\le t_2$ items of $B$ remain outside $X_i$. Consider any $A$-holding bundle that contains no item from $C$. 
Its $P_1$-cost is at most $r+t_2$, because it contains one $A$-item and can contain at most all the remaining $t_2$ items from $B$. This is strictly less than $p_1$ by \eqref{eqn:p1lower}, contradicting the definition of $p_1$. Therefore, every one of the $n-1$ $A$-holding bundles must contain at least one item from $C$.
On the other hand, at most $s-z\le t_2$ items from $C$ remain outside $X_i$, whereas
$n-1>t_2$
for every $n\ge 4$ with $t_2=\lceil n/2\rceil$. 
This is impossible.

The case $i\in T_2$ is symmetric. In that case, EFX for the no-$A$ agent gives $p_2\ge r+t_2+1$; then every $A$-holding bundle must contain at least one item from $B$, but only at most $t_2$ items from $B$ remain outside $X_i$, again fewer than the $n-1$ $A$-holding bundles.

Thus no EFX allocation exists for the constructed instance.

\section{More Details in Concatenation of $\boldsymbol{M_{01}}$ Prefix and $\boldsymbol{M_2}$ Allocation}
\label{append:concatenation}

In this section, we provide details in the case study for concatenating the $M_{01}$ Prefix and the $M_2$ Allocation discussed in Sect.~\ref{subsec:combine}.

\subsection{Case $\boldsymbol{b=0}$}
This case is relatively easy to handle.
The corresponding arguments in Sect.~\ref{subsec:b=0} already give a formal proof.

\subsection{Case $\boldsymbol{b=1}$}
Recall that $|M_{01}|=4a+1$.  Thus, in a canonical allocation of $M_{01}$,
exactly one agent receives $a+1$ items, and the other three agents receive $a$
items.  Recall that the former agent is the long agent and the latter agents are the short agents. We discuss two cases, according to whether some edge type in $M_2$ has multiplicity at least three.

\subsubsection{There exists an edge type with multiplicity at least three}

Choose an edge type of maximum multiplicity, and assume without loss of
generality that this type is $(1,2)$.  Thus $x_{12}\ge 3$ and
$x_{12}\ge x_{ij}$ for every edge type $(i,j)$.

We further split into two cases.
\paragraph{B.2.1(a) There is another edge intersecting $\boldsymbol{(1,2)}$.}
Assume without loss of generality that edge is an item of type $(2,3)$.
Choose a canonical allocation $P$ of $M_{01}$ with agent $4$ being the long
agent, i.e.,
\[
    |P_4|=a+1,\qquad |P_1|=|P_2|=|P_3|=a.
\]
Use two type-$(1,2)$ items and one type-$(2,3)$ item for gap filling: give one
type-$(1,2)$ item to agent $1$, one type-$(1,2)$ item to agent $2$, and the
type-$(2,3)$ item to agent $3$.

Let $P'$ denote the allocation after this gap filling.  We first verify that
$P'$ is envy-free.  Agents $1,2,3$ now have $a+1$ items, and the extra item
assigned to each of them is small for its owner.  For any two agents among
$1,2,3$, the $M_{01}$ parts have the same size $a$ and are canonical, hence
the usual equal-quota argument in Lemma~\ref{lem:canonical}(a) gives
\[
    c_i(P_i)\le c_i(P_j).
\]
Adding one item of cost $1$ to agent $i$ and one item of cost at least $1$ to
agent $j$ preserves the inequality.  Therefore, agents $1,2,3$ do not envy each
other.

Next, consider the comparison between a short agent $i\in\{1,2,3\}$ and the
long agent $4$.  If all items of $P_i$ are small for agent $i$, then
$c_i(P_i')=a+1$, whereas $P_4$ contains $a+1$ items and hence
$c_i(P_4)\ge a+1$.  If not all items of $P_i$ are small for agent $i$, then
by canonicity all items of $M_{01}$ that are small for $i$ must have already been assigned to $P_i$\,; in particular, the long bundle
$P_4$ is sufficiently costly from agent $i$'s perspective, and again
$c_i(P_i')\le c_i(P_4)$. 

Finally, from agent $4$'s perspective, all three gap-filling items assigned to agents $1,2,3$ are large.  Hence, the additional
cost on every short bundle is $r$, while agent $4$ receives no gap-filling
item.  Therefore, agent $4$ also does not envy any of agents $1,2,3$.
Thus $P'$ is envy-free.

Let $M_2^-$ be the remaining multiset of $M_2$ items after the above
gap-filling step.  If $M_2^-$ is not exceptional, we allocate $M_2^-$ by
Lemma~\ref{lem:M2properties}(i).  Since $P'$ is envy-free and every residual
envy is certified by removing a small item, Lemma~\ref{lem:composition} implies
that the combined allocation is EFX.

It remains to consider the case where $M_2^-$ is exceptional.  Since
$x_{12}$ was maximum and we removed only two type-$(1,2)$ items, at least one
type-$(1,2)$ item remains in $M_2^-$.  Therefore, the exceptional agents can
only be either $\{1,2\}$ or $\{3,4\}$.

\subparagraph{The exceptional agents of $\boldsymbol{M_2^-}$ are $\boldsymbol{\{1,2\}}$. } Then $M_2^-$ must consist
of one type-$(1,2)$ item and $4q+3$ type-$(3,4)$ items.  Since the original
multiplicity of type $(1,2)$ was maximum and two such items have been removed (for gap-filling),
this is possible only when
\[
    x_{12}=x_{34}=3,
\]
and there is exactly one type-$(2,3)$ item.  Hence the whole multiset $M_2$
consists of three type-$(1,2)$ items, one type-$(2,3)$ item, and three
type-$(3,4)$ items.

In this special situation we do not use the preceding gap-filling strategy.
Instead, we take a super-canonical allocation $P$ of $M_{01}$, whose existence is
guaranteed by Lemma~\ref{lem:canonical}(b).  Exactly one agent is long.  We
allocate the items in $M_2$ directly.

If agent $1$ is long, let agent $2$ take the three type-$(1,2)$ items, let
agent $3$ take the type-$(2,3)$ item and one type-$(3,4)$ item, and let agent
$4$ take the remaining two type-$(3,4)$ items.  If agent $2$ is long, we
symmetrically let agent $1$ take the three type-$(1,2)$ items, while agents
$3$ and $4$ are treated in the same way.  The cases where agent $3$ or agent
$4$ is long are symmetric.

We verify EFX for the representative case in which agent $1$ is long.  The
long agent receives no item from $M_2$, and every other agent receives only
items that are small for herself.  For every short agent $i$, super-canonicity
gives
\[
    c_i(P_i)\le c_i(P_1)-r.
\]
After appending the above $M_2$ allocation, each short agent receives at most
three small items.  Since $r>2$, after removing any small item from a short
agent's appended part, the remaining additional cost is at most $2<r$.
Therefore, no short agent strongly envies the long agent.  Comparisons among
short agents are immediate from the equal-quota canonical property of their
prefix bundles and the fact that the appended bundles are small for their
owners.  Finally, the long agent keeps her own prefix bundle and only sees other bundles become more costly.  Hence, the final allocation is EFX.

\subparagraph{The exceptional agents of $\boldsymbol{M_2^-}$ are $\boldsymbol{\{3,4\}}$.}  In the residual
allocation of $M_2^-$ given by Lemma~\ref{lem:M2properties}(ii), choose agent
$4$ to be the disadvantageous agent.  All residual envy relations except
possibly the one from agent $4$ to agent $3$ are already certified by removing
a small item, and hence are handled by Lemma~\ref{lem:composition}, because
the gap-filled prefix $P'$ is envy-free.

It remains to check the comparison from agent $4$ to agent $3$.  If $P_4$
contains no small item for agent $4$, then all items in agent $4$'s final
bundle are large for her, and the
large-item removal allowed in Lemma~\ref{lem:M2properties}(ii) is sufficient.
If $P_4$ contains a small item for agent $4$, then the EFX threshold of the
prefix part is obtained by removing a small item.  Since $P$ is canonical,
we have
\[
    c_4(P_4)-1\le c_4(P_3).
\]
Let $h$ be the gap-filling item assigned to agent $3$.  Since $h$ is of type
$(2,3)$, it is large for agent $4$, and hence
\[
    c_4(P_3\cup\{h\})=c_4(P_3)+r.
\]
Therefore,
\[
    c_4(P_4)
    \le c_4(P_3)+1
    = c_4(P_3\cup\{h\})-(r-1).
\]
On the other hand, in the exceptional residual allocation of $M_2^-$, agent
$4$ is chosen as the disadvantageous agent.  Thus the only possible problematic
residual comparison is from agent $4$ to agent $3$, and it is certified by
removing a large item:
\[
    c_4(B_4)-r\le c_4(B_3).
\]
Since the final bundle of agent $4$ contains a small item in the prefix, its
EFX threshold in the combined allocation is
\[
    \tau_4(X)=c_4(P_4)+c_4(B_4)-1.
\]
Combining the two displayed inequalities gives
\[
\begin{aligned}
    \tau_4(X)
    &= c_4(P_4)+c_4(B_4)-1  \\
    &\le \bigl(c_4(P_3\cup\{h\})-(r-1)\bigr)+c_4(B_4)-1 \\
    &= c_4(P_3\cup\{h\})+c_4(B_4)-r \\
    &\le c_4(P_3\cup\{h\})+c_4(B_3) \\
    &= c_4(X_3).
\end{aligned}
\]
Therefore agent $4$ does not strongly envy agent $3$ in the combined
allocation.  Hence the final allocation is EFX.

\paragraph{B.2.1(b) No other edge intersects $\boldsymbol{(1,2)}$.}
In this case, all items of $M_2$ have type $(1,2)$ or type $(3,4)$, and by our
choice of the maximum multiplicity edge,
\[
    x_{12}\ge x_{34},\qquad x_{12}\ge 3.
\]

We first choose a canonical allocation $P$ of $M_{01}$ in which one of agents
$3,4$ is long and the other is short, with the following additional property:
the short agent among $\{3,4\}$ has an advantage of at least $r$ over the long
agent among $\{3,4\}$.  That is, after relabeling agents $3$ and $4$ if
necessary, we may assume
\[
    |P_3|=a+1,\qquad |P_4|=a,
\]
and
\[
    c_4(P_4)\le c_4(P_3)-r.
\]
Such a canonical allocation always exists.  Indeed, first choose a canonical
allocation with agent $3$ long and agent $4$ short.  If the displayed
inequality fails, then the only possible reason is that agent $4$ regards all
items in $P_3$ as small.  Moving one item from $P_3$ to $P_4$ and then swapping the names of
agents $3$ and $4$ yields the desired canonical allocation.

We firstly handle two small exceptional parameter values separately.

If
\[
    x_{12}=x_{34}=3,
\]
allocate two type-$(1,2)$ items to agent $1$, one type-$(1,2)$ item to agent
$2$, no $M_2$ item to agent $3$, and all three type-$(3,4)$ items to agent
$4$.  If
\[
    x_{12}=4,\qquad x_{34}=3,
\]
we use the same allocation, except that agent $2$ receives one additional
type-$(1,2)$ item.  In both cases, agents $1,2,4$ receive only items that are
small for themselves, while agent $3$ receives no item from $M_2$.  The only
potentially delicate comparison is from agent $4$ to agent $3$, but this is
covered by the prefix advantage
\[
    c_4(P_4)\le c_4(P_3)-r.
\]
Since $r>2$, removing any one of agent $4$'s small type-$(3,4)$ items leaves
at most two additional units of cost, which is strictly less than the prefix
advantage.  Therefore, the resulting allocation is EFX.

We may therefore assume that we are not in either of the two special cases
above.  Use three type-$(1,2)$ items for gap filling: give one to agent $1$,
one to agent $2$, and one to agent $4$.  The first two agents receive small
items, while agent $4$ receives an item that is large for her.  By the choice
of $P$, the large item assigned to agent $4$ is exactly compensated by the
$r$-advantage of agent $4$ over agent $3$ in the prefix.  Hence, the
gap-filled prefix is envy-free.

Let $M_2^-$ be the residual multiset.  If $M_2^-$ is not exceptional, then
Lemma~\ref{lem:M2properties}(i) and Lemma~\ref{lem:composition} imply that the
combined allocation is EFX.

If $M_2^-$ is exceptional, then its exceptional agents cannot be $\{1,2\}$.
Indeed, for $\{1,2\}$ to be exceptional, the residual would have to contain
at most one type-$(1,2)$ item and $4q+3$ type-$(3,4)$ items.  Since three
type-$(1,2)$ items were removed and $x_{12}\ge x_{34}$ originally, this can
only happen in the two special cases $x_{12}=x_{34}=3$ or
$x_{12}=4,x_{34}=3$, which we have already handled.  Thus, the exceptional
agents must be $\{3,4\}$.

Choose agent $3$ to be the disadvantageous agent in the exceptional residual
allocation.  If $P_3$ contains no small item for agent $3$, then agent $3$'s
final bundle contains only large items in the relevant exceptional comparison,
so the large-item removal guaranteed by Lemma~\ref{lem:M2properties}(ii)
suffices.  Otherwise, $P_3$ contains a small item, and the gap-filling item
assigned to agent $4$ is of type $(1,2)$, which is large for agent $3$.
This creates the necessary $r-1$ slack to compensate for the exceptional
large-removal comparison.  Therefore, agent $3$ does not strongly envy agent
$4$, and the final allocation is EFX.

\subsubsection{All edge multiplicities are at most two}

We now assume that every edge type in $M_2$ has multiplicity at most two.
Then no residual multiset obtained by deleting items can be exceptional,
because every exceptional configuration contains an edge type of multiplicity
at least three.  Therefore, it is enough to use some items of $M_2$ for
gap-filling so that the prefix becomes envy-free; the remaining items can then
be allocated by Lemma~\ref{lem:M2properties}(i) and the resultant allocation is EFX by Lemma~\ref{lem:composition}.

\paragraph{B.2.2(a) Some agent considers at least three items of $\boldsymbol{M_2}$ as large.}
Assume without loss of generality that agent $1$ consider at least three
items of $M_2$ as large.  Choose three such items.  These items have edge types among
$(2,3),(2,4),(3,4)$.

Choose a canonical allocation $P$ of $M_{01}$ with agent $1$ long:
\[
    |P_1|=a+1,\qquad |P_2|=|P_3|=|P_4|=a.
\]
Under the assumption that the number of every category of items is at most $2$, then the three items which agent $1$ views as large cannot be of the same type, indicating that agents $2,3,4$ can always get an item which is small for them correspondingly.

We then verify that the current allocation after gap-filling is envy-free.
The canonical allocation before the gap-filling is EFX (Lemma~\ref{lem:canonical}(a)), so agent $1$ does not envy any of the agents $2,3,4$ upon removing one item from $P_1$.
Since, in the gap-filling, $P_1$ is unchanged while an item that is large for agent $1$ is added to each of $P_2,P_3,P_4$, the allocation after gap-filling is envy-free for agent $1$ since the extra large item added to $P_2,P_3$, or $P_4$ compensates the EFX certificate in $P_1$.
For agent $2$, before gap-filling, she has an advantage of one item (with cost either $1$ or $r$) against agent $1$, and she does not envy any of agents $3$ and $4$.
After the gap-filling, agent $2$ receives one more small item, and each of agents $3$ and $4$ receives one more item that can be small or large for agent $2$.
Envy-freeness holds for agent $2$: she does not envy agent $1$ as the advantage before gap-filling is at least $1$ (either $1$ or $r$ to be exact), so the extra small item does not surpass this advantage; she does not envy any of agents $3$ and $4$ as the new item added to $P_2$ is at most as large as the item added to $P_3$ or $P_4$.
The envy-freeness for agents $3$ and $4$ can be shown similarly.

After this gap-filling step, let $M_2^-$ be the remaining multiset.  
Since all multiplicities are at most two, $M_2^-$ is not exceptional.  
Allocate $M_2^-$ using Lemma~\ref{lem:M2properties}(i).  
By Lemma~\ref{lem:composition}, the combined allocation is EFX.

\paragraph{B.2.2(b) No agent has three large items in $\boldsymbol{M_2}$.}
Since every item of $M_2$ is large for exactly two agents, the total number of
large incidences over all agents is $2|M_2|$.  If no agent considers three
items as large, then each agent considers at most two items as large, and
\[
    2|M_2|\le 8.
\]
Thus, $|M_2|\le 4$.

If $|M_2|\le 3$, then, using the fact that every edge multiplicity is at most
two, we can assign the items of $M_2$ so that exactly $|M_2|$ agents receive
one item each, and each such item is small for its recipient.  Choose the
canonical prefix so that an agent receiving no item from $M_2$ is the long
agent.  Then the final allocation is EFX: the long agent receives no
additional item and only sees other bundles become more costly, while every
short agent who receives an item receives a small item, and comparisons among
short agents are handled by the equal-quota canonical argument.

It remains to consider $|M_2|=4$.  Under the current assumptions, there are
only two possible configurations up to relabeling:
\begin{enumerate}[label=(\arabic*)]
    \item two type-$(1,2)$ items and two type-$(3,4)$ items;
    \item one item of each of the four types
    \[
        (1,2),\ (2,3),\ (3,4),\ (1,4).
    \]
\end{enumerate}

Take a super-canonical allocation $P$ of $M_{01}$, guaranteed by
Lemma~\ref{lem:canonical}(b).  By symmetry, assume agent $1$ is the long
agent.

In configuration (1), give the two type-$(1,2)$ items to agent $2$, give one
type-$(3,4)$ item to agent $3$, and give the other type-$(3,4)$ item to agent
$4$.

In configuration (2), give the type-$(1,2)$ item and the type-$(2,3)$ item to
agent $2$, give the type-$(3,4)$ item to agent $3$, and give the type-$(1,4)$
item to agent $4$.

In both configurations, every short agent receives only items that are small
for herself, while the long agent receives no item of $M_2$.  By
super-canonicity, each short agent has an $r$-advantage over the long agent in
the prefix.  Since every short agent receives at most two small items from
$M_2$, after removing one such item, her additional residual cost is at most
$1<r$.  Thus, no short agent envies the long agent.  Comparisons among
short agents follow from the equal-quota canonical property of their prefix
bundles and the fact that items they receive in $M_2$ are all small in their perspective.  Finally, the long agent receives no $M_2$ item
and only sees the other bundles become weakly more costly.  Hence, the final
allocation is EFX.

Here we have completed the proof for the case $b=1$.

\subsection{Case $\boldsymbol{b=2}$}

We now consider the case $|M_{01}|=4a+2$.  In a canonical allocation of
$M_{01}$, exactly two agents receive $a$ items and the other two agents receive
$a+1$ items.  Recall that the former agents are short and the latter agents are long. We distinguish two cases according to whether some edge type in $M_2$ has
multiplicity at least two.

\subsubsection{Some edge type has multiplicity at least two}

Choose an edge type of maximum multiplicity, and assume without loss of
generality that it is $(1,2)$.  Thus $x_{12}\ge 2$ and
$x_{12}\ge x_{ij}$ for every edge type $(i,j)$.

Choose a canonical allocation $P$ of $M_{01}$ in which agents $1$ and $2$ are
short and agents $3$ and $4$ are long:
\[
    |P_1|=|P_2|=a,
    \qquad
    |P_3|=|P_4|=a+1.
\]
Use two type-$(1,2)$ items for gap filling, assigning one to agent $1$ and one
to agent $2$.  Let $P'$ denote the resulting partial allocation.

We first verify that $P'$ is envy-free.  Agents $1$ and $2$ each receive a
small gap-filling item, so their bundle now contains $a+1$ items.  For each
$i\in\{1,2\}$ and each $j\in\{3,4\}$, the same argument as in
Lemma~\ref{lem:canonical}(a) gives
\[
    c_i(P_i)+1\le c_i(P_j).
\]
Indeed, if $P_i$ consists entirely of items small for $i$, then
$c_i(P_i)+1=a+1\le c_i(P_j)$; otherwise, by canonicity, all remaining
$M_{01}$ items outside $P_i$ are large for $i$, and the inequality is even
easier.  Hence, neither agent $1$ nor agent $2$ envies agents $3,4$ after
gap filling.

For agents $3$ and $4$, the two gap-filling items assigned to agents $1$ and
$2$ are large.  Since $P$ is canonical, for $j\in\{3,4\}$ and
$i\in\{1,2\}$,
\[
    c_j(P_j)-\min_{g\in P_j}c_j(g)\le c_j(P_i).
\]
Therefore,
\[
    c_j(P_j)
    \le c_j(P_i)+\min_{g\in P_j}c_j(g)
    \le c_j(P_i)+r
    = c_j(P_i'),
\]
where the last equality uses that the type-$(1,2)$ gap-filling item assigned
to $i$ is large for $j$.  Thus, agents $3,4$ also do not envy agents $1,2$.
The comparison between agents $1$ and $2$, and the comparison between agents
$3$ and $4$, follow from the equal-quota part of the canonical allocation.
Hence, $P'$ is envy-free.

Let $M_2^-$ be the remaining multiset of $M_2$ items.  If $M_2^-$ is not
exceptional, then we allocate $M_2^-$ according to
Lemma~\ref{lem:M2properties}(i).  Since the prefix $P'$ is envy-free and all
residual envy is certified by removing a small item, Lemma~\ref{lem:composition}
implies that the combined allocation is EFX.

It remains to consider the case where $M_2^-$ is exceptional.  We first observe
that the exceptional agents can only be either $\{1,2\}$ or $\{3,4\}$.
Indeed, suppose for example that the exceptional agents were $\{2,4\}$.
Then, by Lemma~\ref{lem:M2properties}(ii), the residual multiset would contain
at least three items of the disjoint type $(1,3)$.  Since we removed only two
type-$(1,2)$ items (for gap-filling), this would imply that, before gap filling, the multiplicity of type $(1,3)$ was at least three, while the relevant multiplicity of type $(1,2)$ was only two in the tight residual situation.  This contradicts the choice of $(1,2)$ as an edge type of maximum multiplicity.  
The same argument excludes all exceptional pairs other than $\{1,2\}$ and $\{3,4\}$.

If the exceptional agents are $\{3,4\}$, then both exceptional agents are long
agents in the prefix allocation.  Therefore Lemma~\ref{lem:easycombine}
applies directly, and we obtain an EFX allocation of $M_{01}\cup M_2$.

It remains to handle the case where the exceptional agents are $\{1,2\}$.
Since two type-$(1,2)$ items have already been removed, and since
type $(1,2)$ had maximum multiplicity before gap filling, this case can occur
only when
\[
    x_{12}=x_{34}=3.
\]
In this situation the residual would consist of one type-$(1,2)$ item and
three type-$(3,4)$ items.  We avoid this residual configuration by abandoning
the preceding gap-filling step and handling the original instance directly.

First assume $a>0$.  Apply Lemma~\ref{lem:canonical}(c) to the partition
\[
    \{1,2\}\mathbin{\dot\cup}\{3,4\}.
\]
We obtain a super-canonical allocation $P$ of $M_{01}$ in which one agent from
$\{1,2\}$ and one agent from $\{3,4\}$ are short.  Relabeling inside the two
pairs if necessary, assume that agents $2$ and $4$ are short and agents $1$
and $3$ are long.  Allocate all three type-$(1,2)$ items to agent $2$, and all
three type-$(3,4)$ items to agent $4$.

We verify EFX.  Agents $2$ and $4$ only receive items that are small for
themselves.  For the comparison from a short agent to a long agent,
super-canonicity gives, for every short agent $i$ and long agent $j$,
\[
    c_i(P_i)\le c_i(P_j)-r.
\]
Since each of agents $2$ and $4$ receives three small items from $M_2$, after
removing one such item the remaining additional cost is $2$.  As $r>2$, we
have
\[
    c_i(P_i)+2 < c_i(P_j)
\]
for every short agent $i$ and every long agent $j$.  Hence, no short agent
strongly envies a long agent.

The comparison between the two short agents is also safe.  For instance, from
agent $2$'s perspective, agent $4$ receives three type-$(3,4)$ items, all of
which are large for agent $2$.  Therefore,
\[
    c_2(P_2)+2
    \le c_2(P_4)+3r
    = c_2(P_4\cup B_4).
\]

The comparison from agent $4$ to agent $2$ is
symmetric.  Finally, the two long agents receive no item from $M_2$, so their
own bundles are unchanged while all comparison bundles become weakly more
costly.  Thus, the constructed allocation is EFX.

Now assume $a=0$.  Then $|M_{01}|=2$.  If there exists a super-canonical
prefix in which exactly one of agents $1,2$ and exactly one of agents $3,4$
are long, the above argument applies without change.  The only remaining
case is when the two items in $M_{01}$ are small for agents $1$ and $2$,
respectively; the symmetric case where they are small for agents $3$ and $4$
is identical.

In this hard case, choose the prefix in which agents $1$ and $2$ are long.
Thus, each of agents $1$ and $2$ receives one small $M_{01}$ item, while agents
$3$ and $4$ receive no $M_{01}$ item.  Allocate two type-$(1,2)$ items to
agent $1$ and one type-$(1,2)$ item to agent $2$; allocate two type-$(3,4)$
items to agent $3$ and one type-$(3,4)$ item to agent $4$.

Agent $1$ has total cost $3$ and
EFX threshold $2$; agent $2$ has total cost $2$ and EFX threshold $1$; agents
$3$ and $4$ have thresholds at most $1$.  Since every other nonempty bundle
has cost at least $1$, and since any bundle of type $(3,4)$ has cost at least
$r>2$ from the perspective of agents $1,2$, all EFX inequalities hold.  Hence,
this corner case is also resolved.

\subsubsection{All edge types have multiplicity at most one}

Now assume that $M_2$ is a simple graph.  Choose a super-canonical
allocation $P$ of $M_{01}$, whose existence is guaranteed by
Lemma~\ref{lem:canonical}(b).  Relabeling if necessary, suppose agents $1$
and $2$ are short and agents $3$ and $4$ are long:
\[
    |P_1|=|P_2|=a,
    \qquad
    |P_3|=|P_4|=a+1.
\]

We allocate all items in $M_2$ directly.  In each of the following cases, the
allocation is chosen so that the possible envy from long agents to short agents
is compensated by a large item placed on the corresponding short bundle, while
possible envy from short agents to long agents is compensated by
super-canonicity.

We repeatedly use the following two facts.  First, for every short agent
$i\in\{1,2\}$ and every long agent $j\in\{3,4\}$, super-canonicity gives
\begin{equation}\label{eqn:b2-super}
    c_i(P_i)\le c_i(P_j)-r.
\end{equation}
Second, for every long agent $j\in\{3,4\}$ and every short agent
$i\in\{1,2\}$, canonicity gives
\begin{equation}\label{eqn:b2-long-short}
    c_j(P_j)-\min_{g\in P_j}c_j(g)\le c_j(P_i).
\end{equation}
Thus, if the short bundle of $i$ receives an item that is large for $j$, then
\[
    c_j(P_j)\le c_j(P_i)+r\le c_j(P_i\cup B_i),
\]
where $B_i$ denotes the $M_2$ items assigned to agent $i$.

\paragraph{B.3.2(a) $\boldsymbol{x_{13}+x_{14}>0}$ and $\boldsymbol{x_{23}+x_{24}>0}$.}

Let agent $1$ take all existing items among $(1,3)$ and $(1,4)$, and let
agent $2$ take all existing items among $(2,3)$ and $(2,4)$.  These items are
small for the short agents who receive them.

\subparagraph{- Both $\boldsymbol{(1,2)}$ and $\boldsymbol{(3,4)}$ exist} Give the item $(1,2)$ to whichever of
agents $1,2$ currently receives fewer items, breaking ties arbitrarily.  For
the item $(3,4)$, choose an agent in $\{3,4\}$ who regards at least one item
assigned to agent $1$ and at least one item assigned to agent $2$ as large,
and give $(3,4)$ to this agent.  Such an agent exists by the construction:
if the item $(1,2)$ is assigned to agent $1$, then both agents $3,4$ regard
it as large, while agent $2$ receives at least one of $(2,3)$ and $(2,4)$,
which is large for at least one of agents $4$ and $3$, respectively.  The
other cases are symmetric.

We verify EFX in this subcase.  First, consider a short agent
$i\in\{1,2\}$.  Agent $i$ only receives items that are small for herself, and
after removing one such item, the remaining number of $M_2$ items in her
bundle is at most $2$.  Hence, for every long agent $j\in\{3,4\}$,
super-canonicity gives
\[
    c_i(X_i)-1
    \le c_i(P_i)+2
    < c_i(P_i)+r
    \le c_i(P_j)
    \le c_i(X_j),
\]
where we use $r>2$.  Thus, no short agent strongly envies a long agent. Moreover, obviously, the comparison between the two short agents is also safe.

Now consider a long agent.  If a long agent receives no item from $M_2$, then
her own bundle is unchanged, and her EFX inequalities follow directly from
the canonical EFX property of the prefix.  The only long agent whose threshold
may increase is the one receiving the item $(3,4)$.  By our choice of this
agent, say agent $j$, each short bundle contains an item that is large for
$j$.  Hence, for every short agent $i\in\{1,2\}$,
\[
    c_j(P_j\cup\{(3,4)\})-1
    =c_j(P_j)
    \le c_j(P_i)+r
    \le c_j(X_i).
\]

The comparison between agents $3$ and $4$ follows from the equal-quota
canonical comparison and the fact that at most one of them receives a small
item $(3,4)$.  Therefore, the final allocation is EFX in this subcase.

\subparagraph{- $\boldsymbol{(1,2)}$ exists and $\boldsymbol{(3,4)}$ does not}The item $(1,2)$ can be assigned to
either agent $1$ or agent $2$.

We verify EFX in this subcase.  Both short agents receive only items that are
small for themselves.  For each short agent $i\in\{1,2\}$ and each long agent
$j\in\{3,4\}$, after removing one item from $X_i\cap M_2$, the remaining
additional cost is at most $2$.  Therefore,
\[
    c_i(X_i)-1
    \le c_i(P_i)+2
    < c_i(P_i)+r
    \le c_i(P_j)
    \le c_i(X_j),
\]
so no short agent strongly envies a long agent.

For the comparison between agents $1$ and $2$, note that every cross edge
assigned to one of them is large for the other short agent.  Hence, even if
one short agent receives more $M_2$ items than the other, the other short
bundle is sufficiently costly from her perspective.  More explicitly, for
$\{i,k\}=\{1,2\}$,
\[
    c_i(X_i)-1
    \le c_i(P_i) + 2 < c_i(P_k) + r \le c_i(X_k),
\]

Finally, agents $3$ and $4$ receive no item from $M_2$, so their own thresholds do not increase; all their EFX inequalities follow from the prefix allocation.  Thus,
the final allocation is EFX in this subcase.

\subparagraph{- $\boldsymbol{(1,2)}$ does not exist and $\boldsymbol{(3,4)}$ exists} Then, when all four cross edges
\[
    (1,3),(1,4),(2,3),(2,4)
\]
exist, assign $(3,4)$ to either agent $3$ or agent $4$.

We verify EFX in this subcase.  Each short agent receives exactly two small
items.  Thus, for each short agent $i\in\{1,2\}$ and each long agent
$j\in\{3,4\}$,
\[
    c_i(X_i)-1
    \le c_i(P_i)+1
    < c_i(P_i)+r
    \le c_i(P_j)
    \le c_i(X_j).
\]

The two short agents do not strongly envy each other, because after removing
one of her two small items, each short agent has one remaining $M_2$ item,
while the other short agent has two $M_2$ items, each of cost at least $1$.

For long agents, suppose without loss of generality that agent $3$ receives
$(3,4)$.  Then agent $3$ regards $(1,4)$ and $(2,4)$ as large.  Since both
items are assigned to the short side, for each short agent $i\in\{1,2\}$,
the bundle $X_i$ contains an item that is large for agent $3$.  Hence,
\[
    c_3(X_3)-1
    =c_3(P_3)
    \le c_3(P_i)+r
    \le c_3(X_i).
\]

Agent $4$ receives no item from $M_2$, so her own threshold does not increase
and her EFX inequalities follow from the canonical prefix.  The case where
agent $4$ receives $(3,4)$ is symmetric.  Therefore, the final allocation is
EFX in this subcase.

Otherwise, at least one short agent $i\in\{1,2\}$ receives at most one cross
edge; assign $(3,4)$ to such a short agent.

We verify EFX in this subcase.  Let $i$ be the short agent who receives
$(3,4)$, and let $k$ be the other short agent.  By the assumption of this
subcase, agent $i$ receives at most one cross edge in addition to $(3,4)$.
Thus, from agent $i$'s own perspective, after removing one item from her
final bundle, the additional $M_2$ cost is at most $r$.  Therefore, for every
long agent $j\in\{3,4\}$,
\[
    c_i(X_i)-\min_{g\in X_i}c_i(g)
    \le c_i(P_i)+r
    \le c_i(P_j)
    \le c_i(X_j),
\]
where the second inequality follows from super-canonicity.  The other short
agent $k$ receives only small items, at most two of them, so the same
super-canonicity argument gives
\[
    c_k(X_k)-1
    \le c_k(P_k)+1
    < c_k(P_k)+r
    \le c_k(P_j)
    \le c_k(X_j).
\]

The comparison between the two short agents is also safe.  Agent $i$ receives
the type-$(3,4)$ item, which is large for agent $k$, and any cross edge
assigned to $i$ is also large for agent $k$ unless it is incident to $k$,
which it is not.  Conversely, every cross edge assigned to $k$ is large for
agent $i$.  Hence, each short agent sees the other short agent's $M_2$ bundle
as sufficiently costly to cover the remaining $M_2$ cost after one item is
removed.  Formally,
\[
    c_i(X_i)-\min_{g\in X_i}c_i(g)
    \le c_i(P_i)+r
    \le c_i(P_k)+c_i(X_k\cap M_2)
    =c_i(X_k),
\]
and the symmetric inequality for agent $k$ is analogous.

Finally, agents $3$ and $4$ receive no item from $M_2$, so their own
thresholds do not increase.  Since all comparison bundles only become more
costly, their EFX inequalities follow from the canonical prefix allocation.
Therefore, the final allocation is EFX in this subcase.

\paragraph{B.3.2(b) $\boldsymbol{x_{13}+x_{14}>0}$ and $\boldsymbol{x_{23}=x_{24}=0}$.}

We distinguish two subcases.

First, suppose
\[
    x_{13}+x_{14}=2,
\]
so both $(1,3)$ and $(1,4)$ exist.  If the item $(1,2)$ exists, let agent $1$
take $(1,3)$ and $(1,4)$, let agent $2$ take $(1,2)$, and assign $(3,4)$, if
it exists, to either agent $3$ or agent $4$.  If $(1,2)$ does not exist but
$(3,4)$ exists, let agent $1$ take $(1,3)$ and $(1,4)$, and let agent $2$
take $(3,4)$.  If neither $(1,2)$ nor $(3,4)$ exists, let agent $1$ take
$(1,3)$ and let agent $2$ take $(1,4)$.

In all three allocations, agent $1$ receives only small items.  Agent $2$
receives either a small item $(1,2)$, a large item $(3,4)$, or a large item
among $(1,3),(1,4)$.

We now verify EFX carefully.

For agent $1$, after removing one of her at most two $M_2$ items, we have
\[
c_1(X_1 \setminus \{g\}) \le c_1(P_1) + 1.
\]
Since $r>2$ and by \eqref{eqn:b2-super},
\[
c_1(P_1)+1 < c_1(P_1)+r \le c_1(P_j), \quad \forall j\in\{3,4\}.
\]

In addition, agent $2$ takes at least one item in $M_2$, so 
\[
c_1(P_1)+1 \le c_1(P_2)+1 \le c_1(X_2);
\]
therefore, agent $1$ does not strongly envy any other agents.

For agent $2$, since she receives at most one large item,
\[
c_2(X_2) - 1 \le c_2(P_2) + r - 1 \le c_2(P_j),\quad \forall j\in\{3,4\}.
\]

And again by reason that $X_1$ must contain at least one item which is large for agnet $2$,
\[
c_2(X_2) \le c_2(X_1),
\]
Thus, agent $2$ does not strongly envy any other agent.

Now consider the long agents. We find that except for the first case, agents $3,4$ do not receive any more items in the other cases. Therefore, we just focus on the first case when agent $1$ takes $(1,3)$ and $(1,4)$, agent $2$ takes $(1,2)$, and agent $3$ takes $(3,4)$ without loss of generality. 
\[
c_3(X_3) - 1 = c_3(P_3) \le c_3(P_i) + r \le c_3(X_i), \quad \forall i\in\{1,2\}.
\]

This completes the EFX verification for this subcase.

Next suppose
\[
    x_{13}+x_{14}=1.
\]

Let agent $1$ take the unique item among $(1,3)$ and $(1,4)$.  If at most one
of $(1,2)$ and $(3,4)$ exists, give this item, if it exists, to agent $2$.
If both $(1,2)$ and $(3,4)$ exist, give $(1,2)$ to agent $2$; for the item
$(3,4)$, give it to agent $3$ if $x_{14}=1$, and give it to agent $4$ if
$x_{13}=1$.

The reason for this last choice is that the long agent receiving $(3,4)$ is
the one who regards agent $1$'s unique cross edge as large.  For example, if
$x_{14}=1$, then agent $3$ regards the item $(1,4)$ assigned to agent $1$ as
large, and hence
\[
    c_3(P_3)\le c_3(P_1)+r\le c_3(P_1\cup A_1).
\]
Agent $3$ also receives the small item $(3,4)$, so her EFX threshold does not
increase beyond $c_3(P_3)$.  The case $x_{13}=1$ is symmetric.  The remaining
comparisons are handled by \eqref{eqn:b2-super} and by the equal-quota
canonical comparisons.

\paragraph{B.3.2(c) $\boldsymbol{x_{23}+x_{24}>0}$ and $\boldsymbol{x_{13}=x_{14}=0}$.}

This case is symmetric to Case 2, with the roles of agents $1$ and $2$
interchanged.  We use the same allocation rule and the same verification.

\paragraph{B.3.2(d) $\boldsymbol{x_{13}=x_{14}=x_{23}=x_{24}=0}$.}

Only items of type $(1,2)$ and $(3,4)$ may exist.  Since the graph is simple, there is at most one item of each type.

If the item $(3,4)$ exists, allocate it to a short agent whose prefix bundle
contains no small item if such a short agent exists.  If both short agents
contain small prefix items, allocate $(3,4)$ to either short agent.
Skip these if the $(3,4)$-item does not exist.
If the item $(1,2)$ also exists, allocate it to the other short agent (in the case the $(3,4)$-item does not exist, this $(1,2)$-item can be given to either short agent).

Let us verify EFX.  Suppose first that the short agent receiving $(3,4)$,
say agent $1$, has no small prefix item.  Then all items in her final bundle
are large for her.  Thus, after removing
the large item $(3,4)$ we can get
\[
    \tau_1(X)\le c_1(P_1) \le c_1(P_h) \le c_1(X_h), \quad \forall h \in \{2,3,4\}.
\]
Hence, agent $1$ is EFX. The EFX property for the remaining agents is trivial. 

Now suppose both short agents have small prefix items. Hence,
\[
    c_1(P_2)\ge c_1(P_1)+r-1,
\]
and symmetrically
\[
    c_2(P_1)\ge c_2(P_2)+r-1.
\]
Suppose the agent who is assigned item $(3,4)$ is $1$ without loss of generality., so 
\[
    c_1(X_1) - 1 =  c_1(P_1)+r-1 \le c_1(X_h), \quad \forall h \in \{2,3,4\},
\]
where the inequality holds because both short agents have small prefix items and the prefix allocation is super-canonical. Also, the EFX for the other agents is obvious.

This completes the proof for the case $b=2$.

\subsection{Case $\boldsymbol{b=3}$}

We now consider the case $|M_{01}|=4a+3$.  In a canonical allocation of
$M_{01}$, exactly one agent receives $a$ items and the remaining three agents
receive $a+1$ items.  Recall that the former agent is the short agent and the latter agents are long agents.

We discuss three subcases according to the structure of $M_2$.

\subsubsection{There exist two intersecting edge types}

Assume without loss of generality that $M_2$ contains items of types $(1,2)$
and $(1,3)$.

Consider a canonical allocation $P$ of $M_{01}$ in which agent $1$ is the
short agent.
\paragraph{B.4.1(a) The prefix with agent $\boldsymbol{1}$ short is super-canonical.}
Suppose first that $P$ is super-canonical.  We use one type-$(1,2)$ item and
one type-$(1,3)$ item for gap filling, both assigned to agent $1$.  Let $P'$
denote the resulting partial allocation.  We verify that $P'$ is envy-free.

For agent $1$, both gap-filling items are small.  Thus,
\[
    c_1(P'_1)=c_1(P_1)+2.
\]

Since $P$ is super-canonical and agent $1$ is short, for every
$j\in\{2,3,4\}$ we have
\[
    c_1(P_1)\le c_1(P_j)-r.
\]

As $r>2$, it follows that
\[
    c_1(P'_1)=c_1(P_1)+2 < c_1(P_1)+r \le c_1(P_j)=c_1(P'_j).
\]

Hence, agent $1$ does not envy any long agent.

Now consider a long agent $j\in\{2,3,4\}$.  Her own bundle is unchanged in
the gap-filling step.  By the canonical EFX property of $P$,
\[
    c_j(P_j)-\min_{g\in P_j}c_j(g)\le c_j(P_1).
\]

Since $\min_{g\in P_j}c_j(g)\le r$, we have
\[
    c_j(P_j)\le c_j(P_1)+r.
\]

Moreover, among the two gap-filling items assigned to agent $1$, at least one
is large for agent $j$: agent $2$ regards the type-$(1,3)$ item as large,
agent $3$ regards the type-$(1,2)$ item as large, and agent $4$ regards both
items as large.  Therefore,
\[
    c_j(P'_1)\ge c_j(P_1)+r\ge c_j(P_j)=c_j(P'_j).
\]

Thus, no long agent envies agent $1$.  Comparisons among long agents are
unchanged from the canonical prefix and have equal quota $a+1$, so they are
envy-free by the equal-quota part of Lemma~\ref{lem:canonical}(a).  Therefore,
$P'$ is envy-free.

Let $M_2^-$ be the remaining multiset of $M_2$ items.  If $M_2^-$ is not
exceptional, allocate $M_2^-$ according to Lemma~\ref{lem:M2properties}(i).
Since $P'$ is envy-free and all residual envy is certified by removing a small
item, Lemma~\ref{lem:composition} implies that the combined allocation is EFX.

Suppose now that $M_2^-$ is exceptional.  If the exceptional pair avoids
agent $1$, then both exceptional agents are long agents in the prefix $P$.
Hence, Lemma~\ref{lem:easycombine} applies directly, and the combined allocation
is EFX.

It remains to consider the case where the exceptional pair is $(1,i)$ for
some $i\in\{2,3,4\}$.  In the exceptional residual allocation, choose agent
$i$ to be the disadvantageous agent.  We only need to verify the possible
problematic comparison from agent $i$ to agent $1$.

By Lemma~\ref{lem:M2properties}(ii), the residual comparison is safe after
removing a large item:
\[
    c_i(B_i)-r\le c_i(B_1).
\]

If $P_i$ contains no small item for agent $i$, then the final bundle of agent
$i$ contains only large items in the relevant exceptional comparison.  Using
the canonical EFX inequality for the prefix,
\[
    c_i(P_i)-r\le c_i(P_1),
\]
we get
\[
\begin{aligned}
    \tau_i(X)
    &\le c_i(P_i)+c_i(B_i)-r  \\
    &\le c_i(P_1)+c_i(B_i) \\
    &\le c_i(P_1)+c_i(B_1)+r
    \le c_i(P'_1)+c_i(B_1)=c_i(X_1),
\end{aligned}
\]
where the last inequality uses the fact that at least one of the two
gap-filling items assigned to agent $1$ is large for agent $i$.

If $P_i$ contains a small item for agent $i$, then
\[
    c_i(P_i)-1\le c_i(P_1).
\]
Moreover, among the type-$(1,2)$ and type-$(1,3)$ gap-filling items assigned
to agent $1$, one is large and the other has cost at least $1$ for agent $i$.
Therefore,
\[
    c_i(P'_1)\ge c_i(P_1)+r+1.
\]
Using again $c_i(B_i)-r\le c_i(B_1)$, we obtain
\[
\begin{aligned}
    \tau_i(X)
    &=c_i(P_i)+c_i(B_i)-1 \\
    &\le c_i(P_1)+c_i(B_i) \\
    &\le c_i(P_1)+c_i(B_1)+r \\
    &\le c_i(P'_1)+c_i(B_1)
    =c_i(X_1).
\end{aligned}
\]
Thus, agent $i$ does not strongly envy agent $1$.  All other comparisons are
handled by Lemma~\ref{lem:composition} or Lemma~\ref{lem:easycombine}, and the
combined allocation is EFX.

\paragraph{B.4.1(b) The prefix with agent $\boldsymbol{1}$ short is not super-canonical.}
Now suppose that the canonical allocation with agent $1$ short is not
super-canonical.  Among the two intersecting edge types $(1,2)$ and $(1,3)$,
choose one with weakly larger multiplicity.  Without loss of generality assume
\[
    x_{12}\ge x_{13}.
\]

We now make agent $2$, rather than agent $1$, the short agent in the
$M_{01}$ prefix.  Use one type-$(1,2)$ item for gap filling, assigned to
agent $2$.  Let $P'$ denote the resulting partial allocation.

We first explain why $P'$ is envy-free.  The item assigned to agent $2$ is
small for agent $2$ and large for agents $3$ and $4$.  Thus, agents $3$ and
$4$ do not envy agent $2$ after the gap filling.  Agent $2$ receives one small
item, so her cost increases by only $1$, while the canonical prefix with
agent $2$ short gives the required one-item balance against the long agents.
Finally, because the original canonical allocation with agent $1$ short was
not super-canonical, agent $1$'s new long bundle consists only of items that
are small for agent $1$.  Therefore, agent $1$ does not envy the short bundle
of agent $2$.  Hence, $P'$ is envy-free.

Let $M_2^-$ be the residual multiset.  Since we removed one type-$(1,2)$ item
and did not remove any type-$(1,3)$ item, the residual still contains at least
one type-$(1,3)$ item.

If $M_2^-$ is not exceptional, allocate it using
Lemma~\ref{lem:M2properties}(i).  Since $P'$ is envy-free, Lemma~\ref{lem:composition}
gives an EFX allocation.

Suppose $M_2^-$ is exceptional.  Since $x_{12}\ge x_{13}$ before the removal
and $M_2^-$ still contains a type-$(1,3)$ item, the only possible exceptional
form is that $M_2^-$ contains exactly one type-$(1,3)$ item and many
type-$(2,4)$ items.  Thus, the exceptional pair is $(1,3)$.

Choose agent $3$ as the disadvantageous agent.  To verify that the allocation is EFX, we only need to verify the
possible problematic comparison from agent $3$ to agent $1$.

By Lemma~\ref{lem:M2properties}(ii),
\[
    c_3(B_3)-r\le c_3(B_1).
\]

If the current prefix bundle of agent $3$ contains no small item for agent
$3$, then the large-removal comparison is compatible with the final EFX
threshold, therefore
\[
\begin{aligned}
    \tau_3(X)
    &\le c_3(P_3)+c_3(B_3)-r \\
    &= c_3(P_1)+c_3(B_3)-r \\
    &\le c_3(P_1)+c_3(B_1)
    = c_3(X_1).
\end{aligned}
\]

If the prefix bundle of agent $3$ contains a small item, then agent $1$'s
long prefix bundle, which consists only of items small for agent $1$, is large
for agent $3$ item by item.  Hence, agent $3$ has a prefix advantage of at
least $r-1$ toward agent $1$:
\[
    c_3(P_1)\ge c_3(P_3)+r-1.
\]
Thus,
\[
\begin{aligned}
    \tau_3(X)
    &=c_3(P_3)+c_3(B_3)-1 \\
    &\le c_3(P_1)-(r-1)+c_3(B_3)-1 \\
    &=c_3(P_1)+c_3(B_3)-r \\
    &\le c_3(P_1)+c_3(B_1)
    =c_3(X_1).
\end{aligned}
\]
Therefore, agent $3$ does not strongly envy agent $1$, and the combined
allocation is EFX.

\subsubsection{No two edges intersect, and there are two edge types}

Assume without loss of generality that the two edge types are $(1,2)$ and
$(3,4)$, and that
\[
    x_{34}\ge x_{12}.
\]
Consider all super-canonical allocations of $M_{01}$.

\paragraph{B.4.2(a) Some super-canonical allocation has agent $\boldsymbol{1}$ or $\boldsymbol{2}$ short.}
Suppose first that there exists a super-canonical allocation $P$ in which one
of agents $1,2$ is short.  Without loss of generality, assume agent $1$ is
short.  We use one type-$(1,2)$ item and one type-$(3,4)$ item for
gap filling, both assigned to agent $1$.  Let $P'$ denote the resulting
partial allocation.

We first record the relevant inequalities.  Since $P$ is super-canonical and
agent $1$ is short, for every long agent $j\in\{2,3,4\}$,
\[
    c_1(P_1)\le c_1(P_j)-r.
\]

From agent $1$'s perspective, the type-$(1,2)$ item is small and the
type-$(3,4)$ item is large.  Hence,
\[
    \tau_1(P')=c_1(P_1)+r\le c_1(P_j)=c_1(P'_j)
    \qquad\forall j\in\{2,3,4\}.
\]

Thus, agent $1$ may envy others in the gap-filled prefix, but only up to a
small item.

For every agent $j\in\{2,3,4\}$, at least one of the two gap-filling items
assigned to agent $1$ is large for agent $j$: agent $2$ regards the
type-$(3,4)$ item as large, while agents $3$ and $4$ regard the type-$(1,2)$
item as large.  Hence, the same argument as above shows that none of agents
$2,3,4$ envies agent $1$.  Comparisons among agents $2,3,4$ are unchanged
from the equal-quota canonical prefix.  

Let $M_2^-$ be the residual multiset.  We apply Lemma~\ref{lem:M2properties}(iii)
to allocate $M_2^-$ so that agent $1$ receives a residual-favorite bundle:
\[
    c_1(B_1)\le c_1(B_j)\qquad\forall j.
\]

If $M_2^-$ is not exceptional, then all other residual envy is certified by
removing a small item.  Combining the displayed inequality with
\[
    \tau_1(P')\le c_1(P'_j)
\]
shows that agent $1$ remains EFX after concatenation, while all other agents
are handled by Lemma~\ref{lem:composition}.  Hence, the combined allocation is
EFX.

Suppose $M_2^-$ is exceptional.  Since the residual allocation was chosen so
that agent $1$ is residual-favorite, the only possible exceptional pair is
$(1,2)$, and agent $2$ is chosen as the disadvantageous agent.  We only need
to check that agent $2$ does not strongly envy agent $1$ in the final
allocation.

If $P_2$ contains no small item for agent $2$, then the final comparison is
compatible with removing a large item, and the exceptional residual inequality
\[
    c_2(B_2)-r\le c_2(B_1)
\]
together with the canonical prefix inequality gives EFX.

If $P_2$ contains a small item for agent $2$, then
\[
    c_2(P_2)-1\le c_2(P_1).
\]

Moreover, among the two gap-filling items assigned to agent $1$, the
type-$(1,2)$ item is small for agent $2$, while the type-$(3,4)$ item is large
for agent $2$.  Therefore,
\[
    c_2(P'_1)=c_2(P_1)+r+1.
\]

Using $c_2(B_2)-r\le c_2(B_1)$, we get
\[
\begin{aligned}
    \tau_2(X)
    &=c_2(P_2)+c_2(B_2)-1 \\
    &\le c_2(P_1)+c_2(B_2) \\
    &\le c_2(P_1)+c_2(B_1)+r \\
    &\le c_2(P'_1)+c_2(B_1)
    =c_2(X_1).
\end{aligned}
\]
Thus, agent $2$ does not strongly envy agent $1$, and the combined allocation
is EFX.

\paragraph{B.4.2(b) Agents $\boldsymbol{1}$ and $\boldsymbol{2}$ are long in every super-canonical allocation.}
Now suppose both agents $1$ and $2$ are long in every super-canonical
allocation.  Choose any super-canonical allocation $P$.  Then the unique short
agent belongs to $\{3,4\}$; assume without loss of generality that agent $4$
is short.  Assign one type-$(1,2)$ item to agent $4$ for gap filling, and let
$P'$ be the resulting partial allocation.

We first verify that $P'$ is envy-free.  Since agents $1$ and $2$ are long in
every super-canonical allocation, neither of them can be chosen as the short
agent in the construction of Lemma~\ref{lem:canonical}(b).  Equivalently,
agents $1$ and $2$ are heavy.  Thus, in the present canonical allocation,
their long bundles contain only items that are small for themselves.  In
particular,
\[
    c_i(P_i)=a+1
    \qquad\text{for } i=1,2.
\]

Let $h$ be the type-$(1,2)$ item assigned to agent $4$.  Since $h$ is small
for agents $1$ and $2$, we have
\[
    c_i(P'_4)=c_i(P_4)+1\ge a+1=c_i(P_i)
    \qquad\text{for } i=1,2.
\]

Hence, neither agent $1$ nor agent $2$ envies agent $4$.

Next consider agent $3$.  The item $h$ is large for agent $3$.  By the
canonical EFX property of $P$,
\[
    c_3(P_3)-\min_{g\in P_3}c_3(g)\le c_3(P_4).
\]

Since $\min_{g\in P_3}c_3(g)\le r$, we obtain
\[
    c_3(P_3)\le c_3(P_4)+r=c_3(P'_4).
\]

Thus, agent $3$ does not envy agent $4$.

Now consider the short agent $4$.  Since $P$ is super-canonical and agent $4$
is short, for every long agent $j\in\{1,2,3\}$,
\[
    c_4(P_4)\le c_4(P_j)-r.
\]

The gap-filling item $h$ is large for agent $4$, so
\[
    c_4(P'_4)=c_4(P_4)+r\le c_4(P_j)=c_4(P'_j)
    \qquad\forall j\in\{1,2,3\}.
\]

Hence, agent $4$ does not envy any long agent.  Finally, comparisons among
agents $1,2,3$ are unchanged, and these agents have the same prefix quota
$a+1$, so the equal-quota canonical argument gives envy-freeness among them.
Therefore, $P'$ is envy-free.

Let $M_2^-$ be the residual multiset after removing the gap-filling item
$h$.  If $M_2^-$ is not exceptional, allocate it using
Lemma~\ref{lem:M2properties}(i).  Since $P'$ is envy-free and all residual
envy is certified by removing a small item, Lemma~\ref{lem:composition}
implies that the combined allocation is EFX.

It remains to consider the case where $M_2^-$ is exceptional.  Since before
the gap-filling step we assumed
\[
    x_{34}\ge x_{12},
\]
and we removed one type-$(1,2)$ item, the residual multiplicities satisfy
\[
    x_{34}>x_{12}^-,
\]
where $x_{12}^-=x_{12}-1$.  Thus, if $M_2^-$ is exceptional, the dominant
edge type must be $(3,4)$, and the exceptional agents are $\{1,2\}$.

Both exceptional agents $1$ and $2$ are long agents in the prefix.  Therefore,
we may apply Lemma~\ref{lem:easycombine} to the envy-free prefix $P'$ and the
exceptional residual allocation of $M_2^-$.  Hence, the combined allocation is
EFX.

\subsubsection{Only one edge type exists}

Assume without loss of generality that all items of $M_2$ have type $(1,2)$. Again consider all super-canonical allocations of $M_{01}$.

\paragraph{B.4.3(a) Some super-canonical allocation has agent $\boldsymbol{1}$ or $\boldsymbol{2}$ short.}
Suppose there exists a super-canonical allocation $P$ in which one of agents
$1,2$ is short.  Without loss of generality, assume agent $1$ is short.

If $|M_2|\le r$, assign all items of $M_2$ to agent $1$.  Since all these
items are small for agent $1$, for every long agent $j$,
\[
    \tau_1(X)\le c_1(P_1)+|M_2|-1
    < c_1(P_1)+r
    \le c_1(P_j)\le c_1(X_j).
\]

Thus, agent $1$ does not strongly envy any long agent.  Every long agent sees
agent $1$'s added items as either small or large; in particular, the comparison
bundle of agent $1$ only becomes more costly.  Since the prefix was
super-canonical, the resulting allocation is EFX.

Now assume $|M_2|>r$.  Assign $\lfloor r\rfloor$ type-$(1,2)$ items to agent
$1$ as gap-filling items.  Let $M_2^-$ be the remaining type-$(1,2)$ items.
We allocate $M_2^-$ in round-robin order.  If $P_3$ contains no small item for
agent $3$, use the order
\[
    1,2,3,4.
\]
Otherwise, use the order
\[
    1,2,4,3.
\]

The first $\lfloor r\rfloor$ gap-filling items ensure that agent $1$ is
protected against all long agents by super-canonicity:
\[
    \tau_1(P')\le c_1(P_1)+\lfloor r\rfloor-1
    < c_1(P_1)+r
    \le c_1(P_j)
    \qquad\forall j\in\{2,3,4\}.
\]

The round-robin allocation of the residual type-$(1,2)$ items makes the
residual bundle sizes differ by at most one.  Agents $1$ and $2$ receive
small residual items, while agents $3$ and $4$ receive large residual items.

If the residual size is not congruent to $3$ modulo $4$, the round-robin
residual allocation is non-exceptional, and Lemma~\ref{lem:composition}
applies straightly because the non-exceptional round-robin allocation must be EFX (notice that proving agent $2$ does not strongly envy agent $1$ requires the second part of Lemma~\ref{lem:composition}: we have $c_2(P_2)-c_2(P_1)\leq r-\lfloor r\rfloor\leq1$ after the gap-filling and $c_2(B_1)-c_2(B_2)\geq0$ due to the round-robin order).  If the residual size is congruent to $3$ modulo $4$, exactly one of
agents $3,4$ is the disadvantageous agent in the residual comparison.  The
chosen order determines which of agents $3,4$ this is.

If $P_3$ contains no small item for agent $3$, we use the order $1,2,3,4$,
so agent $3$ is allowed to be the disadvantageous agent; her final bundle then
contains only large items in the relevant comparison, so large-item removal is
compatible with EFX.  If $P_3$ contains a small item, we use the order
$1,2,4,3$, so agent $4$ is the disadvantageous agent. Now if $P_4$ contains no small item for agent $4$, the overall allocation is EFX as we just stated for agent $3$ and $P_3$. Otherwise, if $P_4$ contains at least one small item for agent $4$, and we know the bundle $P_3$ contains an item that is large for agent $4$ (since $P_3$ contains an item in $M_{01}$ that is small for agent $3$, this item must be large for agent $4$ by the property of $M_{01}$). Hence,
\[
    c_4(P_3)\ge c_4(P_4)+r-1.
\]
This $r-1$ prefix slack compensates exactly for the difference between
large-item removal in the residual exceptional comparison and small-item
removal in the final EFX threshold.  Thus, the same calculation as in
Lemma~\ref{lem:easycombine} shows that the combined allocation is EFX.

\paragraph{B.4.3(b) Agents $\boldsymbol{1}$ and $\boldsymbol{2}$ are long in every super-canonical allocation.}
Finally, suppose both agents $1$ and $2$ are long in every super-canonical
allocation.  Choose any super-canonical allocation $P$.  Then the unique short
agent lies in $\{3,4\}$.  Assume without loss of generality that agent $4$ is
short.

Use one type-$(1,2)$ item as a gap-filling item for agent $4$.  This item is
large for agent $4$ and also large for agent $3$, while it is small for
agents $1$ and $2$.  By super-canonicity, the short agent remains protected
against all long agents, and the long agents do not envy the short agent after
the gap filling. The proof here is similar to \textbf{B.4.2(b)}. Thus, the current partial allocation is envy-free.

Let $M_2^-$ be the residual multiset.  If $M_2^-$ is not exceptional, which
is exactly the case $|M_2^-|\not\equiv 3\pmod 4$, allocate it using
Lemma~\ref{lem:M2properties}(i) and apply Lemma~\ref{lem:composition}.

If $M_2^-$ is exceptional, then the only possible exceptional pair is
$(3,4)$.  Choose agent $3$ as the disadvantageous agent.  We verify the only
potentially problematic comparison from agent $3$ to agent $4$.

If $P_3$ contains no small item for agent $3$, then agent $3$'s final bundle
contains only large items in the relevant comparison, and the large-item
removal from the exceptional residual allocation is valid in the combined
allocation.

If $P_3$ contains a small item for agent $3$, then the gap-filling item
assigned to agent $4$ is of type $(1,2)$, which is large for agent $3$.
Hence,
\[
    c_3(P'_4)\ge c_3(P_4)+r.
\]

Moreover, the prefix EFX inequality gives
\[
    c_3(P_3)-1\le c_3(P_4).
\]

The exceptional residual comparison gives
\[
    c_3(B_3)-r\le c_3(B_4).
\]

Therefore,
\[
\begin{aligned}
    \tau_3(X)
    &=c_3(P_3)+c_3(B_3)-1 \\
    &\le c_3(P_4)+c_3(B_3) \\
    &\le c_3(P_4)+c_3(B_4)+r \\
    &\le c_3(P'_4)+c_3(B_4)
    =c_3(X_4).
\end{aligned}
\]

Thus, agent $3$ does not strongly envy agent $4$.  All other comparisons are
handled by Lemma~\ref{lem:composition}.  Hence, the combined allocation is EFX.

This completes the proof for the case $b=3$.

\end{document}